\newcommand{\mG}{\mathcal{G}}
\newcommand{\vect}[1]{\operatorname{vec}(#1)}
\newcommand{\nucl}[2]{{\nu_{#1, #2}}}
\newcommand{\nuclst}{{\nu_{\mG_1, \mG_2}}}
\newcommand{\diffXclst}{\mathbf{X}^T\nu}
\newcommand{\diffxclst}{\mathbf{x}^T\nu}
\newcommand{\spanned}{\operatorname{span}}
\newcommand{\setC}[1]{{\mathcal{C}_{#1}}}
\newcommand{\mP}{\mathcal{P}}
\newcommand{\nuspace}{\Lambda}
\newcommand{\bPiV}{\bPi_{\nuspace}}
\newcommand{\bPiVperp}{\bPi_{\nuspace^\perp}}
\newcommand{\R}{\mathbb{R}}
\newcommand{\bA}{\mathbf{A}}
\newcommand{\bI}{\mathbf{I}}
\newcommand{\bJ}{\boldsymbol{1}_{n\times n}}
\newcommand{\bM}{\mathbf{M}}
\newcommand{\bP}{\mathbf{P}}
\newcommand{\bU}{\mathbf{U}}
\newcommand{\bx}{\mathbf{x}}
\newcommand{\bX}{\mathbf{X}}
\newcommand{\by}{\mathbf{y}}
\newcommand{\bY}{\mathbf{Y}}
\newcommand{\bOne}{\boldsymbol{1}}
\newcommand{\bmu}{\boldsymbol{\mu}}
\newcommand{\bnu}{\boldsymbol{\nu}}
\newcommand{\bSigma}{\boldsymbol{\Sigma}}
\newcommand{\bGamma}{{\boldsymbol{\Gamma}}}
\newcommand{\bPi}{\boldsymbol{\Pi}}
\newcommand{\normal}{\mathcal{N}}
\newcommand{\range}{\operatorname{Range}}
\newcommand{\knu}{\Lambda}
\newcommand{\Piknu}{\bPi_{\knu}}
\newcommand{\Piknuperp}{\bPi_{\knu^\perp}}
\def\R{\mathbb R}
\DeclareMathAlphabet{\mathcalligra}{T1}{calligra}{m}{n}
\newcommand{\abs}[1]{\left\lvert#1\right\rvert}
\newcommand{\norm}[1]{||#1||}
\titleformat*{\section}{\large\bfseries}
\titleformat*{\subsection}{\bfseries}
\titleformat*{\subsubsection}{\itshape}
\begin{document}

\newtheorem{definition}{Definition}[section]
\newtheorem{theorem}{Theorem}[section]
\newtheorem{lemma}[theorem]{Lemma}
\newtheorem{cor}[theorem]{Corollary}
\newtheorem{prop}[theorem]{Proposition}
\newtheorem{remark}{Remark}[section]
\newtheorem{assumption}{Assumption}[section]

\newcommand\independent{\protect\mathpalette{\protect\independenT}{\perp}}
\def\independenT#1#2{\mathrel{\rlap{$#1#2$}\mkern2mu{#1#2}}}
\newcommand{\notindependent}{\not\!\perp\!\!\!\perp}
\setcounter{section}{0}

\begin{center}
\Large\textbf{Post-clustering Inference under Dependence}\\
\vspace{0.2cm}
\large{Javier González-Delgado$^{1,2,3}$, Mathis Deronzier$^{2}$, Juan Cortés$^{3}$ and Pierre Neuvial$^{2}$}
\end{center}
{$^1$\textit{\scriptsize Universit\'e de Rennes, ENSAI, CNRS, CREST-UMR 9194, F-35000 Rennes, France.}}\\
{$^2$\textit{\scriptsize Institut de Math\'ematiques de Toulouse, UMR5219, Universit\'e de Toulouse, CNRS, UPS, F-31062 Toulouse Cedex 9, France.}}
{$^3$\textit{\scriptsize LAAS-CNRS, Universit{\'e} de Toulouse, CNRS, F-31400 Toulouse, France.}}

\begin{abstract}
    Recent work by Gao \textit{et al.}~\cite{Gao} has laid the foundations for post-clustering inference, establishing a theoretical framework allowing to test for differences between means of estimated clusters. Additionally, they studied the estimation of unknown parameters while controlling the selective type I error. However, their theory was developed for independent observations identically distributed as $p$-dimensional Gaussian variables, where the parameter estimation could only be performed for spherical covariance matrices. Here, we aim at extending this framework to a more convenient scenario for practical applications, where arbitrary dependence structures between observations and features are allowed. We establish sufficient conditions for extending the setting presented in~\cite{Gao} to the general dependence framework. Moreover, 
    we assess theoretical conditions allowing the compatible estimation of a covariance matrix. The theory is developed for hierarchical agglomerative clustering algorithms with several types of linkages, and for the $k$-means algorithm. We illustrate our method with synthetic data and real data of protein structures.
\end{abstract}

\section{Introduction}

Post-selection inference has gained substantial attention in recent years due to its potential to address practical problems in diverse fields. The issue of using data to answer a question that has been chosen based on the same data was formalized in \cite{fithian2017optimal}, where the basis of selective hypothesis testing was rigorously set with the definition of the selective type I error. This paved the way to perform selective testing when null hypotheses are chosen through clustering algorithms, bypassing the naive data splitting that reveals unsuitable in this context. However, their proposed approach, referred to as \textit{data carving}, as well as more recent approaches like \textit{data fission}~\cite{datafission} are difficult to implement in practice  because they require knowledge of the covariance structure between variables. Moreover, they often involve the non-trivial calibration of a tuning parameter that controls the proportion of information allocated for model selection and for inference. The seminal work by Gao \textit{et al.}~\cite{Gao} established a theoretical framework allowing selective testing after clustering using all the information in the data set. Their method is defined for independent observations identically distributed as $p$-dimensional Gaussian random variables with a spherical covariance matrix. This corresponds to the following matrix normal model~\cite{Gupta2018}:
\begin{equation}\label{model_gao}\tag{ind-MN}
    \mathbf{X}\sim\mathcal{MN}_{n\times p}(\boldsymbol{\mu},\mathbf{I}_n,\sigma^2\mathbf{I}_p),
\end{equation}
where $\boldsymbol\mu\in\mathcal{M}_{n\times p}(\mathbb{R})$ and $\sigma>0$. Under \eqref{model_gao}, the authors in \cite{Gao} defined a $p$-value that controls the selective type I error when testing for a difference in means between a pair of estimated clusters. This $p$-value can be efficiently computed for hierarchical clustering algorithms with common linkage functions. Moreover, the authors in \cite{Gao} made another remarkable contribution by addressing the estimation of $\sigma$ while controlling the selective type I error, which had not been addressed in previous works~\cite{datafission, rasines} despite its major importance in applications. They showed that if $\sigma$ is asymptotically over-estimated, the $p$-value is asymptotically super-uniform under the null, and provided an estimator $\hat{\sigma}$ that can be used in practice. They also proposed an extension of their testing procedure to known arbitrary covariance structures between features, still assuming i.i.d. observations. However, the estimation of the covariance between features remained unaddressed. 

Despite the notable contribution of \cite{Gao}, model \eqref{model_gao} is somewhat limited in view of more complex applications. In real problems, features describing observations are unlikely to be independent with identical variance, but rather present more general covariance structures $\mathbf{\Sigma}$. In the same way, observations may present non-negligible dependence structures when, for instance, they can be drawn from time series models or simulated with physical models involving time evolution. Note that ignoring dependence between features and observations implies the loss of selective type I error control. This can be illustrated by a simple simulation scenario based on matrix normal samples with non-diagonal covariance matrices, accounting for the dependence structures between observations and/or features. If model assumptions are not satisfied, the approach presented in \cite{Gao} does not control the selective type I error, as illustrated in Figure~\ref{fig:ignoredep} by the fact that the distribution of the corresponding $p$-values is above the diagonal (which corresponds to uniform $p$-values). This deviation from uniformity increases with the dimension of the feature space. Details about the corresponding simulation are given in Appendix~\ref{sec:ignoredep}.

\begin{figure}
    \centering
    \includegraphics[width=0.85\textwidth]{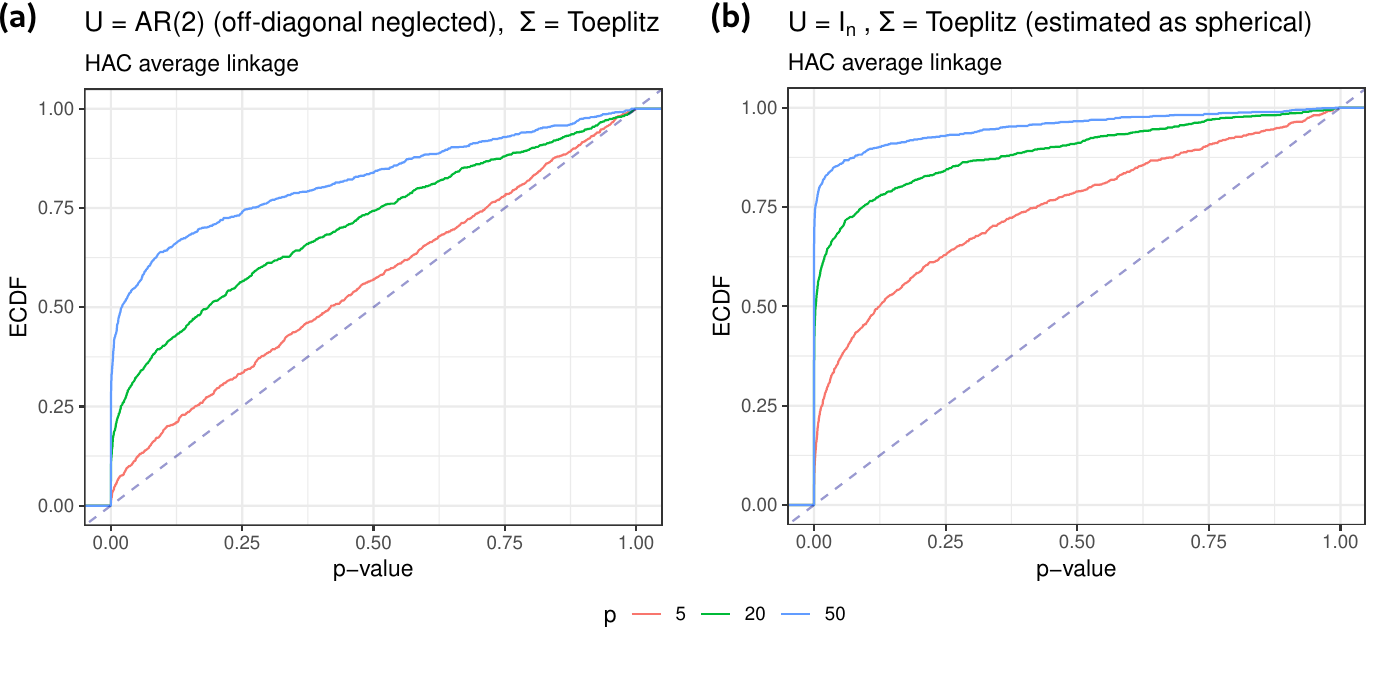}
    \caption{Empirical cumulative distribution functions (ECDF) of $p$-values defined in \cite{Gao} testing for the difference in cluster means after performing a hierarchical clustering algorithm (HAC) with average linkage. The ECDF were computed from $M=2000$ realizations of a matrix normal model with $\boldsymbol{\mu}=\mathbf{0}_{n\times p}$ and non-diagonal $\mathbf{U}$ and $\boldsymbol{\Sigma}$. For each realization, the test compared the means of two randomly selected clusters after setting the HAC to choose three clusters. We set $n=100$ and $p\in\lbrace 5,20,50\rbrace$. In (a), dependence between observations is ignored. In (b), the covariance between features is assumed to be spherical to allow its estimation using the approach in~\cite{Gao}.}
    \label{fig:ignoredep}
\end{figure}

The practical motivation of the present work is to perform inference after clustering protein conformations. Protein structures are non-static and their conformational variability is essential to understand the relationship between sequence, structural properties and function~\cite{Kessel2018}. Due to the high complexity of the conformational space, clustering techniques have emerged as powerful tools to characterize the structural variability of proteins, by extracting families of representative states~\cite{engens, Appadurai2022, Shao2007, Pearce2021}. Usually, Euclidean distances between pairs of amino acids are considered as $p$-dimensional descriptors of protein conformations~\cite{engens, Lazar, CamachoZarco2020}. These distances are highly correlated and hardly match the model~\eqref{model_gao}. Moreover, protein data is often simulated with Molecular Dynamics approaches that simulate the time-evolution of the protein according to physical models~\cite{allen}. In that case, independence between observations cannot be assumed. 

Accordingly, our aim is to go one step further and extend the framework introduced in~\cite{Gao} to a more general setting where arbitrary dependence structures between both observations and features are admitted, allowing for the estimation of one of them. We present a generalization of~\cite{Gao} where the model~\eqref{model_gao} is extended to
\begin{equation}\label{model}\tag{gen-MN}
    \mathbf{X}\sim\mathcal{MN}_{n\times p}(\boldsymbol\mu, \mathbf{U},\mathbf{\Sigma}),
\end{equation}
where $\boldsymbol\mu\in\mathcal{M}_{n\times p}(\mathbb{R})$, $\mathbf{U}\in\mathcal{M}_{n\times n}(\mathbb{R})$ and $\mathbf{\Sigma}\in\mathcal{M}_{p\times p}(\mathbb{R})$. Our techniques follow the same reasoning steps as the ones in~\cite{Gao}, establishing sufficient conditions that allow an extension to \eqref{model}.

The reader might wonder whether it is necessary to develop a new framework for \eqref{model} given that the matrix normal data can be whitened to fit into the Gao \textit{et al.} model. Indeed, as we have
\begin{equation*}
    \mathbf{X}\sim\mathcal{MN}_{n\times p}(\boldsymbol\mu, \mathbf{U},\mathbf{\Sigma})\Leftrightarrow \textrm{vec}(\mathbf{X})\sim\mathcal{N}_{np}\left(\textrm{vec}(\boldsymbol{\mu}), \mathbf{\Sigma}\otimes\mathbf{U}\right),
\end{equation*}
the transformed random vector $\left(\mathbf{\Sigma}\otimes\mathbf{U}\right)^{-\frac{1}{2}}\textrm{vec}(\mathbf{X})$ has covariance matrix $\mathbb{I}_{np}$ and can be de-vectorized to fit~\eqref{model_gao}. However, clustering the original and whitened data often leads to different partitions, and thus to different null hypotheses. In some cases, de-correlating the observations and features of $\mathbf{X}$ might yield a misleading impression of the underlying class structure. This is illustrated in Figure~\ref{fig:whitening}, where we show that whitening a sample drawn from~\eqref{model} and performing a selective test defined for~\eqref{model_gao} might substantially alter the significance of the differences between cluster means, as well as the overall clustering partition. Details on this numerical analysis are provided in Appendix~\ref{sec:whitening}. Note also that whitening a $n\times p$ matrix normal sample involves the inversion of a $np\times np$ matrix. These considerations, together with the unsuitability of the whitening approach when any of the covariance matrices is unknown, justifies the need of developing a new framework for the general model~\eqref{model}.

\begin{figure}[t]
    \centering
    \includegraphics[width=0.95\textwidth]{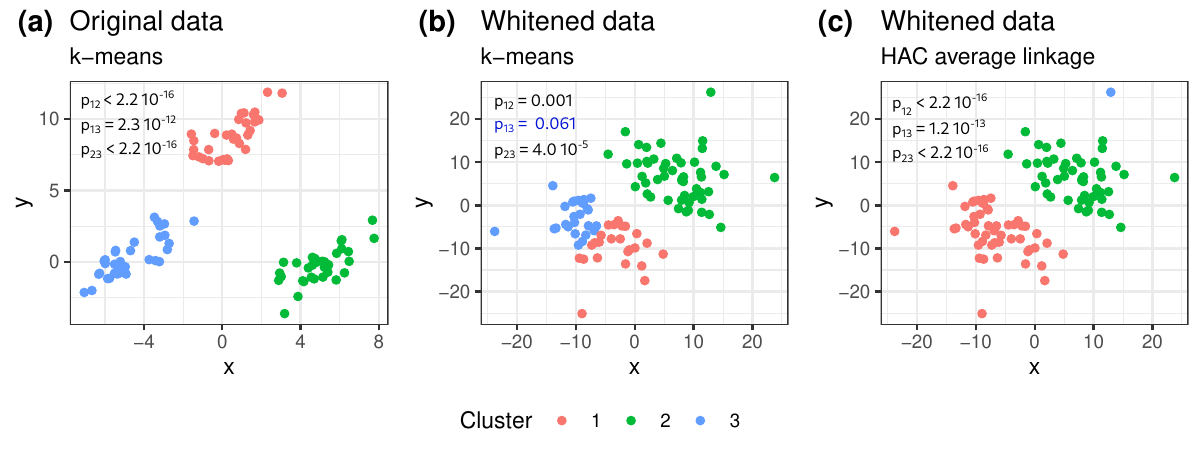}
    \caption{(a): Sample drawn from \eqref{model} with $n=100$, $p=2$ and non-diagonal covariance matrices $\mathbf{U}$ and $\mathbf{\Sigma}$. The mean matrix $\boldsymbol{\mu}$ is divided into three clusters. Observations are classified into three groups by the $k$-means algorithm and the difference between cluster means is tested using the approach proposed in this work. Classification using hierarchical agglomerative clustering (HAC) with average linkage yielded the same partition. (b): Data in (a) whitened and classified into three groups by the $k$-means algorithm. The differences between cluster means are tested assuming \eqref{model_gao} and using the approach presented in~\cite{chen2022selective}. (c): Data in (a) whitened and classified into three groups using HAC with average linkage. The differences between cluster means are tested assuming~\eqref{model_gao} and using the approach proposed in~\cite{Gao}. In all panels, $p_{ij}$ denotes the $p$-value for the difference between the means of clusters $i$ and $j$, for $i,j=1,2,3$.}
    \label{fig:whitening}
\end{figure}

The paper is organized as follows:
\begin{itemize}
    \item Section~\ref{sec:sel_clustering} presents our extension of~\cite{Gao} to the general model \eqref{model} when both covariance matrices are known.
    \item In Section~\ref{sec:unknown_sigma}, we explore the scenarios that allow the asymptotic over-estimation of either $\mathbf{U}$ or $\mathbf{\Sigma}$ while respecting the asymptotic control of the selective type I error. We provide an estimator that can be used in several common practical scenarios.
    \item Section~\ref{sec:numerical_experiments} illustrates all the results through numerical experiments on synthetic data, and evaluates the robustness of the presented approach to model misspecification. Finally, Section~\ref{sec:proteins} shows how this theory can be applied to perform inference after clustering protein structures. 
\end{itemize}  

\section{Selective inference for clustering under general dependence}\label{sec:sel_clustering}

In \cite{Gao}, the authors consider the problem of selective inference after hierarchical clustering in the case of independent observations and features (with an extension to arbitrary known dependence between features). Here, we aim to extend the method to allow for general dependence structures between both observations and features. We consider $n$ observations of $p$ features drawn from the matrix normal distribution \eqref{model}, where $\mathbf{U}$ and $\mathbf{\Sigma}$ are required to be positive definite. Each row of $\mathbf{X}$ is a vector of features in $\mathbb{R}^p$. The dependence between such features is given by $\mathbf{\Sigma}$, and $\mathbf{U}$ encodes the dependence between observations. If observations are independent with unit variance, we have $\mathbf{U}=\mathbf{I}_n$, and if features are independent with equal variance we can write $\mathbf{\Sigma}=\sigma^2\mathbf{I}_p$ for a given $\sigma>0$. These two assumptions define the model in \cite{Gao}, which we aim to extend to the most general $\mathbf{U}$ and $\mathbf{\Sigma}$.

\subsection{Problem setting and Gao \textit{et al.}'s approach}\label{sec:gao_setting}

Let us first recall the setting originally introduced in \cite{Gao}. We will denote by $X_i$ (resp. $\mu_i$) the $i$-th row of $\mathbf{X}$ (resp. $\boldsymbol\mu$) and, for a group of observations $\mathcal{G}\subseteq[n]=\lbrace 1,\ldots,n\rbrace$, $X_\mathcal{G}$ will denote the submatrix of $\mathbf{X}$ with rows $X_i$ for $i\in\mathcal{G}$. We also consider the mean of $\mathcal{G}$ in $\mathbf{X}$ and its empirical counterpart, denoted respectively by
\begin{equation}\label{empirical_mean}
    \bar{\mu}_\mathcal{G}=\frac{1}{\abs{\mathcal{G}}}\sum_{i\in\mathcal{G}}\mu_i
\quad\textrm{and}\quad
    \bar{X}_\mathcal{G}=\frac{1}{\abs{\mathcal{G}}}\sum_{i\in\mathcal{G}}X_i.
\end{equation}
Letting
\begin{equation}
   \mathcal{C}_{[n]}=\lbrace (\mathcal{G}_1,\mathcal{G}_2),\, \mathcal{G}_1,\mathcal{G}_2 \subset [n]\,:\,\mathcal{G}_1\cap\mathcal{G}_2=\emptyset\rbrace,
\end{equation}
be the set of all pairs of non-overlapping groups of observations, for any  $(\mathcal{G}_1,\mathcal{G}_2)\in\mathcal{C}_{[n]}$ we can define the column vector $\nu_{\mathcal{G}_1,\mathcal{G}_2}$ having as components
\begin{equation}
    [\nu_{\mathcal{G}_1,\mathcal{G}_2}]_i=\mathds{1}\lbrace i\in\mathcal{G}_1\rbrace/|\mathcal{G}_1|-
    \mathds{1}\lbrace i\in\mathcal{G}_2\rbrace/|\mathcal{G}_2|,
\end{equation}
for $i\in[n]$. This allows the difference between the (empirical) group means to be written compactly as
\begin{equation}\label{nu}
    \bar{\mu}_{\mathcal{G}_1}-\bar{\mu}_{\mathcal{G}_2}=\mathbf{\boldsymbol{\mu}}^T\nu_{\mathcal{G}_1,\mathcal{G}_2},\quad\textrm{and}\quad \bar{X}_{\mathcal{G}_1}-\bar{X}_{\mathcal{G}_2}=\mathbf{X}^T\nu_{\mathcal{G}_1,\mathcal{G}_2}.
\end{equation}
For the sake of a clearer notation, we will simply write $\nu=\nu_{\mG_1,\mG_2}$ when the context is clear. Let $\mathcal{C}$ be a clustering algorithm, $\mathbf{x}$ a realization of $\mathbf{X}$ and $\mG_1,\mG_2$ an arbitrary pair of clusters in $\mathcal{C}(\mathbf{x})$. The goal of post-clustering inference is to assess the null hypothesis
\begin{equation}\label{h0}\tag{H0}
H_0^{\lbrace \mG_1,\mG_2\rbrace}:\mathbf{\boldsymbol{\mu}}^T\nu_{\mG_1,\mG_2}=0,
\end{equation}
by controlling the \textit{selective type I error for clustering} at level $\alpha$, i.e. by ensuring that
\begin{equation}\label{sel_typeI}
    \mathbb{P}_{H_0^{\lbrace\mG_1,\mG_2\rbrace}}\left(\textrm{reject }H_0^{\lbrace \mG_1,\mG_2\rbrace}\textrm{
     based on }\mathbf{X}\textrm{ at level }\alpha\,\,\biggr\rvert\,\,\mG_1,\mG_2\in\mathcal{C}(\mathbf{X})\right)\leq \alpha\quad \forall\,\alpha\in(0,1).
\end{equation}
If the inequality in the previous equation can be replaced by an equality, we will say that the selective type I error is controlled \textit{exactly at level $\alpha$}. The ideal scenario to define a $p$-value for \eqref{h0} satisfying \eqref{sel_typeI} would be to only condition on the event $\lbrace \mG_1,\mG_2\in\mathcal{C}(\mathbf{X})\rbrace$, which is the broader conditioning set that allows selective type I error control. However, making the $p$-value analytically tractable often needs the refinement of the conditioning set by adding extra technical events (see also Appendix~\ref{sec:finer_cond}). 

The idea in~\cite{Gao} is to decompose $\mathbf{X}$ using the projection onto the orthogonal complement of $\nu$, that is, $\boldsymbol{\pi}_\nu^{\perp}=\mathbf{I}_n-\nu\nu^T/\norm{\nu}_2^2$. This naturally brings out the difference between empirical cluster means $\mathbf{X}^T\nu$, which can be used as a test statistic to evaluate~\eqref{h0}:
\begin{equation}\label{decomposition_gao}
    \mathbf{X}=\boldsymbol\pi_\nu^\perp\mathbf{X}+(\mathbf{I}_n-\boldsymbol\pi_\nu^\perp\mathbf{X})=\boldsymbol\pi_\nu^\perp\mathbf{X}+\left(\frac{\norm{\mathbf{X}^T\nu}_2}{\norm{\nu}_2^2}\right)\nu\,\mathrm{dir}(\mathbf{X}^T\nu)^T,
\end{equation}
where $\mathrm{dir}(v)=v/\norm{v}_2\mathds{1}\lbrace v\neq 0\rbrace$ for all $v\in\mathbb{R}^p$. The previous decomposition depends on the quantities $\boldsymbol{\pi}_\nu^\perp\mathbf{X}$ and $\mathrm{dir}(\mathbf{X}^T\nu)$, whose null distributions remain unknown. As a consequence, the authors in~\cite{Gao} condition on their values for a realization $\mathbf{x}$ of $\mathbf{X}$, defining the following quantity:
\begin{align}\label{pvalue_gao}
    p(\mathbf{x};\lbrace \mG_1,\mG_2\rbrace)=\mathbb{P}_{H_0^{\lbrace \mG_1,\mG_2\rbrace}}
    \Bigl(
    \norm{\mathbf{X}^T\nu}_2\geq \norm{\mathbf{x}^T\nu}_2\,\, \Bigr\lvert \,\, \mG_1,\mG_2\in\mathcal{C}(\mathbf{X}), \nonumber\\ \boldsymbol\pi_{\nu}^\perp \mathbf{X}=\boldsymbol\pi_{\nu}^\perp \mathbf{x} \, ,\, \mathrm{dir}\left(\mathbf{X}^T\nu\right)=\mathrm{dir}\left(\mathbf{x}^T\nu\right)
     \Bigr),\tag{p-GBW}
\end{align}
as a $p$-value for~\eqref{h0}~\cite[Theorem 1]{Gao}. 

The key challenge in proposing~\eqref{pvalue_gao} is finding an efficient characterization suitable for practical application. The idea in~\cite{Gao} involves two steps. The first is the definition of a test statistic based on the norm induced by the null covariance matrix of $\mathbf{X}^T\nu$ (up to a positive multiplicative factor). More precisely, if $\mathbf{A}$ is the covariance matrix of a non-degenerated, centered $p$-dimensional Gaussian vector $y$, then $\norm{y}_{\mathbf{A}}^2 = y^T \mathbf{A}^{-1} y$ follows a $\chi^2_p$ distribution. This implies that $\norm{\mathbf{X}^T \nu}_2$ follows a $\sigma\norm{\nu}_2\cdot\chi_p$ distribution under~\eqref{h0}, thereby justifying the choice of the $\ell^2$-norm. The second step is to show that $\norm{\mathbf{X}^T\nu}_2$ is independent of both the direction and the projection in~\eqref{pvalue_gao}. Consequently, the $p$-value~\eqref{pvalue_gao} can be expressed in terms of a $\chi_p$ distribution truncated to a set $\hat{\mathcal{S}}$ that accounts for the event $\lbrace\mathcal{C}_1,\mathcal{C}_2\in\mathcal{C}(\mathbf{X})\rbrace$. If $\mathcal{C}$ is a hierarchical clustering algorithm, the set $\hat{\mathcal{S}}$ -and thus~\eqref{pvalue_gao}-  can be efficiently computed for several types of linkages. Otherwise, it can be approximated with a Monte Carlo procedure. 

\subsection{Extension to the general matrix normal model}

\subsubsection{Feasibility of a straightforward extension of Gao \textit{et al.}}

Here, we aim at extending \eqref{pvalue_gao} for the general model \eqref{model}, following the same strategy to ensure the tractability of the $p$-value. Noticing that, under~\eqref{h0}, $\mathbf{X}^T\nu\sim\mathcal{N}_p(0,\mathbf{V}_{\mG_1,\mG_2})$, where
\begin{equation}\label{def_V}
    \mathbf{V}_{\mathcal{G}_1,\mathcal{G}_2} = \nu^T \mathbf{U} \nu\mathbf{\Sigma},
\end{equation}     
a natural extension corresponds to replace $||\cdot||_2$ by the more general norm 
\begin{equation}\label{norm_V}
\norm{v}_{\mathbf{V}_{\mathcal{G}_1,\mathcal{G}_2}}=\sqrt{v^T\mathbf{V}_{\mathcal{G}_1,\mathcal{G}_2}^{-1}v},\quad\forall\,v\in\mathbb{R}^p,
\end{equation}
which satisfies $||\mathbf{X}^T\nu||_{\mathbf{V}_{\mG_1,\mG_2}}\sim \chi_p$ under the null. This choice leads us to consider the quantity
\begin{align}\label{pvalue_V}
    p_{\mathbf{V}_{\mG_1,\mG_2}}
    (\mathbf{x};\lbrace \mG_1,\mG_2\rbrace)=\mathbb{P}_{H_0^{\lbrace \mG_1,\mG_2 \rbrace}}\Bigl(\norm{\diffXclst}_{\mathbf{V}_{\mG_1,\mG_2}}\geq \norm{\diffxclst}_{\mathbf{V}_{\mG_1,\mG_2}}\,\, \Bigr\rvert \,\, \mG_1,\mG_2\in\mathcal{C}(\mathbf{X}), \nonumber\\ 
    \boldsymbol\pi_{\nu}^\perp \mathbf{X}=\boldsymbol\pi_{\nu}^\perp \mathbf{x} \, ,\, \mathrm{dir}_{\mathbf{V}_{\mG_1,\mG_2}}\left(\diffXclst\right)=\mathrm{dir}_{\mathbf{V}_{\mG_1,\mG_2}}\left(\diffxclst\right)\Bigr),\tag{p-gen}
\end{align}
as a candidate $p$-value to extend Theorem 1 in~\cite{Gao}, where $\mathrm{dir}_{\mathbf{V}_{\mG_1,\mG_2}}(v)=v/\norm{v}_{\mathbf{V}_{\mG_1,\mG_2}}\mathds{1}\lbrace v\neq 0\rbrace$ for all $v\in\mathbb{R}^p$. 

A straightforward generalization of~\cite{Gao} to~\eqref{model} can be obtained when the test statistic is independent of the projection and direction in~\eqref{pvalue_V}, using the same argument as in the second step of the approach of~\cite{Gao}. However, the following result shows that the independence on the extra conditioning events holds if and only if the norm~\eqref{norm_V} is chosen to define the test statistic and $\mathbf{U}$ belongs to the class of positive definite compound symmetry matrices:
\begin{equation}
    \mathcal{CS}(n)=\left\lbrace (a-b)\mathbf{I}_n+b\mathbf{1}_{n\times n }\,:\,a\geq 0,\,-\frac{a}{n-1}< b <a \right\rbrace,
\end{equation}
where $\mathbf{1}_{n\times n}$ is a $n\times n$ matrix of ones. Note that $\mathcal{CS}(n)$ is the set of covariance matrices of the vectors $(y_1+\epsilon,\ldots,y_n+\epsilon)$, where the $y_i$ are centered i.i.d. Gaussian variables and $\epsilon$ is a centered noise independent of the $y_i$.

\begin{prop}\label{prop_conditions}  Let $\mathcal{C}$ be a clustering algorithm and $\mathbf{x}$ a realization of $\mathbf{X}\sim\mathcal{MN}_{n\times p}(\boldsymbol{\mu},\mathbf{U},\mathbf{\Sigma})$. For any $p\times p$ symmetric positive definite matrix $\mathbf{A}$, let $\mathrm{dir}_{\mathbf{A}}(v)=v/\norm{v}_{\mathbf{A}}\mathds{1}\lbrace v\neq 0\rbrace$ for all $v\in\mathbb{R}^p$. Then,
\begin{enumerate}
    \item[$(i)$] $\mathbf{U}\in\mathcal{CS}(n)\Leftrightarrow \mathbf{X}^T\nu_{\mG_1,\mG_2}\independent\boldsymbol{\pi}^\perp_{\nu_{\mG_1,\mG_2}}\mathbf{X}$ for all $(\mG_1,\mG_2)\in\mathcal{C}_{[n]}$,
    \item[$(ii)$] For any $(\mG_1,\mG_2)\in\mathcal{C}_{[n]}$, $\mathbf{A}=c\mathbf{V}_{\mG_1,\mG_2}$ for some $c>0$ $\overset{H_0^{\lbrace \mG_1,\mG_2 \rbrace}}{\Leftrightarrow} ||\mathbf{X}^T\nu_{\mG_1,\mG_2}||_{\mathbf{A}}\independent \mathrm{dir}_{\mathbf{A}}\left(\mathbf{X}^T\nu_{\mG_1,\mG_2}\right)$.
\end{enumerate}
\end{prop}
The previous result is proved in Appendix~\ref{proofs_1_lemma}. 
The first equivalence is established by showing that both conditions are simultaneously equivalent to $\nucl{\mG_1}{\mG_2}$ being an eigenvector of $\bU$. The second follows from the fact that a Gaussian vector is independent of its direction if and only if it is centered with spherical covariance, a condition that we rewrite in terms of the matrix $\bA$.

Proposition~\ref{prop_conditions} shows that the choice of the norm~\eqref{norm_V} to define the test statistic not only ensures the tractability of its null distribution but also its independence with respect to the direction in~\eqref{pvalue_V}. Furthermore, the independence $\mathbf{X}^T\nu\independent \boldsymbol{\pi}_\nu^{\perp}\mathbf{X}$ is equivalent to $\mathbf{U}\in\mathcal{CS}(n)$. In other words, the direct extension of the strategy in~\cite{Gao} to the general model~\eqref{model} imposes a compound symmetry constraint on the dependence between observations. We develop the framework $\mathbf{U}\in\mathcal{CS}(n)$ in Section~\ref{sec:CS_known_U}. In Section~\ref{sec:general_known_U}, we explore the extension of the same strategy to arbitrary $\mathbf{U}$, focusing on the characterization of quantities of the form~\eqref{pvalue_gao} when the extra conditioning events are not independent of the test statistic.

\subsubsection{Compound symmetry dependence between observations}\label{sec:CS_known_U}

If the dependence between observations $\mathbf{U}$ has a compound symmetry structure, the quantity~\eqref{pvalue_V} can be efficiently written in terms of a truncated $\chi_p$ distribution, and used as a $p$-value for~\eqref{h0}. This is stated in the next result, that extends Theorem 1 in~\cite{Gao}. 

\begin{theorem}\label{th:pvalue_V} Let $\mathcal{C}$ be a clustering algorithm and $\mathbf{x}$ a realization of $\mathbf{X}\sim\mathcal{MN}_{n\times p}(\boldsymbol{\mu},\mathbf{U},\mathbf{\Sigma})$ with $\mathbf{U}\in\mathcal{CS}(n)$. Then,
\begin{equation}\label{pvalue_F}\tag{p-tract}
    p_{\mathbf{V}_{\mathcal{G}_1,\mathcal{G}_2}}(\mathbf{x};\lbrace \mathcal{G}_1,\mathcal{G}_2\rbrace) = 1-\mathbb{F}_p\left(\norm{\mathbf{x}^T\nu}_{\mathbf{V}_{\mathcal{G}_1,\mathcal{G}_2}},\,\mathcal{S}_{\mathbf{V}_{\mathcal{G}_1,\mathcal{G}_2}}(\mathbf{x};\lbrace \mathcal{G}_1,\mathcal{G}_2\rbrace)\right),
\end{equation}
where $\mathbb{F}_p(t,\mathcal{S})$ is the cumulative distribution function of a $\chi_p$ random variable truncated to the set $\mathcal{S}$ and
\begin{equation}\label{set_S}
    \mathcal{S}_{\mathbf{V}_{\mathcal{G}_1,\mathcal{G}_2}}(\mathbf{x};\lbrace \mathcal{G}_1,\mathcal{G}_2\rbrace) = \left\lbrace \phi\geq 0\,:\,\mathcal{G}_1,\mathcal{G}_2\in\mathcal{C}\Bigl(
    \boldsymbol\pi_{\nu}^\perp \mathbf{x}+\phi\,\frac{\nu}{||\nu||_2^2}\,\mathrm{dir}_{\mathbf{V}_{\mathcal{G}_1,\mathcal{G}_2}}(\mathbf{x}^T\nu)
    \Bigr)\right\rbrace,
\end{equation}
for any $(\mG_1,\mG_2)\in\mathcal{C}_{[n]}$. Furthermore,  $p_{\mathbf{V}_{\mathcal{G}_1,\mathcal{G}_2}}(\mathbf{x};\lbrace \mathcal{G}_1,\mathcal{G}_2\rbrace)$ is a $p$-value for~\eqref{h0} that controls the selective type I error for clustering~\eqref{sel_typeI} exactly at level $\alpha$.
\end{theorem}

The proof of the previous result is presented in Appendix~\ref{proofs_1_CS}. One can easily verify that setting $\mathbf{U}=\mathbf{I}_n$ and $\mathbf{\Sigma}=\sigma^2\mathbf{I}_p$ in Theorem~\ref{th:pvalue_V} yields exactly Theorem 1 in \cite{Gao}. In this general version, the information about the variance has been extracted from the statistic null distribution, which now remains the same independently of $\mathbf{U},\mathbf{\Sigma}$, and moved it \textit{into} the test statistic itself by making it dependent on the scale matrices. More precisely, $||\mathbf{X}^T\nu||_{\mathbf{V}_{\mG_1,\mG_2}}$ is the \textit{Mahalanobis distance}~\cite{mahalanobis} between the empirical group means with respect to the null distribution of their difference. This distance generalizes to multiple dimensions the idea of quantifying how many standard deviations away a point is from the mean of its distribution, and therefore integrates the dependence structure between columns and rows in $\mathbf{X}$.

Following \eqref{pvalue_F}, the computation of \eqref{pvalue_V} only depends on the characterization of the one-dimensional set 
\begin{equation}\label{set_S_hat}
\hat{\mathcal{S}}_{\mathbf{V}_{\mathcal{G}_1,\mathcal{G}_2}}=\mathcal{S}_{\mathbf{V}_{\mathcal{G}_1,\mathcal{G}_2}}(\mathbf{x};\lbrace \mG_1,\mG_2\rbrace)=\left\lbrace \phi\geq 0\,:\,\mG_1,\mG_2\in\mathcal{C}\bigl(\mathbf{x}'_{\mathbf{V}_{\mathcal{G}_1,\mathcal{G}_2}}(\phi)\bigr)\right\rbrace, 
\end{equation}
where
\begin{equation}\label{perturbed_x}
    \mathbf{x}'_{\mathbf{V}_{\mathcal{G}_1,\mathcal{G}_2}}(\phi)= \boldsymbol\pi_{\nu}^\perp \mathbf{x}+\phi\,\frac{\nu}{||\nu||_2^2}\,\mathrm{dir}_{\mathbf{V}_{\mathcal{G}_1,\mathcal{G}_2}}(\mathbf{x}^T\nu).
\end{equation}

The data set \eqref{perturbed_x} is analogous to $\mathbf{x}'(\phi)$ in \cite[Equation (13)]{Gao} for the norm \eqref{norm_V}, and its interpretation is equivalent. Indeed, we can rewrite both $\mathbf{x}'(\phi)$ and \eqref{perturbed_x} as
\begin{align}
\mathbf{x}'(\phi)=
\mathbf{x} + \frac{\nu}{\norm{\nu}_2^2}\left(\phi-\norm{\mathbf{x}^T\nu}_2\right)\mathrm{dir}(\mathbf{x}^T\nu), \label{perturbed_x_bis}\\
\mathbf{x}'_{\mathbf{V}_{\mathcal{G}_1,\mathcal{G}_2}}(\phi)=
\mathbf{x} + \frac{\nu}{\norm{\nu}_2^2}\left(\phi-\norm{\mathbf{x}^T\nu}_{\mathbf{V}_{\mathcal{G}_1,\mathcal{G}_2}}\right)\mathrm{dir}_{\mathbf{V}_{\mathcal{G}_1,\mathcal{G}_2}}(\mathbf{x}^T\nu).\label{perturbed_2}
\end{align}
Consequently, we can interpret \eqref{perturbed_x} as a perturbed version of $\mathbf{x}$, but where the perturbation is based on the norm $\norm{\cdot}_{\mathbf{V}_{\mathcal{G}_1,\mathcal{G}_2}}$ instead of $\norm{\cdot}_2$. Thus, \eqref{set_S_hat} is the set of non-negative $\phi$ for which applying the clustering algorithm $\mathcal{C}$ to the perturbed data set $\mathbf{x}'_{\mathbf{V}_{\mathcal{G}_1,\mathcal{G}_2}}(\phi)$ yields $\mG_1$ and $\mG_2$. As shown in \cite{Gao}, the set 
\begin{equation}\label{set_S_Gao}
    \hat{\mathcal{S}}=\lbrace \phi \geq 0\,:\,\mG_1,\mG_2\in\mathcal{C}(\mathbf{x}'(\phi))\rbrace,
\end{equation}
can be explicitly characterized for hierarchical agglomerative clustering with several types of linkages. The next Lemma shows that we do not need to re-adapt the work in \cite{Gao} to the set \eqref{set_S_hat}, as its points are given by a scale transformation of the points in $\hat{\mathcal{S}}$.

\begin{lemma}\label{equivalence_sets} Let $\mathbf{x}$ be a realization of $\mathbf{X}$ and $\mG_1,\mG_2$ an arbitrary pair of clusters in $\mathcal{C}(\mathbf{x})$. Let $\hat{\mathcal{S}}$ denote the set \eqref{set_S_Gao} defined in \cite[Equation (12)]{Gao}. Then,
\begin{equation}
    \hat{\mathcal{S}}_{\mathbf{V}_{\mathcal{G}_1,\mathcal{G}_2}} = \frac{\norm{\mathbf{x}^T\nu}_{\mathbf{V}_{\mathcal{G}_1,\mathcal{G}_2}}}{\norm{\mathbf{x}^T\nu}_2}\,\hat{\mathcal{S}},
\end{equation}
where $\hat{\mathcal{S}}_{\mathbf{V}_{\mathcal{G}_1,\mathcal{G}_2}}$ is defined in \eqref{set_S_hat}.
\end{lemma}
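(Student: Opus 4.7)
The plan is to observe that the two perturbed datasets agree up to a positive rescaling of the scalar parameter $\phi$. Write $d := \bar{x}_{\hat{C}_1}-\bar{x}_{\hat{C}_2}$, $\alpha := \norm{d}_2$ and $\beta := \norm{d}_{\mathbf{V}_{\hat{C}_1,\hat{C}_2}}$, so that $\mathrm{dir}(d) = d/\alpha$ and $\mathrm{dir}_{\mathbf{V}_{\hat{C}_1,\hat{C}_2}}(d) = d/\beta$. I may assume $d\neq 0$, so that $\alpha,\beta>0$; the degenerate case is trivial because both sets then coincide with the set of $\phi\geq 0$ for which $\hat{C}_1,\hat{C}_2\in\mathcal{C}(\boldsymbol\pi_{\nu(\hat{C}_1,\hat{C}_2)}^\perp \mathbf{x})$.

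The central identity is
\begin{equation*}
\mathbf{x}'_{\mathbf{V}_{\hat{C}_1,\hat{C}_2}}\!\left(\frac{\beta}{\alpha}\,\phi\right) \;=\; \mathbf{x}'(\phi) \qquad \text{for every } \phi \geq 0,
\end{equation*}
and it is checked by direct substitution into \eqref{perturbed_x}. The projection term $\boldsymbol\pi_{\nu(\hat{C}_1,\hat{C}_2)}^\perp \mathbf{x}$ is common to both sides. In the perturbation term on the left, the factor $\beta$ coming from the rescaled argument cancels with the $\beta$ in the denominator of $\mathrm{dir}_{\mathbf{V}_{\hat{C}_1,\hat{C}_2}}(d) = d/\beta$, leaving precisely the perturbation term of $\mathbf{x}'(\phi)$ (in which the analogous $\alpha$ arises from $\mathrm{dir}(d)=d/\alpha$). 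This is a short arithmetic check whose essential content is that the Euclidean and $\mathbf{V}_{\hat{C}_1,\hat{C}_2}$-normalized directions of the common vector $d$ are collinear.

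Applying the clustering algorithm $\mathcal{C}$ to both sides of the identity yields, for every $\phi\geq 0$, the equivalence $\hat{C}_1,\hat{C}_2 \in \mathcal{C}(\mathbf{x}'(\phi))$ if and only if $\hat{C}_1,\hat{C}_2 \in \mathcal{C}(\mathbf{x}'_{\mathbf{V}_{\hat{C}_1,\hat{C}_2}}((\beta/\alpha)\phi))$. Since $\beta/\alpha>0$, the map $\phi\mapsto (\beta/\alpha)\phi$ is a bijection of $[0,\infty)$ onto itself, so this equivalence reads $\hat{\mathcal{S}}_{\mathbf{V}_{\hat{C}_1,\hat{C}_2}} = (\beta/\alpha)\,\hat{\mathcal{S}}$, which is the stated equality. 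There is no real obstacle: the entire proof reduces to careful bookkeeping of the two normalizing constants $\alpha$ and $\beta$, and in particular the explicit characterization of $\hat{\mathcal{S}}$ from \cite{Gao} need not be revisited.
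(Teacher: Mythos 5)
Your proof is correct and follows essentially the same route as the paper: you establish that the two perturbed datasets coincide up to rescaling the scalar argument by the ratio of the two norms (your identity $\mathbf{x}'_{\mathbf{V}_{\hat{C}_1,\hat{C}_2}}\bigl((\beta/\alpha)\phi\bigr)=\mathbf{x}'(\phi)$ is just the paper's equation \eqref{scale_transformation} with $\phi$ rescaled), and then conclude by the bijectivity of $\phi\mapsto(\beta/\alpha)\phi$ on $[0,\infty)$, exactly as in the paper. The only cosmetic difference is that you verify the identity directly from \eqref{perturbed_x} rather than via the rewritten form \eqref{perturbed_2}, and you additionally note the degenerate case $\bar{x}_{\hat{C}_1}=\bar{x}_{\hat{C}_2}$, which the paper leaves implicit.
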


Consequently, the work in \cite[Section 3]{Gao} can be applied here to characterize the set \eqref{set_S_hat} and, therefore, to compute the $p$-value defined in \eqref{pvalue_V}. An explicit characterization of \eqref{set_S_hat} is possible when $\mathcal{C}$ is a hierarchical clustering algorithm with squared Euclidean distance, along with either single linkage or a linkage satisfying a linear Lance-Williams update \cite[Equation 20]{Gao}, e.g. average, weighted, Ward, centroid or median linkage.
 The efficient computation of $\eqref{set_S_hat}$ can also be extended to $k$-means clustering using the work in~\cite{chen2022selective}, as shown in Appendix~\ref{sec:finer_cond}. Otherwise, the $p$-value \eqref{pvalue_V} can be approximated with a Monte Carlo procedure, adapting the importance sampling approach presented in \cite[Section 4.1]{Gao}. Following the same notation, we sample $\omega_1,\ldots,\omega_N\sim\mathcal{N}\left(\norm{\mathbf{x}^T\nu}_{\mathbf{V}_{\mathcal{G}_1,\mathcal{G}_2}},1\right)$ i.i.d. and approximate \eqref{pvalue_V} as
\begin{equation}\label{monte_carlo_pv}
    p_{\mathbf{V}_{\mathcal{G}_1,\mathcal{G}_2}}(\mathbf{x};\lbrace \mG_1,\mG_2\rbrace) \approx  \frac{\sum_{i=1}^N \,\pi_i\,\mathds{1}\left\lbrace \omega_i\geq \norm{\mathbf{x}^T\nu}_{\mathbf{V}_{\mathcal{G}_1,\mathcal{G}_2}}, \mG_1,\mG_2\in\mathcal{C}(\mathbf{x}'(\omega_i))\right\rbrace }{ \sum_{i=1}^N \,\pi_i\,\mathds{1}\left\lbrace \mG_1,\mG_2\in\mathcal{C}(\mathbf{x}'(\omega_i))\right\rbrace},
\end{equation}
for $\pi_i=f_1(\omega_i)/f_2(\omega_i)$, where $f_1$ is the density of a $\chi_p$ random variable, and $f_2$ is the density of a $\mathcal{N}(\norm{\mathbf{x}^T\nu}_{\mathbf{V}_{\mathcal{G}_1,\mathcal{G}_2}},1)$ random variable.

\subsubsection{General dependence between observations}\label{sec:general_known_U}

The goal of this section is to study whether a quantity of the form~\eqref{pvalue_gao} can be $(i)$ efficiently characterized in terms of a known distribution and/or $(ii)$ used as a $p$-value for~\eqref{h0}, in the case where the restriction $\mathbf{U}\in\mathcal{CS}(n)$ is not necessarily satisfied. Following from Proposition~\ref{prop_conditions}(i), this means that the projection $\boldsymbol{\pi}_\nu^\perp\mathbf{X}$ is not independent of $\mathbf{X}^T\nu$ in general and, therefore, that the distribution of interest for the definition of the test statistic is not that of $\mathbf{X}^T\nu$, but rather that of the conditioned vector:
\begin{equation}\label{cond_mean_vector}
    \bar{\bX}_\nu(\mathbf{x}):=\mathbf{X}^T\nu\,\bigl\vert\,\lbrace \boldsymbol{\pi}_\nu^\perp\mathbf{X}=\boldsymbol{\pi}_\nu^\perp\mathbf{x},\,\mathrm{dir}(\mathbf{X}^T\nu)=\pm\mathrm{dir}(\mathbf{x}^T\nu)\rbrace,\quad\textrm{for }\mathbf{x}\in\mathbb{R}^{n\times p}.
\end{equation}
Adding the $\pm$ symbol allows to express the conditioning set as a linear constraint, which is more suitable for Gaussian processes. Then, the distribution of $\bX^T\nu$ conditioned on the original conditioning set can be recovered by truncating the density function of $\bar{\bX}_\nu(\mathbf{x})$ to the half space $\{\by \in \R^{p}\,:\, \langle \by,\, \bx^T\nu \rangle\, \geq 0\}$. The null distribution of~\eqref{cond_mean_vector} is derived in the next result. In what follows, we will denote by $\mathbf{A}^\dagger$ the Moore-Penrose pseudo-inverse of a matrix $\mathbf{A}$.
\begin{theorem}\label{th:null_dist_XC} Let $\mathcal{C}$ be a clustering algorithm and $\mathbf{x}$ a realization of $\mathbf{X}\sim\mathcal{MN}_{n\times p}(\boldsymbol{\mu},\mathbf{U},\mathbf{\Sigma})$. Then, under~\eqref{h0},
\begin{equation}\label{null_dist_xcond}
    \bar{\bX}_\nu(\mathbf{x})\sim\mathcal{N}_p\left(0,\boldsymbol{\Gamma}_{\mathbf{x}}\right),
\end{equation}
with
\begin{equation}\label{gamma_x}
    \boldsymbol{\Gamma_{\mathbf{x}}} = (\mathbf{I}_p\otimes \nu^T)(\bSigma \otimes \bU-(\bSigma \otimes \bU) {\bA_\bx^T}({\bA_\bx}( \bSigma  \otimes \bU) {\bA_\bx^T})^\dagger{\bA_\bx}(\bSigma \otimes \bU))(\mathbf{I}_p\otimes \nu),
\end{equation}
where $\mathbf{A}_\mathbf{x}$ is a $2np\times p$ matrix given by:
\begin{equation}
{\bA_\bx} =
\begin{bNiceArray}{c}
\boldsymbol{\pi}_{\mathbf{x}_\nu}^\perp
(\bI_p \otimes \boldsymbol{\pi}_\nu) \\
\bI_p\otimes\boldsymbol{\pi}_\nu^\perp
\end{bNiceArray},     
\end{equation}
with $\boldsymbol{\pi}_{\nu}=\mathbf{I}_n-\boldsymbol{\pi}_\nu^\perp$, $\mathbf{x}_\nu=\mathrm{vec}(\boldsymbol{\pi}_{\nu}\mathbf{x})$ and $\boldsymbol{\pi}_{\mathbf{x}_\nu}^\perp=\mathbf{I}_{np}-\mathbf{x}_\nu^T\mathbf{x}_\nu/\norm{\mathbf{x}_\nu}_2^2$. 
\end{theorem}

The proof of Theorem~\ref{th:null_dist_XC}, presented in Appendix~\ref{proofs_1_general}, proceeds in two main steps. First, we express the conditioning set in~\eqref{cond_mean_vector} as a linear constraint, which allows us to apply Proposition 3.13 in~\cite{Eaton2007}, characterizing the distribution of Gaussian vectors $z$ conditioned to events of the form $\lbrace\bA z = y\rbrace$. The structure of the conditioned covariance and mean matrices motivates a detailed analysis of some specific matrix families (Lemma~\ref{lemma:proj:prod:spaces:general} and Corollary~\ref{cor:inclusion:prod:space}). The corresponding results allow us to prove that $\bar{\bX}_\nu(\bx)$ is centered under~\eqref{h0} for all $\bx\in\mathbb{R}^{n\times p}$.

Following from~\eqref{null_dist_xcond}, in order to define a quantity of the form~\eqref{pvalue_gao}, the same reasoning as in the previous sections leads us to consider the following norm based on the covariance matrix~\eqref{gamma_x}:
\begin{equation}\label{norm_gamma_x}
    \norm{v}_{\boldsymbol{\Gamma}_{\mathbf{x}}}=\sqrt{v^T\boldsymbol{\Gamma}_\bx^\dagger v},\quad\forall\,v\in\mathbb{R}^p,
\end{equation}
for any $\mathbf{x}\in\mathbb{R}^{n\times p}$, where we have considered the generalized inverse of~\eqref{gamma_x} as this matrix is not full-rank. This leads us to define the quantity:
\begin{align}\label{pvalue_gamma}
    p_{\bGamma}(\mathbf{x};\lbrace \mG_1,\mG_2\rbrace)=\mathbb{P}_{H_0^{\lbrace \mG_1,\mG_2 \rbrace}}\Bigl(\norm{\diffXclst}_{\bGamma_\bx}\geq \norm{\diffxclst}_{\bGamma_\bx}\,\, \Bigr\rvert \,\, \mG_1,\mG_2\in\mathcal{C}(\mathbf{X}), \nonumber\\ \boldsymbol\pi_{\nu}^\perp \mathbf{X}=\boldsymbol\pi_{\nu}^\perp \mathbf{x} \, ,\, \mathrm{dir}\left(\diffXclst\right)=\pm\mathrm{dir}\left(\diffxclst\right)\Bigr),\tag{p-Gamma}
\end{align}
as a candidate $p$-value for~\eqref{h0} under~\eqref{model}. The previous quantity has an unusual form for a $p$-value, since the test statistic $\norm{\diffXclst}_{\bGamma_\bx}$ depends on the realization $\mathbf{x}$. However, the following result shows that its distribution under~\eqref{h0} \textit{is almost surely independent of} $\mathbf{x}$, yielding an efficient characterization of~\eqref{pvalue_gamma}. Its proof is presented in Appendix~\ref{proofs_1_general}.

\begin{prop}\label{prop:gamma_chi} In the conditions of Theorem~\ref{th:null_dist_XC}, the quantity $\norm{\bar{\bX}_\nu(\mathbf{x})}_{\bGamma_\bx}$ follows $\mathbf{x}$-a.s. a $\chi_1$ distribution under~\eqref{h0}.  Moreover, the quantity~\eqref{pvalue_gamma} can be written as:
\begin{equation}\label{pvalue_gamma_chi1}
    p_{\bGamma}(\mathbf{x};\lbrace \mG_1,\mG_2\rbrace) = 1-\mathbb{F}_1\left(\norm{\mathbf{x}^T\nu}_{\bGamma_\bx},\,\mathcal{S}_{\bGamma_\bx}(\mathbf{x};\lbrace \mathcal{G}_1,\mathcal{G}_2\rbrace)\right),
\end{equation}
for any $(\mG_1,\mG_2)\in\mathcal{C}_{[n]}$, where $\mathbb{F}_1(t,\mathcal{S})$ is the cumulative distribution function of a $\chi_1$ random variable truncated to the set $\mathcal{S}$ and
\begin{equation}\label{S_gamma}
 \mathcal{S}_{\bGamma_\bx}(\mathbf{x};\lbrace \mathcal{G}_1,\mathcal{G}_2\rbrace)) = \frac{\norm{\mathbf{x}^T\nu}_{\bGamma_\bx}}{\norm{\mathbf{x}^T\nu}_2}\,\lbrace \phi \geq 0\,:\,\mG_1,\mG_2\in\mathcal{C}(\mathbf{x}'(\pm\phi))\rbrace,
\end{equation}
where $\mathbf{x}'(\phi)$ is the perturbed data set defined in~\eqref{perturbed_2}.
\end{prop}

The previous result allows the efficient computation of~\eqref{pvalue_gamma} in terms of a $\chi_1$ distribution. Equation~\eqref{pvalue_gamma_chi1} is the counterpart of~\cite[Equation (9)]{Gao} and~\eqref{pvalue_V} for the most general case, where~\eqref{model} holds with arbitrary $\bU$. However, although the quantity~\eqref{pvalue_gamma} is the natural extension of~\eqref{pvalue_gao} in this context, and can be efficiently characterized via~\eqref{pvalue_gamma_chi1}, assessing whether it controls the selective type I error is a challenging problem. More precisely, the null distribution of the conditioned vector~\eqref{cond_mean_vector} depends on the realization $\bx$ and, consequently, the norm required to ensure that the test statistic is distribution-free is also dependent on $\bx$. As a consequence, \eqref{pvalue_gamma} compares two quantities that behave differently under~\eqref{h0}. Indeed, in order to assess whether~\eqref{sel_typeI} is satisfied, it is necessary to understand the behavior of the null distribution of $\norm{\bar{\bX}_\nu(\bx)}_{\Gamma_\bX}^2=\bar{\bX}_\nu(\bx)^T\Gamma_{\bX}^\dagger\bar{\bX}_\nu(\bx)$, which is a nontrivial problem. Nevertheless, since Proposition~\ref{prop:gamma_chi} allows the computation of~\eqref{pvalue_gamma} in practice, we are able to illustrate numerically that the quantity~\eqref{pvalue_gamma_chi1} does not control the selective type I error for several $\bU\notin{CS}(n)$ structures. We present these simulations in Appendix~\ref{app:simulations_gamma}.

As we further discuss in Section~\ref{sec:discussion}, the analyses presented above suggest that defining a tractable $p$-value of the form~\eqref{pvalue_gao} that ensures the selective type I error control requires the conditioning on events that are \textit{independent} of the test statistic. Following from Proposition~\ref{prop_conditions}, this is ensured if and only if the covariance structure between observations has a compound symmetry structure and the norm~\eqref{norm_V} is used to the define the $p$-value.

\section{Unknown dependence structures}\label{sec:unknown_sigma}

The selective inference framework introduced for~\eqref{model} in Section~\ref{sec:sel_clustering} assumes that both scale matrices $\mathbf{U}$ and $\mathbf{\Sigma}$ are known, which is a quite unrealistic scenario. Under the independence assumption made in \cite{Gao}, where $\mathbf{\Sigma}=\sigma^2\mathbf{I}_p$ and $\mathbf{U}=\mathbf{I}_n$, the authors showed in Theorem 4 that over-estimating $\sigma$ yields asymptotic control of the selective type I error, and provided such an estimator $\hat{\sigma}$ that can be used in practice.

The simultaneous estimation of $\mathbf{U}$ and $\boldsymbol{\Sigma}$ from a single copy of $\mathbf{X}$ is a challenging task due to the intrinsic limitations of the matrix normal model. The non-identifiability of both matrices under~\eqref{model} makes their existing estimators interdependent. Besides, multiple realizations of $\mathbf{X}$ are needed to ensure their existence and uniqueness~\cite{MLE}. The same goes for the estimation of $\mathbf{U}\otimes\boldsymbol{\Sigma}$, that fully determines the covariance structure of $\mathbf{X}$~\cite{Soloveychik2016Jul,Drton2021Oct,Derksen2021, Drton2024Jan}, even when $\mathbf{U}$ is restricted to the class $\mathcal{CS}(n)$~\cite{Ahmed2015Oct}. All of this hinders the estimation of both covariance matrices from a single copy of $\mathbf{X}$. Furthermore, we not only require \textit{any} estimator of $\mathbf{U}$ and $\boldsymbol{\Sigma}$ but one which is compatible with the selective type I error control. Consequently, we opt to investigate the situation where only one of the scale matrices is known, and assess theoretical conditions that allow asymptotic control of the selective type I error when estimating the other one. We also provide an estimator that satisfies these conditions for some common dependence models.

Let us recall that, for the model \eqref{model}, we have
\begin{equation}\label{transposing_X}
     \mathbf{X}\sim\mathcal{M}\mathcal{N}_{n\times p}(\boldsymbol\mu,\mathbf{U},\mathbf{\Sigma})\Leftrightarrow \mathbf{X}^T\sim\mathcal{M}\mathcal{N}_{p\times n}(\boldsymbol\mu^T,\mathbf{\Sigma},\mathbf{U}).
\end{equation}
Therefore, the methods presented in this section can be equally applied to estimate $\mathbf{U}$ or $\mathbf{\Sigma}$ when the other is known, by transposing $\mathbf{X}$ if needed. From now on, we assume that the dependence structure between observations $\mathbf{U}$ is known, and study under which conditions we can suitably estimate $\mathbf{\Sigma}$. In Section~\ref{sec:estimation_CS}, we focus on the case where a computationally tractable $p$-value can be defined according to Theorem~\ref{th:pvalue_V}, assessing the applicability of~\eqref{pvalue_F} when $\boldsymbol{\Sigma}$ is estimated with $\mathbf{U}\in\mathcal{CS}(n)$. Since the robustness of~\eqref{pvalue_F} to $\mathbf{U} \notin\mathcal{CS}(n)$ will be numerically studied, in Section~\ref{sec:estimation_not_CS} we explore the theoretical guarantees that can be provided in that case regarding the estimation of $\boldsymbol{\Sigma}$.

\subsection{Compound symmetry covariance between observations}\label{sec:estimation_CS}

Let $\hat{\mathbf{\Sigma}}(\mathbf{x})$ be an estimate of $\mathbf{\Sigma}$ for a given realization $\mathbf{x}$ of $\mathbf{X}$. Following from Theorem~\ref{th:pvalue_V}, the $p$-value~\eqref{pvalue_V} has the closed form~\eqref{pvalue_F} if $\mathbf{U}\in\mathcal{CS}(n)$. In that case, the estimation of $\mathbf{\Sigma}$ comes down to studying under which conditions the $p$-value
\begin{equation}\label{hat_pvalue}\tag{hat-p-tract}
p_{\hat{\mathbf{V}}_{\mathcal{G}_1,\mathcal{G}_2}}(\mathbf{x};\lbrace \mathcal{G}_1,\mathcal{G}_2\rbrace)= 1-\mathbb{F}_p\left(\norm{\mathbf{x}^T\nu}_{\hat{\mathbf{V}}_{\mathcal{G}_1,\mathcal{G}_2}};\,\mathcal{S}_{\hat{\mathbf{V}}_{\mathcal{G}_1,\mathcal{G}_2}}(\mathbf{x};\lbrace \mathcal{G}_1,\mathcal{G}_2\rbrace)\right), 
\end{equation}
where $\hat{\mathbf{V}}_{\mathcal{G}_1,\mathcal{G}_2}=\nu^T\mathbf{U}\nu \hat{\mathbf{\Sigma}}(\mathbf{x})$, controls the selective type I error. Theorem~\ref{th:over_estimate} below generalizes Theorem~4 in \cite{Gao} for the estimation of $\mathbf{\Sigma}$ under the model \eqref{model} by relying on the Loewner partial order, defined below. The proof is included in Appendix~\ref{proofs_2}.

\begin{definition}[Definition 7.7.1 in \cite{horn2013matrix}]\label{def:loewner} For two square matrices of equal size $A,B$, we write $A\succeq B$ if and only if $A,B$ are Hermitian and $A-B$ is positive semidefinite. This binary relation between square matrices is called the Loewner partial order.
\end{definition}

 \begin{theorem}\label{th:over_estimate} For $n\in\mathbb{N}$, let $\mathbf{X}^{(n)}\sim\mathcal{M}\mathcal{N}_{n\times p}(\boldsymbol\mu^{(n)},\mathbf{U}^{(n)},\mathbf{\Sigma})$ with $\mathbf{U}^{(n)} = (a-b)\mathbf{I}_n+b\mathbf{1}_{n\times n}$ for some $a>b>0$. Let $\mathbf{x}^{(n)}$ be a realization of $\mathbf{X}^{(n)}$ and $\mG_1^{(n)},\mG_2^{(n)}$ a pair of clusters estimated from $\mathbf{x}^{(n)}$. If $\hat{\mathbf{\Sigma}}\left(\mathbf{X}^{(n)}\right)$ is a positive definite estimator of $\mathbf{\Sigma}$ such that
\begin{equation}\label{sigma_overestimate_condition}\tag{over-est}
    \underset{n\rightarrow\infty}{\lim}\mathbb{P}_{H_0^{\left\lbrace \mG_1^{(n)}, \mG_2^{(n)}\right\rbrace}}\left(\hat{\mathbf{\Sigma}}\Bigl(\mathbf{X}^{(n)}\Bigr)\succeq \mathbf{\Sigma}\,\bigg\rvert\,\mG_1^{(n)},\mG_2^{(n)}\in\mathcal{C}\Bigl(\mathbf{X}^{(n)}\Bigr)\right)=1,
\end{equation}
then,
\begin{equation}\label{superuniformity}
    \underset{n\rightarrow\infty}{\limsup}\,\mathbb{P}_{H_0^{\left\lbrace \mG_1^{(n)}, \mG_2^{(n)}\right\rbrace}}\left(p_{\hat{\mathbf{V}}_{\mG_1^{(n)},\mG_2^{(n)}}}\left(\mathbf{X}^{(n)};\left\lbrace \mG_1^{(n)},\mG_2^{(n)}\right\rbrace\right)\leq\alpha\,\bigg\rvert\,\mG_1^{(n)},\mG_2^{(n)}\in\mathcal{C}\Bigl(\mathbf{X}^{(n)}\Bigr)\right)\leq\alpha,
\end{equation}
for any $\alpha\in[0,1]$.
\end{theorem}

Note that the Loewner partial order is a natural extension to Hermitian matrices of the usual order in $\mathbb{R}$. If we replace $\mathbf{\Sigma}$ by $\sigma^2\mathbf{I}_p$ in Theorem~\ref{th:over_estimate}, the condition $\hat{\mathbf{\Sigma}}\succeq\mathbf{\Sigma}$ becomes $\hat{\sigma}\geq\sigma$, as in \cite[Theorem 4]{Gao}. We aim now at providing an estimator of $\mathbf{\Sigma}$ satisfying condition~\eqref{sigma_overestimate_condition}. The asymptotic properties of such an estimator strongly depend on the asymptotic dependence structure between observations, given by the sequence of matrices $\lbrace\mathbf{U}^{(n)}\rbrace_{n\in\mathbb{N}}$ of Theorem~\ref{th:over_estimate}. First, let us consider
\begin{equation}\label{estSigma}\tag{hat-Sigma}
    \hat{\mathbf{\Sigma}} = \hat{\mathbf{\Sigma}}\left(\mathbf{X}\right)=\frac{1}{n-1}\left(\mathbf{X}-\bar{\mathbf{X}}\right)^T\mathbf{U}^{-1}\left(\mathbf{X}-\bar{\mathbf{X}}\right),
\end{equation}
where $\bar{\mathbf{X}}$ is a $n\times p$ matrix having as rows the mean across rows of $\mathbf{X}$, i.e.
\begin{equation}
    \bar{\mathbf{X}} = \mathbf{1}_n \otimes \frac{1}{n}\sum_{k=1}^n X_{k},
\end{equation}
where $\mathbf{1}_n$ is a column $n$-vector of ones. Note that \eqref{estSigma} is constructed by first de-correlating the observations using $\mathbf{U}$, then subtracting off the column means and finally taking the sample covariance matrix. Following \cite[Corollary 2.3.10.2]{Gupta2018}, subtracting off the true mean matrix $\boldsymbol{\mu}$ instead of $\bar{\mathbf{X}}$ would lead to a consistent estimator without making any assumption on $\mathbf{U}$, as the rows of $\mathbf{U}^{-\frac{1}{2}}(\mathbf{X}-\boldsymbol{\mu})$ are $n$ i.i.d. copies of a $p$-dimensional centered Gaussian vector of covariance matrix $\mathbf{\Sigma}$. However, $\boldsymbol{\mu}$ needs to be considered unknown in the context of clustering analysis. Note also that the estimator $\hat{\mathbf{\Sigma}}$ is a positive definite matrix if the matrix $\mathbf{X}-\Bar{\mathbf{X}}$ has full rank. In order to ensure that \eqref{estSigma} satisfies condition~\eqref{sigma_overestimate_condition}, some additional assumptions regarding the asymptotic behavior of the matrices $\boldsymbol{\mu}^{(n)}$ are required.

\begin{assumption}[Assumptions 1 and 2 in \cite{Gao}]\label{as_1} For all $n\in\mathbb{N}$, there are exactly $K^*$ distinct mean vectors among the first $n$ observations, i.e.
\begin{equation}
 \left\lbrace \mu_i^{(n)}\right\rbrace_{i=1,\ldots,n}=\lbrace \theta_1,\ldots,\theta_{K^*}\rbrace.   
\end{equation}
Moreover, the proportion of the first $n$ observations that have mean vector $\theta_k$ converges to $\pi_k>0$, i.e.
\begin{equation}\label{convergence_indep}
    \underset{n\rightarrow\infty}{\lim}\frac{1}{n}\sum_{i=1}^n \mathds{1}\lbrace \mu_i^{(n)}=\theta_k\rbrace = \pi_k,
\end{equation}
for all $k\in\lbrace1,\ldots,K^{*}\rbrace$, where $\sum_{k=1}^{K^*}\pi_k=1$.
\end{assumption}
If observations are independent, Assumption~\ref{as_1} is the only requirement for \eqref{estSigma} to asymptotically over-estimate $\mathbf{\Sigma}$ in the sense of Theorem~\ref{th:over_estimate}. 
For non-diagonal $\mathbf{U}^{(n)}$, the following condition on $\lbrace\boldsymbol\mu^{(n)}\rbrace_{n\in\mathbb{N}}$ needs to be assumed.

\begin{assumption}\label{as_3} If $\mathbf{U}^{(n)}$ is non-diagonal for all $n\in\mathbb{N}$, for any $k,k'\in\lbrace 1,\ldots,K^*\rbrace$, the proportion of the first $n$ observations at distance $r\geq 1 $ in $\mathbf{X}^{(n)}$ having means $\theta_k$ and $\theta_{k'}$ converges, and its limit converges to $\pi_{k}\pi_{k'}$ when the lag $r$ tends to infinity. More precisely,
    \begin{equation}\label{mixing}
        \underset{n\rightarrow\infty}{\lim}\frac{1}{n}\sum_{i=1}^{n-r}\mathds{1}\lbrace\mu_i^{(n)}=\theta_k\rbrace\,\mathds{1}\lbrace\mu_{i+r}^{(n)}=\theta_{k'}\rbrace = \pi_{kk'}^r\underset{r\rightarrow\infty}{\longrightarrow}\pi_k\,\pi_{k'}.
    \end{equation}
\end{assumption}
Note that we are requiring the proportion of pairs of observations having a given a pair of means to approach the product of individual proportions \eqref{convergence_indep} when both observations are far away in $\mathbf{X}^{(n)}$. Assumption \ref{as_3} can be alternatively formulated in terms of strong mixing of measure-preserving dynamical systems~\cite[Chapter 20]{Klenke}. This is proved in Appendix~\ref{proofs_2}.

If $\mathbf{U}^{(n)}$ is compound symmetry for fixed $a>b>0$ and Assumptions \ref{as_1} and \ref{as_3} hold for a given sequence $\lbrace \boldsymbol\mu^{(n)}\rbrace_{n\in\mathbb{N}}$, the following result ensures that $\hat{\mathbf{\Sigma}}$ asymptotically over-estimates (in the sense of the Loewner partial order) the dependence structure $\mathbf{\Sigma}$ between features.

\begin{prop}\label{prop_overestimate} Let $\mathbf{X}^{(n)}\sim\mathcal{M}\mathcal{N}_{n\times p}(\boldsymbol\mu^{(n)},\mathbf{U}^{(n)},\mathbf{\Sigma})$, where $\mathbf{U}^{(n)}=(a-b)\mathbf{I}_n+b\mathbf{1}_{n\times n}$ for some $a>b>0$ and $\boldsymbol\mu^{(n)}$ satisfies Assumptions~\ref{as_1} and \ref{as_3} for some $K^*>1$. Let $\hat{\mathbf{\Sigma}}$ be the estimator defined in \eqref{estSigma}. Then,
\begin{equation}\label{asymp_overest_CS}
    \underset{n\rightarrow\infty}{\lim}\,\mathbb{P}\left(\hat{\mathbf{\Sigma}}\Bigl(\mathbf{X}^{(n)}\Bigr)\succeq \mathbf{\Sigma}\right)=1.
\end{equation}
\end{prop}

Finally, it suffices to estimate $\mathbf{\Sigma}$ using an independent and identically distributed copy of $\mathbf{X}^{(n)}$ to have \eqref{sigma_overestimate_condition} provided~\eqref{asymp_overest_CS} holds. Such a copy is sometimes available in practical applications, as the one we present in Section~\ref{sec:proteins}. Combining this observation with Proposition~\ref{prop_overestimate}, we obtain our final result:

\begin{prop}\label{prop_indcopy} Let $\mathbf{X}^{(n)}\sim\mathcal{M}\mathcal{N}_{n\times p}(\boldsymbol\mu^{(n)},\mathbf{U}^{(n)},\mathbf{\Sigma})$, where $\mathbf{U}^{(n)}=(a-b)\mathbf{I}_n+b\mathbf{1}_{n\times n}$ for some $a>b>0$ and $\boldsymbol\mu^{(n)}$ satisfies Assumptions~\ref{as_1} and \ref{as_3} for some $K^*>1$. Let $\mathbf{x}^{(n)}$ be a realization of $\mathbf{X}^{(n)}$ and $\mG_1^{(n)}$, $\mG_2^{(n)}$ a pair of clusters estimated from $\mathbf{x}^{(n)}$. Let $\mathbf{Y}^{(n)}$ be an independent and identically distributed copy of $\mathbf{X}^{(n)}$. Then, the estimator $\hat{\mathbf{\Sigma}}\left(\mathbf{Y}^{(n)}\right)$ defined in \eqref{estSigma} satisfies the conditions of Theorem~\ref{th:over_estimate}, i.e.
\begin{equation}
       \underset{n\rightarrow\infty}{\lim}\mathbb{P}_{H_0^{\left\lbrace \mG_1^{(n)}, \mG_2^{(n)}\right\rbrace}}\left(\hat{\mathbf{\Sigma}}\Bigl(\mathbf{Y}^{(n)}\Bigr)\succeq \mathbf{\Sigma}\,\bigg\rvert\,\mG_1^{(n)},\mG_2^{(n)}\in\mathcal{C}\Bigl(\mathbf{X}^{(n)}\Bigr)\right)=1.
\end{equation}
\end{prop}

Following from the previous result and from Theorem~\ref{th:over_estimate}, if Assumptions~\ref{as_1} and~\ref{as_3} hold and $\mathbf{U}\in\mathcal{CS}(n)$, selective type I error is asymptotically controlled when using~\eqref{estSigma} to estimate $\mathbf{\Sigma}$. This constitutes an extension of the over-estimation framework presented in~\cite{Gao}, which holds under model~\eqref{model_gao}, to the more general~\eqref{model} with compound symmetry dependence between observations.

\subsection{Arbitrary covariance between observations}\label{sec:estimation_not_CS}

In Section~\ref{sec:estimation_CS}, we proved that $p$-values~\eqref{hat_pvalue} are asymptotically super-uniform under~\eqref{h0} if, besides Assumptions~\ref{as_1} and~\ref{as_3}, the following conditions hold:
\begin{itemize}
    \item[(a)] The $p$-value~\eqref{pvalue_F} (for known $\mathbf{\Sigma}$) is uniformly distributed under~\eqref{h0} or, equivalently, $\mathbf{U}\in\mathcal{CS}(n)$,
    \item[(b)] The estimator~\eqref{estSigma} satisfies~\eqref{sigma_overestimate_condition}.
\end{itemize}

However, as it will be numerically illustrated in Section~\ref{sec:robustness}, the null uniformity of~\eqref{pvalue_F} is robust to $\mathbf{U}$ structures that do not fit in $\mathcal{CS}(n)$. Consequently, the null super-uniformity of~\eqref{hat_pvalue} will be robust to $\mathbf{U}\notin\mathcal{CS}(n)$ as long as $(b)$ is satisfied. In this section, we investigate the theoretical conditions that need to be imposed to an arbitrary sequence $\lbrace \mathbf{U}^{(n)}\rbrace_{n\in\mathbb{N}}$ so that the estimator~\eqref{estSigma} satisfies~\eqref{sigma_overestimate_condition}. To that end, besides Assumptions~\ref{as_1} and~\ref{as_3}, the quantities
\begin{equation}\label{convergence_U}
    \frac{1}{n}\sum_{l,s=1}^n \Bigl(U^{(n)}\Bigr)^{-1}_{ls}\,\mathds{1}\lbrace \mu_l^{(n)}=\theta_k\rbrace\,\mathds{1}\lbrace \mu_s^{(n)}=\theta_{k'}\rbrace
\end{equation}
are also required to converge. Furthermore, we need to know their limit explicitly to assess~\eqref{sigma_overestimate_condition}. Below, we state sufficient conditions on the sequence $\lbrace \mathbf{U}^{(n)}\rbrace_{n\in\mathbb{N}}$ that -together with Assumptions~\ref{as_1} and \ref{as_3}- ensure the convergence of \eqref{convergence_U} to a tractable limit. Note that these technical assumptions can be difficult to verify for a given model of dependence, and other unknown sufficient conditions might guarantee that \eqref{estSigma} asymptotically over-estimates $\mathbf{\Sigma}$. This point is investigated numerically in Section~\ref{sec:no_ad_U}.

\begin{assumption}\label{as_2} Let $\lbrace \mathbf{U}^{(n)}\rbrace_{n\in\mathbb{N}}$ be a sequence of real positive definite matrices, and let $\left(U^{(n)}\right)^{-1}_{ij}$ denote the $i,j$ entry of $\left(\mathbf{U}^{(n)}\right)^{-1}$ for any $n\in\mathbb{N}$. Then, every superdiagonal of $\left(\mathbf{U}^{(n)}\right)^{-1}$ defines asymptotically a convergent sequence, whose limits sum up to a real value. More precisely, for any $i\in\mathbb{N}$ and any $r\geq 0$,
\begin{equation}\label{as_2_eq}
\underset{n\rightarrow\infty}{\lim}\,\Bigl(U^{(n)}\Bigr)^{-1}_{i\,i+r}=\Lambda_{i\,i+r},\,\quad\mathrm{where}\quad\underset{i\rightarrow \infty}{\lim}\Lambda_{i\,i+r} = \lambda_r\quad\mathrm{and}\quad\sum_{r = 0}^\infty\lambda_r=\lambda\in\mathbb{R}.
\end{equation}
Moreover, for each $r\geq 0$ any of the following conditions are satisfied:
\begin{itemize}
    \item[$(i)$] It exists a sequence $\lbrace\alpha_i\rbrace_{i=1}^{\infty}\in\ell_1$ such that $\left\lvert \left(U^{(n)}\right)^{-1}_{i\,i+r} - \Lambda_{i\,i+r} \right\rvert\leq \alpha_i$ for all $n\in\mathbb{N}$,
    \item[$(ii)$] For each $i\in\mathbb{N}$, the sequence $\lbrace (U^{(n)})^{-1}_{i\,i+r}\rbrace_{n\in\mathbb{N}}$ is non-decreasing or  non-increasing.
\end{itemize}
\end{assumption}

Note that Assumptions~\ref{as_3} and \ref{as_2} implicitly require an ordering of the observations in $\mathbf{X}$. More precisely, they require the existence of a permutation of the rows in $\mathbf{X}$ such that their conditions are satisfied. The following result generalizes Proposition~\ref{prop_overestimate} to arbitrary sequences $\lbrace \mathbf{U}^{(n)}\rbrace_{n\in\mathbb{N}}$.

\begin{prop}\label{prop_overestimate_general} Let $\mathbf{X}^{(n)}\sim\mathcal{M}\mathcal{N}_{n\times p}(\boldsymbol\mu^{(n)},\mathbf{U}^{(n)},\mathbf{\Sigma})$, where $\boldsymbol\mu^{(n)}$ and $\mathbf{U}^{(n)}$ satisfy Assumptions~\ref{as_1}, \ref{as_3} and \ref{as_2} for some $K^*>1$. Let $\hat{\mathbf{\Sigma}}$ be the estimator defined in \eqref{estSigma}. Then,
\begin{equation}\label{asymp_overest}
    \underset{n\rightarrow\infty}{\lim}\,\mathbb{P}\left(\hat{\mathbf{\Sigma}}\Bigl(\mathbf{X}^{(n)}\Bigr)\succeq \mathbf{\Sigma}\right)=1.
\end{equation}
\end{prop}
As a consequence, Proposition~\ref{prop_indcopy} directly holds for arbitrary $\lbrace \mathbf{U}^{(n)}\rbrace_{n\in\mathbb{N}}$ if Assumption~\ref{as_2} is added to its hypotheses. Our proof of Proposition~\ref{prop_overestimate_general} relies on the following Lemma, which makes use of Assumptions \ref{as_1}, \ref{as_3} and \ref{as_2} explicitly. Both results are proved in Appendix~\ref{proofs_2}.

\begin{lemma}\label{lemma_prop} Let $\mathbf{X}^{(n)}\sim\mathcal{M}\mathcal{N}_{n\times p}(\boldsymbol\mu^{(n)},\mathbf{U}^{(n)},\mathbf{\Sigma})$, where $\boldsymbol\mu^{(n)}$ and $\mathbf{U}^{(n)}$ satisfy Assumptions~\ref{as_1}, \ref{as_3} and \ref{as_2} for some $K^*>1$. Then,
\begin{equation}\label{lemma_eq}
    \underset{n\rightarrow\infty}{\lim}\,\frac{1}{n}\sum_{l,s=1}^n \left(U^{(n)}\right)^{-1}_{ls}\,\mathds{1}\lbrace \mu_l^{(n)}=\theta_k\rbrace\,\mathds{1}\lbrace \mu_s^{(n)}=\theta_{k'}\rbrace=2(\lambda-\lambda_0)\pi_{k}\pi_{k'}+\lambda_0\pi_k\delta_{kk'},
\end{equation}
for any $k,k'\in\lbrace1,\ldots,K'\rbrace$, and where $\pi_k,\pi_{k'}$ and $\lambda_0,\lambda$ are defined in Assumptions \ref{as_1} and \ref{as_2} respectively.  
\end{lemma}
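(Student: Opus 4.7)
The plan is to evaluate the limiting behavior of
\begin{equation*}
T_n := \frac{1}{n}\sum_{l,s=1}^n \bigl(U^{(n)}\bigr)^{-1}_{ls}\,\mathds{1}\{\mu_l^{(n)}=\theta_k\}\,\mathds{1}\{\mu_s^{(n)}=\theta_{k'}\}
\end{equation*}
by splitting the double sum along the diagonal $l=s$ and the off-diagonal $l\neq s$, leveraging the three assumptions in the natural order: Assumption~\ref{as_1} for univariate proportions, Assumption~\ref{as_3} for pairwise proportions, and Assumption~\ref{as_2} for the entries of $(\mathbf{U}^{(n)})^{-1}$. The diagonal part will account for the second term on the right-hand side of \eqref{lemma_eq} and the off-diagonal part for the first.

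For the diagonal contribution, the product of indicators collapses to $\delta_{kk'}\mathds{1}\{\mu_l^{(n)}=\theta_k\}$, so the task reduces to showing
\begin{equation*}
\frac{1}{n}\sum_{l=1}^n \bigl(U^{(n)}\bigr)^{-1}_{ll}\,\mathds{1}\{\mu_l^{(n)}=\theta_k\}\longrightarrow \lambda_0 \pi_k.
\end{equation*}
I would proceed by a two-step Ces\`aro argument: first replace $(U^{(n)})^{-1}_{ll}$ by the limit $\Lambda_{ll}$ using Assumption~\ref{as_2} (with the $\ell_1$-domination or monotonicity condition controlling the error uniformly in $l$ after division by $n$); then replace $\Lambda_{ll}$ by $\lambda_0$, paying a vanishing boundary cost for small $l$ and using $\Lambda_{ll}\to\lambda_0$ for large $l$; and finally conclude with the empirical proportion limit from Assumption~\ref{as_1}.

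For the off-diagonal contribution, symmetry of $(\mathbf{U}^{(n)})^{-1}$ allows one to reindex by the lag $r=|l-s|\geq 1$, so that the quantity becomes
\begin{equation*}
\frac{1}{n}\sum_{r=1}^{n-1}\sum_{i=1}^{n-r}\bigl(U^{(n)}\bigr)^{-1}_{i,i+r}\bigl[\mathds{1}\{\mu_i^{(n)}=\theta_k,\mu_{i+r}^{(n)}=\theta_{k'}\}+\mathds{1}\{\mu_i^{(n)}=\theta_{k'},\mu_{i+r}^{(n)}=\theta_k\}\bigr].
\end{equation*}
For each fixed $r$, the same two-step Ces\`aro argument as above, now combined with the pairwise limit $\pi_{kk'}^r$ of Assumption~\ref{as_3}, should yield that the inner sum divided by $n$ converges to $\lambda_r(\pi_{kk'}^r+\pi_{k'k}^r)$.

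The main obstacle is then the interchange of the limit in $n$ with the outer sum over $r\in\{1,\ldots,n-1\}$, and the identification of the resulting series with $2(\lambda-\lambda_0)\pi_k\pi_{k'}$. I would address this via a truncation at a cutoff $R$: lags $r\leq R$ are treated by the termwise convergence just established, while lags $r>R$ require a uniform-in-$n$ tail control. This is exactly where conditions (i) and (ii) of Assumption~\ref{as_2} are essential: they provide an $\ell_1$-dominated or monotone framework for the entries of $(\mathbf{U}^{(n)})^{-1}$ that, combined with the summability $\sum_{r=0}^\infty\lambda_r=\lambda<\infty$, makes the tail uniformly small. Sending $n\to\infty$ first and then $R\to\infty$, the asymptotic identity $\pi_{kk'}^r+\pi_{k'k}^r\to 2\pi_k\pi_{k'}$ from Assumption~\ref{as_3} together with $\sum_{r\geq 1}\lambda_r=\lambda-\lambda_0$ produces the factor $2(\lambda-\lambda_0)\pi_k\pi_{k'}$. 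Combining the diagonal and off-diagonal contributions yields \eqref{lemma_eq}.
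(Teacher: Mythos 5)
Your proposal follows essentially the same route as the paper's proof: decompose the double sum along the main diagonal and the off-diagonal lags $r$, use a Ces\`aro-type replacement of the entries $\left(U^{(n)}\right)^{-1}_{i\,i+r}$ by their limits $\Lambda_{i\,i+r}$ and then $\lambda_r$ (justified through the $\ell_1$-domination or monotonicity conditions of Assumption~\ref{as_2}), combine with the proportion limits of Assumptions~\ref{as_1} and~\ref{as_3}, and finally identify the series over $r$ with $2(\lambda-\lambda_0)\pi_k\pi_{k'}$ plus the diagonal term $\lambda_0\pi_k\delta_{kk'}$. Your explicit truncation-at-$R$ handling of the interchange of the limit in $n$ with the sum over lags is a slightly more detailed rendering of the step the paper settles by appealing to the same reasoning as its auxiliary convergence result, so the two arguments coincide in substance.
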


Assessing whether a model of dependence satisfies Assumption~\ref{as_2} is not trivial as it requires full knowledge of how the inverse matrices $\left(\mathbf{U}^{(n)}\right)^{-1}$ grow up when the sample size increases. However, we are able to show that Assumption~\ref{as_2} is satisfied for some specific dependence models and, consequently, that selective type I error can be controlled when $\mathbf{\Sigma}$ is over-estimated in such cases. The following remarks are proved in Appendix~\ref{proofs_2}.

\begin{remark}[Compound symmetry]\label{remark_compound} Let $\mathbf{U}^{(n)}=(a-b)\mathbf{I}_n+b\mathbf{1}_{n\times n}$ for some $a>b>0$. Then, $\lbrace\mathbf{U}^{(n)}\rbrace_{n\in\mathbb{N}}$ satisfies Assumption~\ref{as_2}.   
\end{remark}

The compatibility of compound symmetry structures with the over-estimation of $\bSigma$ can be explained within this more general framework: Remark~\ref{remark_compound} and Proposition~\ref{prop_overestimate_general} imply Proposition~\ref{prop_overestimate}. Therefore, we do not provide a direct proof of the latter result. We can also consider the case of independent observations with different variances along features. Note that, if the matrix $\mathbf{X}$ is transposed, any general dependence structure between observations $\mathbf{U}$ can be estimated if independent features with known variances are provided, which is already an important generalization of~\cite{Gao}.

\begin{remark}[Diagonal]\label{remark_diagonal} Let $\mathbf{U}^{(n)}=\mathrm{diag}(\lambda_1,\ldots,\lambda_n)$. If the sequence $\lbrace\lambda_n\rbrace_{n\in\mathbb{N}}$ is convergent, the sequence $\lbrace \mathbf{U}^{(n)}\rbrace_{n\in\mathbb{N}}$ satisfies Assumption~\ref{as_2}.
\end{remark}

We can extend the complexity of $\mathbf{U}^{(n)}$ to auto-regressive covariance structures of any lag. This is mainly thanks to the fact that the inverses of such matrices are tractable and banded, i.e. their non-zero entries are confined to a centered diagonal band. Under model \eqref{model}, assuming that $\mathbf{U}^{(n)}$ is the covariance matrix of an auto-regressive process of order $P$ means that
\begin{equation}\label{ar_p}
    \frac{1}{\sqrt{\Sigma_{jj}}}\,X_{ij}^{(n)}=\frac{1}{\sqrt{\Sigma_{jj}}}\,\sum_{s=1}^P\beta_s\,X_{i-s\,j}^{(n)}+\varepsilon_{i},\quad\forall\,j\in\lbrace1,\ldots,p\rbrace,
\end{equation}
where $\lbrace\varepsilon_{i}\rbrace_{i=1,\ldots,n}$ are i.i.d univariate centered normal variables and $\lbrace\beta_s\rbrace_{s=1,\ldots,P}\subset\mathbb{R}$ are the model coefficients. Then, for any $j\in\lbrace1,\ldots,p\rbrace$, the entries of $\mathbf{U}^{(n)}$ would be given by
\begin{equation}
    U_{ii'}=\mathrm{Cov}\left(\frac{X_{ij}}{\sqrt{\Sigma_{jj}}},\frac{X_{i'j}}{\sqrt{\Sigma_{jj}}}\right),\quad\forall i,i'\in[n],\,\,\forall\,j\in\lbrace1,\ldots,p\rbrace.
\end{equation}
If model \eqref{ar_p} is assumed, the covariance matrix $\mathbf{U}^{(n)}$ and its inverse have a tractable structure. For example, for the simplest auto-regressive process where $P=1$, and the $i$-th observation depends linearly only on the $(i-1)$-th one, the entries of $\mathbf{U}^{(n)}$ have the form $U^{(n)}_{ij}=\sigma^2\rho^{\abs{i-j}}$, for $\sigma>0$. To ensure the the positive definiteness of $\mathbf{U}^{(n)}$, we need $\abs{\rho}<1$ (see the form of eigenvalues in \cite{TRENCH}). This is equivalent to ask the the process to be stationary. Then, the inverse of $\mathbf{U}^{(n)}$ is a tridiagonal matrix of the form

\begin{equation}\label{mat_ar1}
\left(\mathbf{U}^{(n)}\right)^{-1}= \frac{1}{\sigma^2(1-\rho^2)}
\begin{pmatrix}
1 & -\rho &  \\
-\rho & 1+ \rho^2 & -\rho &  \\
  & -\rho & \ddots & \ddots  \\
  & & \ddots & 1 + \rho^2 & -\rho \\
  & & & -\rho & 1
\end{pmatrix}.
\end{equation}

The super and sub-diagonals trivially satisfy condition $(i)$ in Assumption~\ref{as_2} with $\lambda_{\pm 1}=-\rho/(1-\rho^2)$. Then, the entries of the main diagonal define the sequences
\begin{equation*}
    \sigma^2(1-\rho^2)\,\left\lbrace \left(U^{(n)}\right)^{-1}_{ii}\right\rbrace_{n\in\mathbb{N}}=
    \begin{cases} 
      \lbrace 1,1,\ldots\rbrace& \mathrm{if }\,\,i = 1,\\
      \lbrace \xi_1, \ldots, \xi_{i-1}, 1, 1+\rho^2, 1+\rho^2,\ldots\rbrace & \mathrm{if }\,\,i>1,
   \end{cases}   
\end{equation*}
for every $i\in\mathbb{N}$, where the entries $\sigma^2(1-\rho^2)\,(U^{(n)})^{-1}_{ii}=\xi_n$ for $i>n$ can be chosen as needed. Note that these sequences do not satisfy condition $(i)$ in Assumption~\ref{as_2}, but they are non-decreasing (choosing appropriately the $\xi_k$). Consequently, Assumption~\ref{as_2} holds and we have $\Lambda_{11}=1/(\sigma^2((1-  \rho^2))$, $\Lambda_{ii}=\lambda_{0}=(1+\rho^2)/(\sigma^2((1-  \rho^2))$ for all $i>1$ and, finally, $\lambda = (1-\rho)^2/(\sigma^2((1-\rho^2))$. For any $P\geq1$, the inverse matrices are banded with $2P+1$ non-zero diagonals and we can follow the same reasoning. However, for $P>2$, we need to require the coefficients $\beta_1,\ldots,\beta_P$ to have the same sign.

\begin{remark}[Auto-regressive]\label{remark_ar} Let $\mathbf{U}^{(n)}$ be the covariance matrix of an auto-regressive process of order $P\geq 1$ such that, if $P>2$, $\beta_k\beta_{k'}\geq 0$ for all $k,k'\in\lbrace 1,\ldots,P\rbrace$. Then, the sequence $\lbrace \mathbf{U}^{(n)}\rbrace_{n\in\mathbb{N}}$ satisfies Assumption~\ref{as_2}.    
\end{remark}

The above remarks imply that~\eqref{estSigma} satisfies~\eqref{sigma_overestimate_condition} in the above-studied compound symmetry, diagonal and auto-regressive models. Consequently, the asymptotic null super-uniformity of~\eqref{hat_pvalue} will be robust to $\mathbf{U}$ being diagonal or auto-regressive as long as the null uniformity of~\eqref{pvalue_F} is robust to $\mathbf{U}$ belonging to such models (and Assumptions~\ref{as_1} and~\ref{as_3} hold).

\section{Numerical experiments}\label{sec:numerical_experiments}

In this section, we assess the numerical performance of the proposed approach in several scenarios simulated with synthetic data. We start by simulating settings that satisfy condition $(ii)$ in Theorem~\ref{th:pvalue_V}, that is, choosing $\mathbf{U}\in\mathcal{CS}(n)$ and using the $p$-value~\eqref{pvalue_F}. The following three cases are considered for the scale matrices $\mathbf{U}$ and $\mathbf{\Sigma}$:
\begin{itemize}
    \item[$(D1)$] $\mathbf{U}=\mathbf{I}_n$ and $\mathbf{\Sigma}$ is the covariance matrix of an AR(1) model, i.e. $\Sigma_{ij}=\sigma^2\rho^{\abs{i-j}}$, with $\sigma = 1$ and $\rho=0.5$. 
    \item[$(D2)$] $\mathbf{U}=b\mathbf{1}_{n\times n}+(a-b)\mathbf{I}_n$, with $a = 0.5$ and $b=1$. $\mathbf{\Sigma}$ is a Toeplitz matrix, i.e. $\Sigma_{ij}=t(\abs{i-j})$, with $t(s)=1+1/(1+s)$ for $s\in\mathbb{N}$.
    \item[$(D3)$] $\mathbf{U}=b\mathbf{1}_{n\times n}+(a-b)\mathbf{I}_n$, with $a = 0.2$ and $b=2$. $\mathbf{\Sigma}$ is a diagonal matrix with diagonal entries given by $\Sigma_{ii}=1+1/i$. 
\end{itemize}

We simulated matrix normal data in settings $(D1)$, $(D2)$ and $(D3)$ and performed $k$-means and hierarchical agglomerative clustering (HAC) with average, centroid, single and complete linkages. In Section~\ref{sec:global_h0} we illustrate the uniformity of the $p$-values \eqref{pvalue_V} under a global null hypothesis, assuming that both scale matrices are known. In Section~\ref{sec:sigma_est}, we consider the case where $\mathbf{U}$ is known and the covariance between features $\mathbf{\Sigma}$ is estimated. We show, as proved in Section~\ref{sec:estimation_CS}, that $p$-values are super-uniform for large enough sample sizes. In Section~\ref{sec:power}, we assess the relative efficiency of the considered algorithms in terms of power, for the three dependence scenarios. Finally, in Section~\ref{sec:robustness}, we study the robustness of the proposed approach to model misspecification.

\subsection[Uniform p-values under a global null hypothesis]{Uniform $p$-values under a global null hypothesis}\label{sec:global_h0}

To illustrate the null distribution of $p$-values, we followed the same steps as in \cite[Section 5.1]{Gao}. For $n=100$ and $p\in\lbrace 5,20,50\rbrace$, we simulated $M=2000$ samples drawn from model \eqref{model} in settings $(D1)$, $(D2)$ and $(D3)$ with $\boldsymbol{\mu}=\mathbf{0}_{n\times p}$ a zero matrix, so that the null hypothesis \eqref{h0} holds for any pair of clusters in $\mathcal{C}(\mathbf{X})$. For each simulated sample, we used $k$-means and HAC to estimate three clusters and tested \eqref{h0} for two randomly selected clusters. Results for HAC with average linkage are displayed in Figure~\ref{fig:average_global_h0}, where the empirical cumulative distribution functions (ECDF) of the simulated $p$-values are shown. The results for $k$-means and HAC with centroid, single and complete linkage are analogous to those for average linkage and we present them in Appendix~\ref{sec:extra_sim}. The $p$-values for HAC with complete linkage were computed as their Monte Carlo approximation \eqref{monte_carlo_pv} with $N=2000$ iterations. In all cases, the $p$-values follow a uniform distribution when the null hypothesis \eqref{h0} holds.

\begin{figure}
    \centering
    \includegraphics[width=0.95\textwidth]{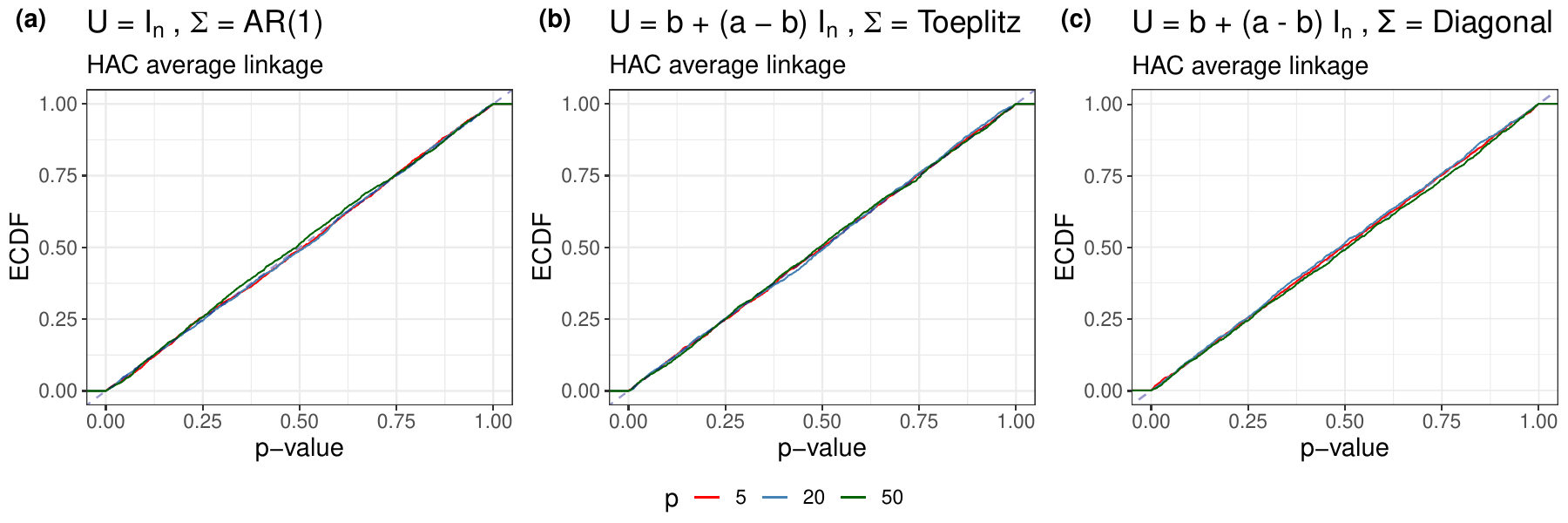}
    \caption{Empirical cumulative distribution functions (ECDF) of $p$-values \eqref{pvalue_V} with $\mathcal{C}$ being a hierarchical clustering algorithm with average linkage. The ECDF were computed from $M=2000$ realizations of \eqref{model} under the three dependence settings $(D1)$, $(D2)$ and $(D3)$ with $\boldsymbol{\mu}=\mathbf{0}_{n\times p}$, $n=100$ and $p\in\lbrace 5,20,50\rbrace$.}
    \label{fig:average_global_h0}
\end{figure}

\subsection[Super-uniform p-values for unknown Sigma]{Super-uniform $p$-values for unknown $\mathbf{\Sigma}$}\label{sec:sigma_est}

In this section, we illustrate that $p$-values \eqref{hat_pvalue} are asymptotically super-uniform under~\eqref{h0} when $\mathbf{\Sigma}$ is asymptotically over-estimated in the sense of Loewner partial order, as proved in Theorem~\ref{th:over_estimate}. We use the estimator \eqref{estSigma} that asymptotically over-estimates $\mathbf{\Sigma}$ for $\mathbf{U}\in\mathcal{CS}(n)$ if Assumptions~\ref{as_1} and \ref{as_3} hold. The estimate is computed using an independent and identically distributed copy of the sample where the clustering was performed, following Proposition~\ref{prop_indcopy}.

We follow the same steps as in \cite[Section D.1]{Gao}. For $n=100$ and $p=5$, we simulate $M=5000$ samples drawn from \eqref{model} in settings $(D1)$, $(D2)$ and $(D3)$ with $\boldsymbol{\mu}$ being divided into two clusters:

\begin{equation}\label{mu_est}
    \mu_{ij}=
    \begin{cases}
        \frac{\delta}{j} & \mathrm{if }\,\,i \leq \frac{n}{2},\\
        -\frac{\delta}{j} & \mathrm{otherwise},
    \end{cases}
    \quad\forall\,i\in[n],\,\forall\,j\in\lbrace 1,\ldots,p\rbrace,
\end{equation}
with $\delta\in\lbrace 6,8\rbrace$. For $k$-means and HAC with average, centroid, single and complete linkage we set $\mathcal{C}$ to chose three clusters. The samples for which \eqref{h0} held when comparing two randomly selected clusters are kept. Results for HAC with average linkage are presented in Figure~\ref{fig:sim_overest}. The results for $k$-means and HAC with centroid, single and complete linkage are analogous and we present them in Appendix~\ref{sec:extra_sim}. All simulations illustrate the asymptotic super-uniformity of $p$-values \eqref{pvalue_V} under the null hypothesis, when $\mathbf{\Sigma}$ is asymptotically over-estimated using \eqref{estSigma}. Moreover, as the distance between clusters $\delta$ decreases, the over-estimation is less severe and the null distribution of $p$-values approaches the one of a uniform random variable. 

It is important to remark that Figure~\ref{fig:sim_overest} serves only to illustrate the validity of Theorem~\ref{th:over_estimate}, but in no way to interpret the conservativeness of $p$-values when $\boldsymbol{\Sigma}$ is over-estimated. The deviation from uniformity of the null distribution of \eqref{hat_pvalue} or, equivalently, the power of the corresponding test, depends on the measure of the conditioning set, which in Figure~\ref{fig:sim_overest} is determined by the frequency of iterations satisfying \eqref{h0}.

\begin{figure}[t]
    \centering
    \includegraphics[width = \textwidth]{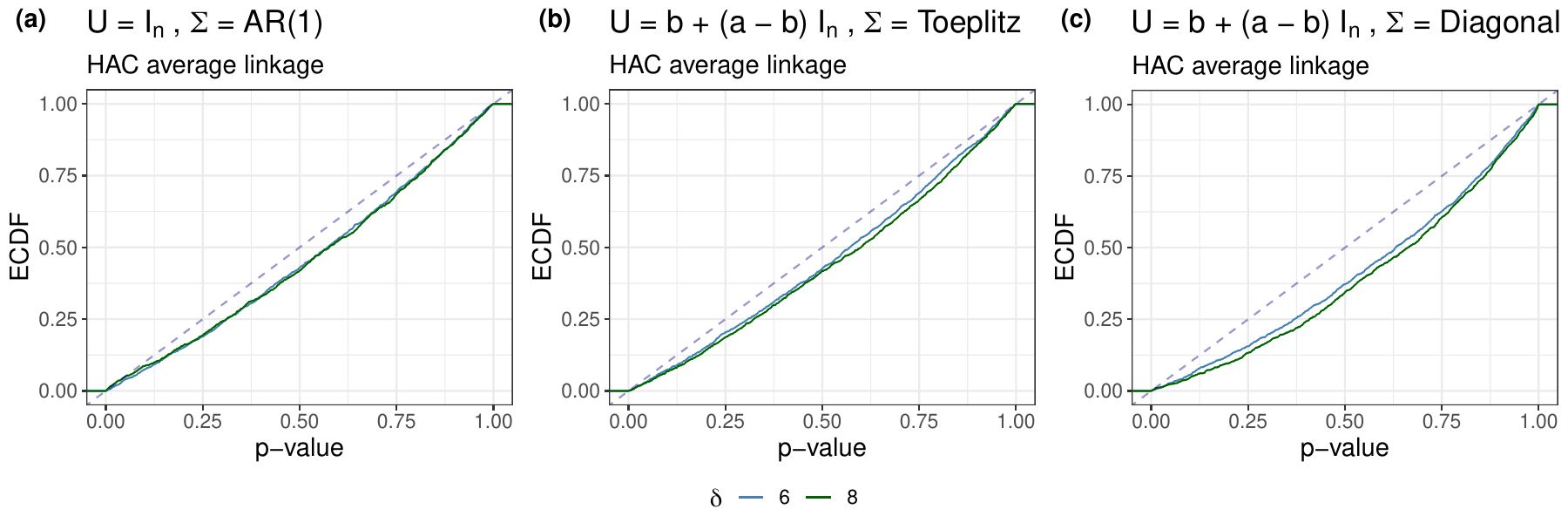}
    \caption{Empirical cumulative distribution functions (ECDF) of $p$-values \eqref{hat_pvalue} with $\mathcal{C}$ being a hierarchical clustering algorithm with average linkage. The ECDF are computed from $M=5000$ realizations of \eqref{model} under the three dependence settings $(D1)$, $(D2)$ and $(D3)$ with $n=100$, $p=5$ and $\boldsymbol\mu$ given by \eqref{mu_est} with $\delta\in\lbrace 6,8\rbrace$. Only samples for which the null hypothesis held were kept, as described in Section~\ref{sec:sigma_est}.}
    \label{fig:sim_overest}
\end{figure}

\subsection{Power analysis}\label{sec:power}

We now assess the relative efficiency of the five clustering algorithms considered in terms of power, as well as their power loss when one of the scale matrices is estimated using~\eqref{estSigma}. As in \cite[Section 5.2]{Gao}, we consider the \textit{conditional} power of the $p$-value \eqref{pvalue_V}, which is the probability of rejecting the null \eqref{h0} for a randomly selected pair of clusters given that they are different. To estimate the conditional power, we simulate $M=5000$ samples drawn from \eqref{model} under the three settings $(D1)$, $(D2)$ and $(D3)$ with $\boldsymbol{\mu}$ dividing the $n=200$ observations into three true clusters:
\begin{equation}\label{mean_power}
    \mu_{i}=
      \begin{cases} 
      \left(-\frac{\delta}{2},0,\ldots,0\right)& \mathrm{if }\,\,i \leq \lfloor \frac{n}{3}\rfloor,\\
      \left(0,\ldots,0,\frac{\sqrt{3}\delta}{2}\right) & \mathrm{if }\,\,\lfloor \frac{n}{3}\rfloor<i\leq \lfloor \frac{2n}{3}\rfloor,\\
      \left(\frac{\delta}{2},0,\ldots,0\right) & \mathrm{otherwise},
   \end{cases}
     \quad\forall\,i\in[n],
\end{equation}
for $p=5$ and for $13$ evenly-spaced values of $\delta\in[4,10]$. Then, we estimate the conditional power as the proportion of rejections at level $\alpha=0.05$ among the samples for which the null hypothesis \eqref{h0} did not hold (which were above the 90\% of $n$ in all settings). The conditional power as a function of $\delta$ is shown in Figure~\ref{fig:power}(a-c) for the three scenarios $(D1)$, $(D2)$ and $(D3)$ and the five considered clustering algorithms. The $p$-values for HAC with complete linkage are estimated using the approximation \eqref{monte_carlo_pv} with $N=2000$ iterations.

Figure~\ref{fig:power}(a-c) shows that, in all cases, conditional power increases with the distance between true clusters. Regarding HAC, we observe that average linkage presents the best relative efficiency among the four considered linkages in all the dependence settings, followed closely by complete linkage, which seems to weaken in $(D2)$. This might suggest that conditional power depends on the scale matrices and some scenarios might strongly differ from the overall observed behavior. Indeed, the qualitative difference between average or complete linkage and centroid or single linkage that is observed in $(D1)$ and $(D3)$ considerably lessens in $(D2)$. In $(D1)$ and $(D3)$, the performance of single linkage is undoubtedly the lowest, and large differences between clusters are required to attain satisfactory levels of conditional power. However, single linkage achieves one of the best performances in $(D2)$.

The relative efficiency of the $k$-means algorithm in terms of conditional power is the best in $(D2)$, but one of the worst among all the considered algorithms in $(D1)$ and $(D3)$ settings. These unsatisfactory performances might be explained by the behavior already pointed out by the authors in \cite{chen2022selective}, who referred to the fact that conditioning on too much information entails a loss of power \cite{Jewell2022, liu2018powerful, chen2022powerful, fithian2017optimal}.
Recall that the truncation set for $k$-means post-clustering inference defined in \cite{chen2022powerful} is non-maximal to allow its efficient computation (see Appendix~\ref{sec:finer_cond} and \cite[Equation (9)]{chen2022selective}). This approach, although respecting the selective type I error as shown in Theorem~\ref{th:pvalue_V_E}, might sacrifice the efficiency in terms of power of the corresponding test, as illustrated in Figure~\ref{fig:power}(a,c).

\begin{figure}[th!]
    \centering
    \includegraphics[width=0.95\textwidth]{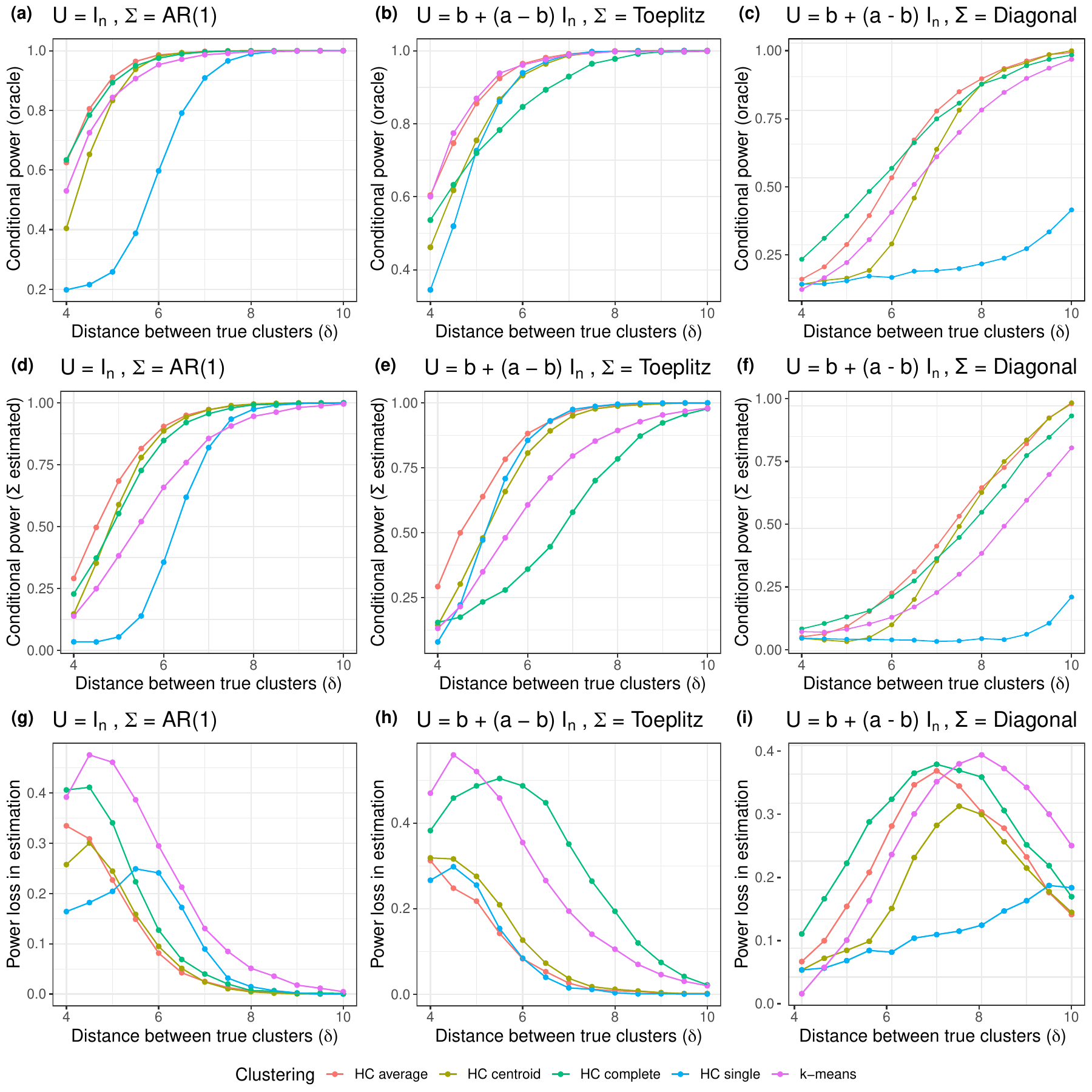}
    \caption{(a-f): conditional power for the test proposed in Section~\ref{sec:sel_clustering} under model \eqref{model} with the three dependence settings $(D1)$, $(D2)$ and $(D3)$ and the mean matrix defined in \eqref{mean_power}. The conditional power is estimated as the proportion of rejection at level $\alpha=0.05$ among the subset of  the $M=5000$ realizations of \eqref{model} for which the null hypothesis \eqref{h0} holds. In (a-c), $\mathbf{\Sigma}$ is known and in (d-f) it is over-estimated using \eqref{estSigma}. (g-i): power loss in estimation defined as the absolute difference of the curves in (a-c) and (d-f).}
    \label{fig:power}
\end{figure}

Next, we evaluate the loss of power entailed by estimating one of the scale matrices using \eqref{estSigma}. Recall that, following Theorem~\ref{th:over_estimate}, the $p$-values \eqref{hat_pvalue} are asymptotically super-uniform under the null, so conditional power is expected to decrease due to both the estimation of unknown parameters and the conservativeness of the testing approach. We repeat the previously described analysis but replacing $\boldsymbol{\Sigma}$ by its estimate \eqref{estSigma}, and calculate the counterparts of the curves in Figure~\ref{fig:power}(a-c) for $p$-values \eqref{hat_pvalue}. They are shown in Figure~\ref{fig:power}(d-f). In Figure~\ref{fig:power}(g-i), we depict the loss of power in estimation, defined as the absolute difference of the conditional power computed with known and over-estimated $\boldsymbol{\Sigma}$, for every fixed clustering algorithm and value of $\delta$.

Figure~\ref{fig:power}(g-i) illustrates how power loss varies substantially across settings $(D1)$, $(D2)$ and $(D3)$. Overall, average and centroid linkages exhibit the slightest loss, falling below 10\% for $\delta> 6$ in $(D1)$ and $(D2)$. A greater separation between clusters is required to achieve a reasonable power loss under $(D3)$. The power loss curve of complete linkage closely resembles that of average and centroid linkages in $(D1)$ and $(D3)$, but takes substantially higher values in $(D2)$. Conversely, single linkage shows a similar behavior to centroid and average linkages in $(D2)$ but differs notably in $(D1)$ and $(D3)$. Once again, we find that the $k$-means algorithm exhibits the worst relative efficiency in terms of power loss, especially in $(D1)$ and $(D3)$. A similar behavior was observed in~\cite{chen2022selective} for $k$-means clustering when over-estimating $\sigma$ under~\eqref{model_gao} using the estimator proposed in~\cite{Gao}. This suggests that the unsatisfactory efficiency of post-$k$-means inference is intrinsic to the $p$-value defined in \cite{chen2022selective}, and that the extension proposed here inherits that drawback. An alternative approach would be to explore the use of consistent estimators of $\boldsymbol{\Sigma}$ under~\eqref{model}, which would reduce power loss as demonstrated in \cite{chen2022selective} for the simpler model~\eqref{model_gao}. Following all panels in Figure~\ref{fig:power}, we can conclude that HAC with average linkage exhibits the highest relative efficiency and lower power loss when $\boldsymbol{\Sigma}$ is estimated, making it the most suitable algorithm in practice. Note that substantial power loss in the estimation of unknown parameters was similarly observed in the methods proposed in \cite{Gao, Yun2023Jan}, as demonstrated in \cite{Yun2023Jan} for HAC algorithms under~\eqref{model_gao}.

\subsection{Robustness to model misspeciﬁcation}\label{sec:robustness}

We conclude the numerical simulations on synthetic data by studying the robustness of the proposed approach to model misspecification. We particularly evaluate settings where the theoretical constraints on the dependence between observations given by $\mathbf{U}$ are not satisfied or known. First, in Section~\ref{sec:no_CS_U}, we analyze how $p$-values \eqref{pvalue_V} behave when the covariance matrix $\mathbf{U}$ is not compound symmetry, but is compatible with the over-estimation of $\mathbf{\Sigma}$. Then, in Section~\ref{sec:no_ad_U}, we explore the setting where $\mathbf{U}$ does not fit into $\mathcal{CS}(n)$ nor belongs to any of the models stated in Remarks~\ref{remark_diagonal} or \ref{remark_ar}. Finally, in Section~\ref{sec:ignoredep_U}, we evaluate the validity of the method when $\mathbf{U}\neq\mathbf{I}_n$ is unknown and observations are assumed to be independent.

\subsubsection[Non-compound-symmetry U structures]{Non-compound-symmetry $\mathbf{U}$ structures}\label{sec:no_CS_U}

In this section we evaluate the robustness of $p$-values~\eqref{pvalue_V} and \eqref{hat_pvalue} to $\mathbf{U}\notin\mathcal{CS}(n)$. We choose three dependence settings that satisfy Assumption~\ref{as_2}, so that \eqref{estSigma} satisfies~\eqref{sigma_overestimate_condition}. In all cases, $\mathbf{\Sigma}$ is a diagonal matrix with entries $\Sigma_{ii}=1+1/i$. The dependence structure between observations is given by the three following settings: 
\begin{enumerate}
    \item[(D4)] $\mathbf{U}$ is a diagonal matrix with entries $U_{ii}=1+1/i$.
    \item[(D5)] $\mathbf{U}$ is the covariance matrix of an AR(1) model with $\sigma=1$ and $\rho = 0.1$.
    \item[(D6)] $\mathbf{U}$ is the covariance matrix of an AR(2) model with $\sigma=1$, $\beta_1=0.4$ and $\beta_2=0.1$.
\end{enumerate}

We start by simulating the distribution of $p$-values~\eqref{pvalue_V} under the global null hypothesis, repeating the numerical experience described in Section~\ref{sec:global_h0}. The counterpart of Figure~\ref{fig:average_global_h0} for $(D4)$, $(D5)$ and $(D6)$ is presented in Figure~\ref{fig:non_CS_U_average}. The empirical distribution of $p$-values does not markedly deviate from uniformity in settings $(D4)$ and $(D5)$, especially for $p\in\lbrace 5,10\rbrace$. This was expected since the $\mathbf{U}$ matrices in both cases do not deviate substantially from the compound symmetry structure. In $(D6)$, the entries of $\mathbf{U}$ decay more slowly to zero along the columns, which makes this structure to deviate more from $\mathcal{CS}(n)$. This results in a greater departure from uniformity of the $p$-value distribution, as seen in Figure~\ref{fig:non_CS_U_average}(c). However, this deviation occurs within the super-uniformity regime, meaning that the $p$-values still maintain statistical guarantees, despite the power loss. The corresponding results for $k$-means and HAC with centroid, single and complete linkages are analogous. We include them in Appendix~\ref{sec:extra_sim}.

\begin{figure}[t]
    \centering
    \includegraphics[width=0.95\textwidth]{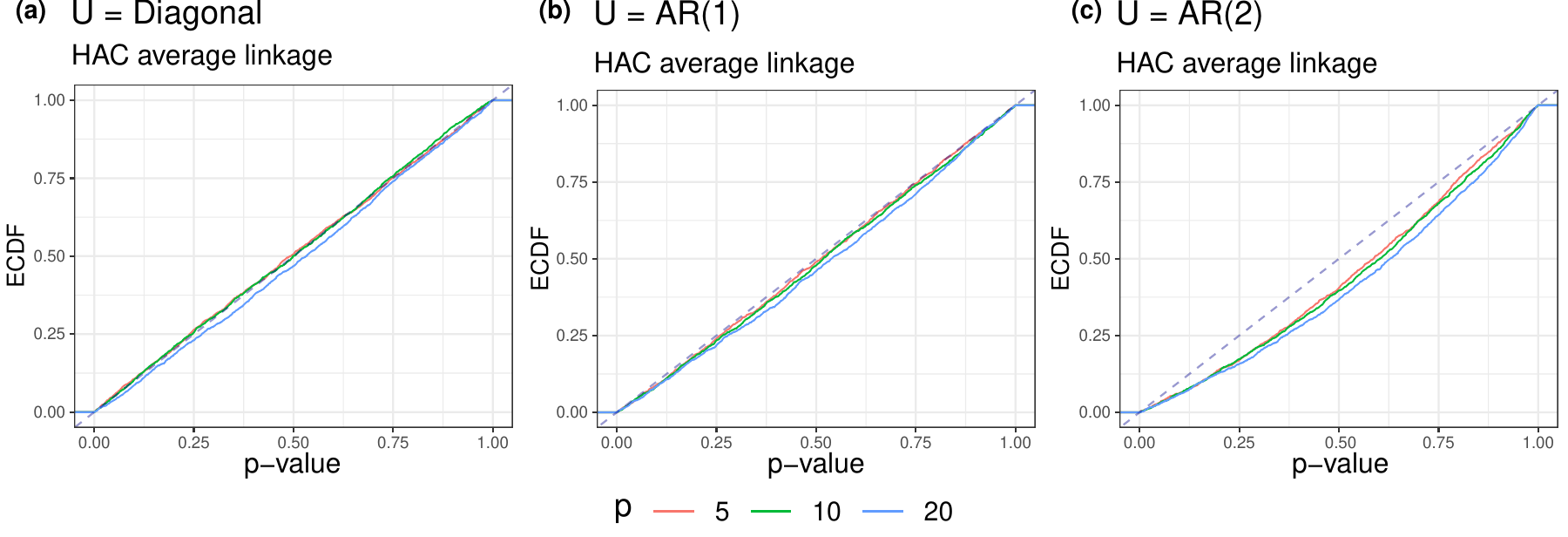}
    \caption{Empirical cumulative distribution functions (ECDF) of $p$-values \eqref{pvalue_V} with $\mathcal{C}$ being a hierarchical clustering algorithm with average linkage. The ECDF were computed from $M=2000$ realizations of \eqref{model} under the three dependence settings $(D4)$, $(D5)$ and $(D6)$ with $\boldsymbol{\mu}=\mathbf{0}_{n\times p}$, $n=100$ and $p\in\lbrace 5,20,50\rbrace$.}
    \label{fig:non_CS_U_average}
\end{figure}

The previous analysis suggests that $p$-values~\eqref{pvalue_V} are robust to small deviations from $\mathbf{U}\in\mathcal{CS}(n)$. As discussed in Section~\ref{sec:estimation_not_CS}, if the over-estimate condition~\eqref{sigma_overestimate_condition} of Theorem~\ref{th:over_estimate} is satisfied, this would mean that $p$-values~\eqref{hat_pvalue} are equally robust in that setting. Following from Remarks~\ref{remark_diagonal} and \ref{remark_ar}, settings $(D4)$, $(D5)$ and $(D6)$ are compatible with the asymptotic over-estimation of $\mathbf{\Sigma}$ using~\eqref{estSigma}. Consequently, we reproduce the analyses of Section~\ref{sec:sigma_est} for such dependence structures to assess whether the previously illustrated robustness is maintained with estimation. Results are presented in Figure~\ref{fig:non_CS_U_est_average} for HAC with average linkage and in Appendix~\ref{sec:extra_sim} for the remaining clustering algorithms. In all cases, the empirical null distribution of $p$-values is super-uniform, confirming the robustness of \eqref{hat_pvalue} to small deviations from $\mathbf{U}\in\mathcal{CS}(n)$.

\begin{figure}[t]
    \centering
    \includegraphics[width=0.95\textwidth]{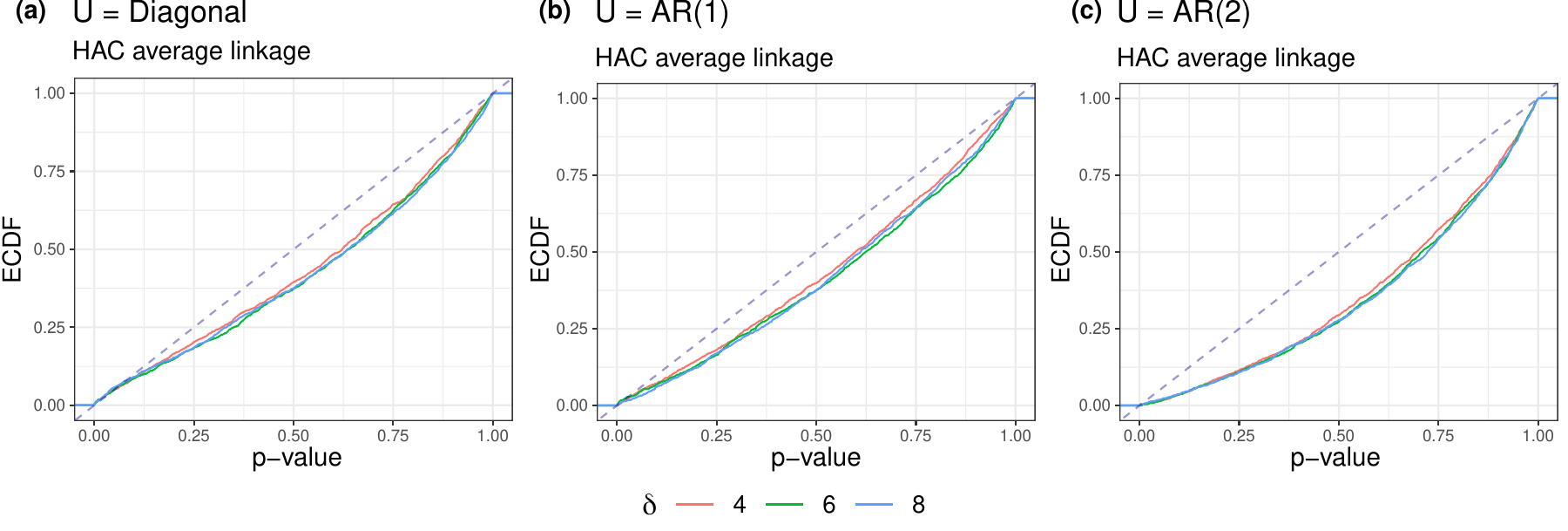}
    \caption{Empirical cumulative distribution functions (ECDF) of $p$-values \eqref{hat_pvalue} with $\mathcal{C}$ being a hierarchical clustering algorithm with average linkage. The ECDF are computed from $M=5000$ realizations of \eqref{model} under the three dependence settings $(D4)$, $(D5)$ and $(D6)$ with $n=50$, $p=5$ and $\boldsymbol\mu$ given by \eqref{mu_est} with $\delta\in\lbrace 4,6,8\rbrace$. Only samples for which the null hypothesis held were kept, as described in Section~\ref{sec:sigma_est}.}
    \label{fig:non_CS_U_est_average}
\end{figure}

\subsubsection[Non-admissible U for the over-estimation of Sigma]{Non-admissible $\mathbf{U}$ for the over-estimation of $\mathbf{\Sigma}$}\label{sec:no_ad_U}

Let us recall that Assumption~\ref{as_2} is a sufficient condition for the sequence $\lbrace \mathbf{U}^{(n)}\rbrace_{n\in\mathbb{N}}$ to ensure that \eqref{estSigma} satisfies~\eqref{sigma_overestimate_condition}. As discussed in Section~\ref{sec:unknown_sigma}, proving that a given dependence model satisfies this Assumption is non-trivial in most cases. In Remarks~\ref{remark_diagonal}, \ref{remark_compound} and \ref{remark_ar}, we showed that Assumption~\ref{as_2} is satisfied by three common dependence structures, but other sequences $\lbrace \mathbf{U}^{(n)}\rbrace_{n\in\mathbb{N}}$ might also satisfy the same sufficient condition or other unknown hypotheses that ensure that \eqref{estSigma} asymptotically over-estimates $\boldsymbol{\Sigma}$. In this section, we repeat the simulations of Section~\ref{sec:sigma_est} under three settings that do not fit Remarks~\ref{remark_compound}, \ref{remark_diagonal} or \ref{remark_ar}:

\begin{itemize}
    \item[$(D7)$]  $\mathbf{U}$ is a Toeplitz matrix with $U_{ij}=1+1/(1+\abs{i-j})$.
    \item[$(D8)$]  $\mathbf{U}$ is the covariance matrix of an AR(3) model with $\sigma=1$, $\beta_1=0.4$, $\beta_2=-0.2$ and $\beta_3 = 0.1$.
    \item[$(D9)$]  $\mathbf{U}$ is a banded matrix with $U_{ii}=1$, $U_{ii+1}=0.6$, $U_{ii+2}=0.5$, $U_{ii+3}=0.2$ and $U_{ii+r}$ for all $r>3$. 
\end{itemize}
In all cases, we chose $\boldsymbol{\Sigma}$ as a diagonal matrix with entries $\Sigma_{ii}=1+1/i$. We also set $n=50$, $p=5$ and $\delta\in\lbrace 4,6,8\rbrace$. Results are presented in Figure~\ref{fig:non_ad_U_average} for HAC with average linkage, and in Appendix~\ref{sec:extra_sim} for the rest of clustering algorithms. The simulated $p$-values are super-uniform in all settings, suggesting that \eqref{estSigma} might asymptotically over-estimate $\boldsymbol{\Sigma}$ for further models of dependence between observations. Note that, in particular, results corresponding to $(D8)$ suggest that the requirement $\beta_k\beta_{k'}\geq0$ for $P>2$ in Remark~\ref{remark_ar} is not very restrictive.

These results might also motivate further theoretical inspection on Toeplitz and banded structures to verify whether they satisfy Assumption~\ref{as_2}. Extensive work has been done on the asymptotic behavior of continuous functions of Toeplitz matrices \cite{Gray}. However, it mainly concerns their average behavior rather than their element-wise one. Notably, in \cite{Gray}, it is proved that the mean of the eigenvalues of $\left(\mathbf{U}^{(n)}\right)^{-1}$ converges when $n$ tends to infinity, if the sequence $\lbrace U_{1n}\rbrace_{n\in\mathbb{N}}$ is absolutely summable. This implies that the mean of the sequence $\lbrace(\mathbf{U}^{(n)})^{-1}_{ii}\rbrace_{i=1,\ldots,n}$ also converges with $n$. However, this is insufficient to state convergence of \eqref{convergence_U} and the asymptotic behavior of the individual entries need to be studied. If we impose $\mathbf{U}^{(n)}$ to be banded, the entry-wise convergence of the elements $\left(\mathbf{U}^{(n)}\right)^{-1}_{i\,i+r}$ has been demonstrated in \cite{elementwise_toep} for the tridiagonal case. This, together with the exponential decay of the entries of banded matrices~\cite{expdecay}, is enough to prove the first part of Assumption~\ref{as_2} for tridiagonal Toeplitz matrices. Unfortunately, the existing results do not ensure that any of the conditions $(i)$ or $(ii)$ in Assumption~\ref{as_2} hold. Assessing that remaining step is mathematically very challenging and it is left for future work.

\begin{figure}[t]
    \centering
    \includegraphics[width=0.95\textwidth]{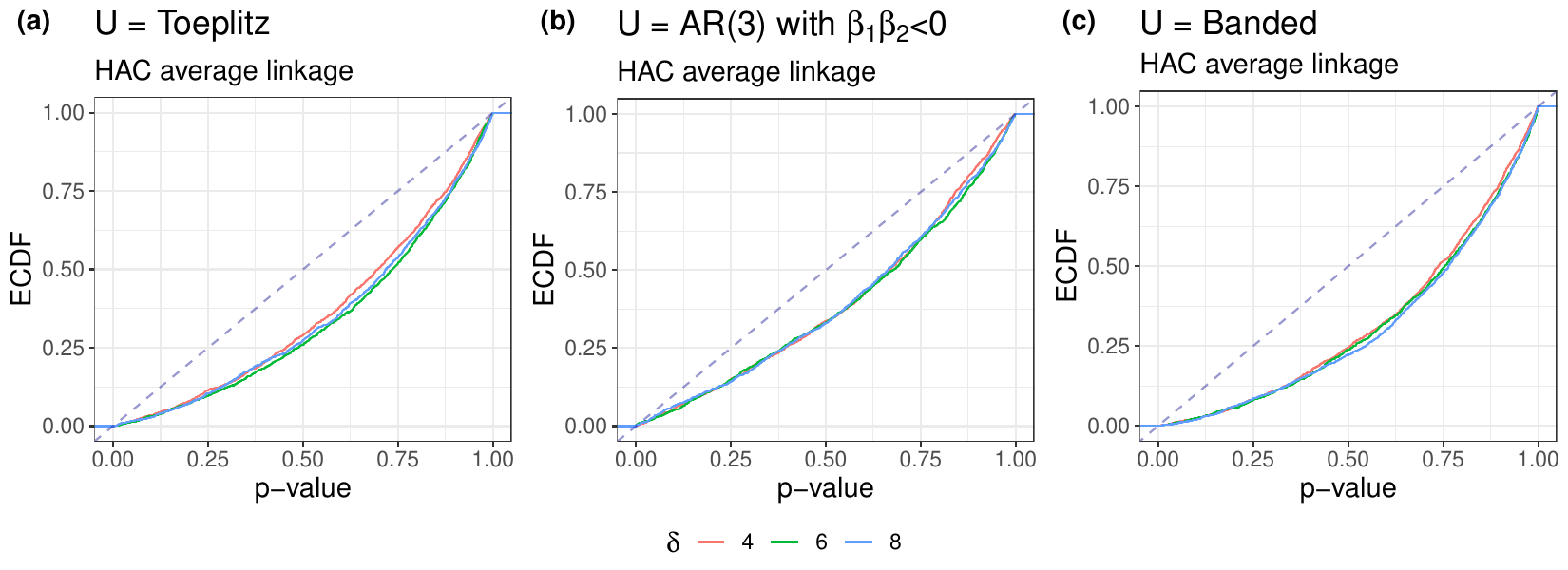}
    \caption{Empirical cumulative distribution functions (ECDF) of $p$-values \eqref{hat_pvalue} with $\mathcal{C}$ being a hierarchical clustering algorithm with average linkage. The ECDF are computed from $M=5000$ realizations of \eqref{model} under the three dependence settings $(D7)$, $(D8)$ and $(D9)$ with $n=50$, $p=5$ and $\boldsymbol\mu$ given by \eqref{mu_est} with $\delta\in\lbrace 4,6,8\rbrace$. Only samples for which the null hypothesis held were kept, as described in Section~\ref{sec:sigma_est}.}
    \label{fig:non_ad_U_average}
\end{figure}

\subsubsection{Ignoring weak dependence between observations}\label{sec:ignoredep_U}

In real applications, it might be common that the practitioner lacks knowledge of both dependence structures between observations and variables. As discussed in Section~\ref{sec:unknown_sigma}, simultaneous estimation of both matrices $\mathbf{U}$ and $\boldsymbol{\Sigma}$ is unfeasible under the matrix normal model \eqref{model} when only one or few copies of $\mathbf{X}$ are available. Consequently, even ignoring the control of statistical guarantees, we are unable to simultaneously consider a pair of estimators $\hat{\mathbf{U}}$, $\hat{\boldsymbol{\Sigma}}$ (or one of the Kronecker product $\mathbf{U}\otimes\boldsymbol{\Sigma}$) in the context of this work. In practice, a common alternative strategy is to assume weak dependence between observations, and ignore this dependence by considering $\mathbf{U}=\mathbf{I}_n$ in the method. In this section, we study the robustness of the proposed approach when observations are supposed independent but it is known that $\mathbf{U}\neq\mathbf{I}_n$.

We consider $\mathbf{X}$ drawn from \eqref{model} with $\mathbf{\Sigma}$ a diagonal matrix having as entries $\Sigma_{ii}=1+1/i$ as in the previous section. The dependence between observations is encoded by the covariance matrix of an AR(1) model, that is, $U_{ij}=\sigma^2\rho^{\abs{i-j}}$, with $\sigma=1$ and $\rho\in\lbrace 0.1,0.2,0.3,0.4,.0.5\rbrace$. Once again, we repeated the simulations described in Section~\ref{sec:sigma_est} and computed the $p$-values \eqref{hat_pvalue} using \eqref{estSigma} to estimate $\boldsymbol{\Sigma}$ and assuming $\mathbf{U}=\mathbf{I}_n$. Results for HAC with average linkage are presented in Figure~\ref{fig:ignoredep_average}, and in Appendix~\ref{sec:extra_sim} for the rest of clustering algorithms. In all cases, the simulated $p$-values do not substantially deviate from the super-uniform regime. Besides, if we take a closer look at $[0,0.1]$, we see that the simulated ECDF strictly lie below the diagonal for small values of $\rho$. In other words, when the dependence between observations is weak, the proposed test is robust to departures from the assumption $\mathbf{U}=\mathbf{I}_n$, and the estimation of $\mathbf{\Sigma}$ using \eqref{estSigma} yields $p$-values that asymptotically control the selective type I error.

\begin{figure}[t]
    \centering
    \includegraphics[width=0.95\textwidth]{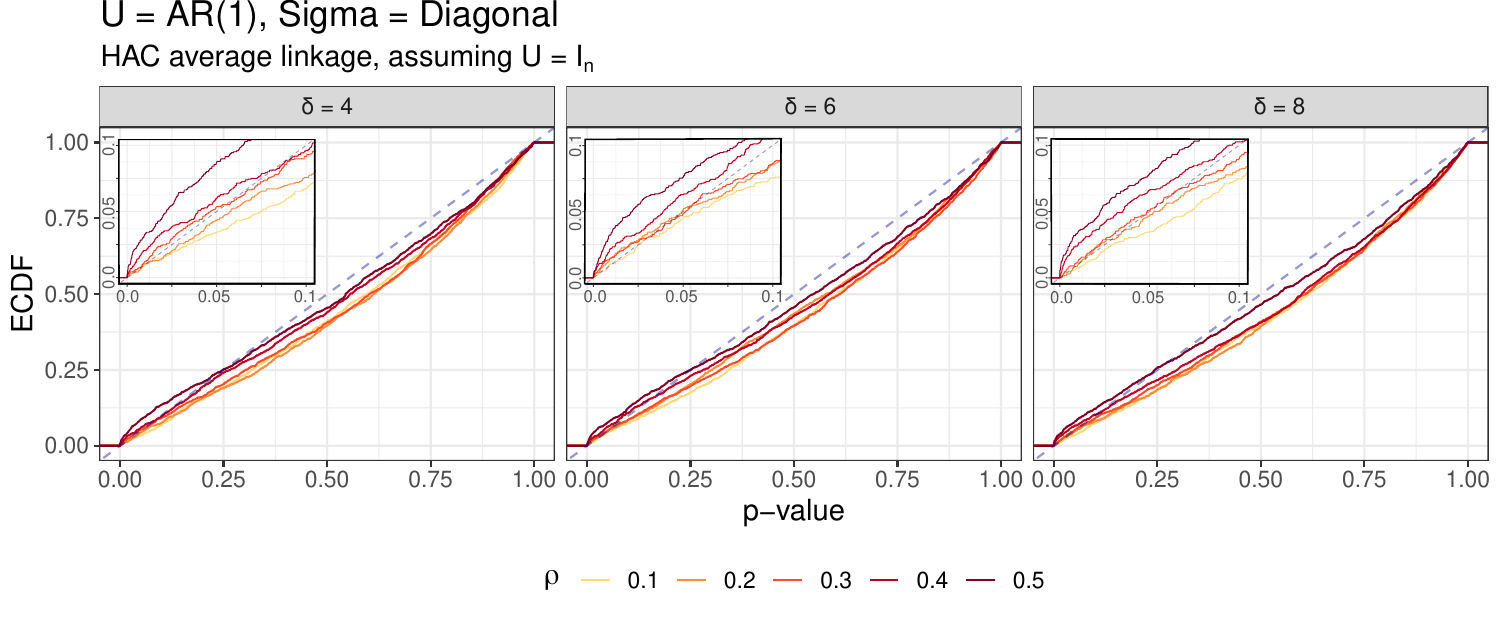}
    \caption{Empirical cumulative distribution functions (ECDF) of $p$-values \eqref{hat_pvalue} with $\mathcal{C}$ being a hierarchical clustering algorithm with average linkage. The ECDF are computed from $M=5000$ realizations of \eqref{model} as described in Section~\ref{sec:ignoredep_U} with $n=50$, $p=5$ and $\boldsymbol\mu$ given by \eqref{mu_est} with $\delta\in\lbrace 4,6,8\rbrace$. Only samples for which the null hypothesis held were kept, as described in Section~\ref{sec:sigma_est}.}
    \label{fig:ignoredep_average}
\end{figure}

\section{Application to clustering of protein structures}\label{sec:proteins}

Proteins are essential molecules in all living organisms. Many of their numerous functions are closely related to their non-static structure, which exhibits high variability within numerous protein families~\cite{Liljas:2009, Dyson:2005, Oldfield:2014}. The characterization of such intrinsic structural complexity represents a highly active area of research in the field of Structural Biology. In this pursuit, clustering methods applied to protein conformations have provided valuable insights into this challenging problem~\cite{engens, Appadurai2022}. One of the most commonly-chosen descriptors to characterize a protein conformation is the set of pairwise Euclidean distances between every pair of amino acids along the  sequence \cite{phillips1970, Nishikawa, Lazar}, usually referred to as distance maps. As these distances are strongly correlated, assuming a constant diagonal covariance matrix as in \cite{Gao} seems very unrealistic. Instead, we opt for the more convenient model
\begin{equation}\label{prot_model}
    \mathbf{X}\sim\mathcal{MN}_{n\times p}(\boldsymbol\mu, \mathbf{I}_n, \mathbf{\Sigma}),
\end{equation}
where $\mathbf{\Sigma}$ can be estimated using \eqref{estSigma}. Each row of $\mathbf{X}$ corresponds to a protein conformation, featured by a vector of Euclidean distances between every pair of amino acids, which constitute the columns of $\mathbf{X}$. We perform hierarchical agglomerative clustering with average linkage (as it showed the best relative efficiency in Section~\ref{sec:power}) to estimate $k=6$ clusters among $n=2000$ conformations of a disordered protein called Histatin-5 (Hst5). The number of clusters was chosen arbitrarily. The corresponding sequence is 24 amino acids long, so $p=23\cdot 24/2=276$. The conformations were generated using Flexible-Meccano \cite{Ozenne:2012, Bernado:2005} and refined using previously reported small-angle X-ray scattering (SAXS) data~\cite{sagar2021}. Note that Flexible-Meccano is a sampling algorithm that generates an independent conformation at each iteration, contrary to Molecular Dynamics simulation techniques that present temporal dependence between samples. This justifies our choice of $\mathbf{U}=\mathbf{I}_n$. Moreover, we had access to an independent replica of the simulated ensemble that we used to estimate $\mathbf{\Sigma}$, as it is usual for generated protein ensembles. The obtained estimate $\hat{\bSigma}$ substantially deviated from the spherical structure. Figure~\ref{fig:histatin} shows the average distance map across all conformations in a given cluster or, in other words, the empirical cluster means as defined in \eqref{empirical_mean}. Table~\ref{table} presents the $p$-values corresponding to every pair of clusters, corrected for multiple testing using the Bonferroni-Holm adjustment~\cite{holm}.

\begin{figure}[t]
    \centering
    \includegraphics[width=0.7\textwidth]{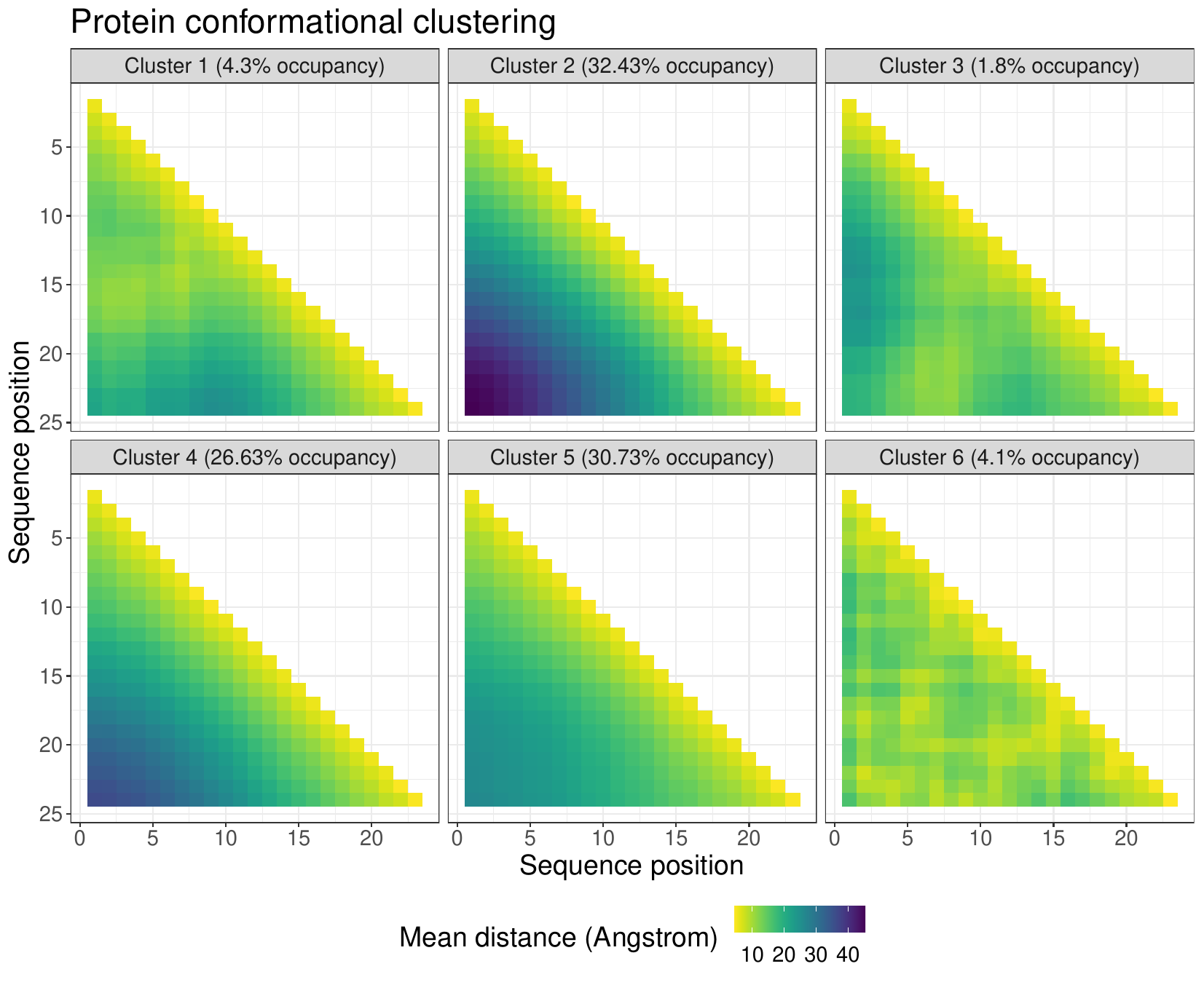}
    \caption{Average pairwise distances between every pair of amino acids across the conformations of each cluster. The clusters were found after performing hierarchical clustering with average linkage on the protein data presented in Section~\ref{sec:proteins}.}
    \label{fig:histatin}
\end{figure}

\begin{table}[ht!]
\centering
\vspace{10pt}
\scriptsize{
\begin{tabular}{cccccc}
\toprule
\textbf{Cluster} & 1 & 2 & 3 & 4 & 5 \\ \midrule
2 & 2.187589$\cdot 10^{-4}$ &  \\
3 &  3.039844$\cdot 10^{-11}$ & 1.41$\cdot 10^{-3}$  &  \\
4 &  1.070993$\cdot 10^{-10}$ & \textcolor{blue}{0.300540} & 2.98464$\cdot 10^{-4}$  &\\
5 & 3.038979$\cdot 10^{-16}$ &  \textcolor{blue}{0.093018} & 6.015797$\cdot 10^{-5}$ & \textcolor{blue}{0.105446} & \\
6 &  1.729616$\cdot 10^{-6}$ & 0.010612  & 9.290826$\cdot 10^{-9}$ &  2.105$\cdot 10^{-3}$ & 5.624624$\cdot 10^{-5}$  \\\bottomrule
\end{tabular}}
\caption{$p$-values \eqref{pvalue_V} computed under model \eqref{prot_model} retrieved after testing \eqref{h0} on the protein data presented in Section~\ref{sec:proteins}. The hierarchical clustering algorithm was set to find six clusters using average linkage. In blue, adjusted $p$-values for which the null is not rejected at level $\alpha=0.05$.}
\label{table}
\end{table}

The $p$-values presented in Table~\ref{table} show significant differences between the most part of the average distance maps depicted in Figure~\ref{fig:histatin}. The non-rejecting pairs of clusters at level $\alpha=0.05$, marked in blue in Table~\ref{table}, suggest that clusters 2, 4 and 5 could be merged into a single group. Indeed, when looking at the corresponding empirical means in Figure~\ref{fig:histatin}, we appreciate that these three clusters are characterized by large distances between pairs of amino acids that are far apart in the sequence, which indicates a lack of interactions between the sequence termini and a more extended structure of the corresponding conformations. This feature appears as an exclusive and prominent characteristic of clusters 2, 4 and 5, which might explain the non-rejection of the corresponding nulls. For the rest of rejecting pairs of clusters, clear differences in distance patterns are retrieved in Figure~\ref{fig:histatin}, accounting for significant changes on Hst5 structure between the corresponding groups. The results presented in Table~\ref{table} are coherent with the HAC dendrogram, presented in Figure~\ref{fig:dendrogram}, showing that clusters 2, 4, and 5 form a subgroup that is promptly separated from the rest.

\section{Discussion}\label{sec:discussion}

The seminal work by Gao \textit{et al.} \cite{Gao} has laid the foundation for selective inference after clustering by introducing a theoretical framework allowing to test differences between cluster means, conditioning on having estimated those clusters. Furthermore, the authors have tackled the problem of estimating unknown parameters while controlling the selective type I error, which had been overlooked in previous works \cite{datafission,rasines}, but which is crucial for the practical application of this theory. Their contribution motivates extensions of post-clustering inference to more general frameworks that arise in complex real applications, where observations or features present non-negligible dependence structures. To generalize the model considered in~\cite{Gao} to the more general~\eqref{model}, we consider a $p$-value of the form~\eqref{pvalue_gao}, choosing a test statistic based on $\mathbf{X}^T\nu$ and conditioning on both its direction and the projection $\mathbf{\pi}_\nu^{\perp}\mathbf{X}$, as done in~\cite{Gao}. In that setting, we prove that the strategy of~\cite{Gao} can be extended to~\eqref{model} if and only if the dependence structure between observations $\mathbf{U}$ is compound symmetry. Otherwise, we show that the natural generalization of~\eqref{pvalue_gao} to arbitrary $\bU$ yields a quantity that can be efficiently characterized, but whose statistical guarantees are difficult to assess. Numerically, we illustrate that the control of the selective type I error is not ensured in that setting. We also generalize the estimation of one covariance matrix compatible with the selective type I error control when $\bU\in\mathcal{CS}(n)$. These extensions, presented in Sections~\ref{sec:sel_clustering} and \ref{sec:unknown_sigma} respectively, and numerically illustrated in Sections~\ref{sec:numerical_experiments} and \ref{sec:proteins}, represent the main contributions of this work. 

The theoretical framework presented in Section~\ref{sec:sel_clustering} limits the use of $p$-values of the form~\eqref{pvalue_gao} to structures $\mathbf{U}\in\mathcal{CS}(n)$. Following from the analyses that we present in Section~\ref{sec:general_known_U}, generalizing the family of admissible $\mathbf{U}$ is a complex problem in this context and would require exploring $p$-values with alternative conditioning sets. As we have suggested, such a strategy would require the definition of extra conditioning events that are \textit{independent} of the test statistic. According to Proposition~\ref{prop_conditions}, this would mean to replace the projection $\boldsymbol{\pi}_\nu^\perp$ by one that is independent of $\mathbf{X}^T\nu$ for any $\mathbf{U}$. If the projection is taken with respect to the scalar product defined by $\mathbf{U}$, that is,
\begin{equation}\label{proj_U}
    \boldsymbol{\pi}_{\mathbf{U};\nu}^\perp\mathbf{X}=\mathbf{X}-\frac{\mathbf{X}^T \mathbf{U} \nu}{\nu^T\mathbf{U}\nu}\nu,
\end{equation}
the independence $\boldsymbol{\pi}_{\mathbf{U};\nu}^\perp\mathbf{X}\independent \mathbf{X}^T\nu$ follows from the Cochran Theorem~\cite{Cochran1934Apr}. However, replacing \eqref{proj_U} in~\eqref{decomposition_gao} and proceeding with the same reasoning would mean to consider a test statistic based on $\mathbf{X}^T\mathbf{U}\nu$, that would account for a less interpretable null hypothesis of the form $\boldsymbol{\mu}^T\mathbf{U}\nu=0$. Besides, maintaining both the projection~\eqref{proj_U} and the null hypothesis~\eqref{h0} would substantially complicate the derivation of a tractable $p$-value. For this reason, we believe that extending the conditional post-clustering inference approaches to arbitrary structures $\mathbf{U}$ would require a substantial shift in framework and should follow alternative paths to the strategy initiated in~\cite{Gao}. Recall, however, that the method proposed here has been shown to be robust to $\mathbf{U}\notin\mathcal{CS}(n)$ in several scenarios.

The estimation of unknown parameters, which is essential for practical applications, inherently leads to a loss of power in any hypothesis test. This has been illustrated in Section~\ref{sec:power} for the method proposed here. A relevant avenue for future work would be the exploration of alternative scenarios where the power loss in estimation could be mitigated. One possibility would be to develop a framework inspired by the work of Yun and Foygel Barber~\cite{Yun2023Jan}, in which they consider, under model~\eqref{model_gao}, a test statistic that does not depend on the unknown parameter $\sigma$. This results in a method that is relatively more efficient than the one proposed in~\cite{Gao} in some settings. Adapting this idea to the general model~\eqref{model} would require an appropriate test statistic that does not depend on the unknown parameters $\mathbf{U}$ and $\mathbf{\Sigma}$. However, the direct adaptation of~\cite{Yun2023Jan} to~\eqref{model} presents a non-trivial theoretical challenge while offering limited practical advantages compared to the extension presented here. Indeed, an efficient computation is proposed only for binary partitions of the data. An alternative approach would be the definition of consistent estimators of $\mathbf{U}$ or $\mathbf{\Sigma}$ that are compatible with the selective type I error control. This was studied in~\cite{chen2022selective} in the context of $k$-means clustering under \eqref{model_gao}, where the authors showed that considering a median-based consistent estimator of $\sigma$ yields better performances than the over-estimation strategy proposed in~\cite{Gao}.

Clustering is a multidimensional method that incorporates information from $p$ descriptors to classify $n$ observations. However, the estimated groups are often distinguished by a subset of variables, whose determination is essential in various fields of application~\cite{Ntranos2019,Vandenbon2020}. The framework presented in~\cite{Gao} has also been adapted to feature-level post-clustering inference~\cite{hivert2022post-clustering, Chen2023Nov}, testing for the difference of the $g$-th coordinate of cluster means, for a fixed $g\in\lbrace 1,\ldots,p\rbrace$. In that case, clustering is performed on the complete data set $\mathbf{X}$ but inference is carried out on the $g$-th column, modeled by a $n$-dimensional Gaussian. In a recent contribution~\cite{Chen2023Nov}, the covariance matrix is let arbitrary and $p$-values can be efficiently computed following a similar reasoning as in~\cite{Gao}. Nevertheless, none of these works deal with the estimation of unknown parameters. The extension of the over-estimation strategy presented in Section~\ref{sec:unknown_sigma} to this framework is non-trivial, and would represent a very relevant line for future research.


As discussed in Appendix~\ref{sec:finer_cond}, performing analytically tractable post-clustering inference requires the addition of technical events to the conditioning set, which implies a reduction in power. Investigating whether these conditions might be relaxed is an interesting path for future research. The problem of power loss due to extra conditioning is not exclusive to this method. Techniques like data fission~\cite{datafission} need to calibrate the conditioning information and consequences in terms of power are analogous.  However, it is still unknown whether power loss is more drastic in one method or the other. An interesting contribution would be to establish a framework allowing for a proper comparison of this effect when performing post-clustering inference using data fission and the approach proposed in \cite{Gao}. Nevertheless, extending this comparison to practical applications would be unfeasible as long as the estimation of the covariance structure with statistical guarantees cannot be carried out in both methods.


\section*{Code availability}

The methods introduced in the present work were implemented in the \textsf{R} package \texttt{PCIdep}, available at \href{https://github.com/gonzalez-delgado/PCIdep}{https://github.com/gonzalez-delgado/PCIdep}. All the numerical experiments on synthetic and real data can be reproduced with the code available at \href{https://github.com/gonzalez-delgado/PCIdep-experiments}{https://github.com/gonzalez-delgado/PCIdep-experiments}. The dataset of protein structures used in Section~\ref{sec:proteins} can be downloaded at \href{ https://doi.org/10.5281/zenodo.10021202}{https://doi.org/10.5281/zenodo.10021202}.

\section*{Acknowledgments}

We thank Amin Sagar and Pau Bernadó for providing protein structure data. 

This work was supported by the French National Research Agency (ANR) under grants: ANR-11-LABX-0040 (LabEx CIMI) within the French State Program ``Investissements d’Avenir'', ANR-22-CE45-0003 (CORNFLEX project) and ANR-21-CE40-0007 (GAP project).

\appendix
\counterwithin{figure}{section}

\section{Proofs}

\subsection{Proofs of Section \ref{sec:sel_clustering}}\label{proofs_1}

\subsubsection{Proof of Proposition~\ref{prop_conditions}}\label{proofs_1_lemma}

We begin by recalling a useful established result. We then state and prove Lemma~\ref{lem:caractSigma}, which is essential for the proof of Proposition~\ref{prop_conditions}, presented at the end of the section.

\begin{lemma}[Proposition 3.4 in~\cite{Eaton2007}]\label{thm:independence:Proj:GV}
Let $y \sim \normal(0, \mathbf{S})$ be a $p$-dimensional non-degenerated Gaussian vector and $F\subset \R^p$ a vector subspace. We denote by $\bP_F$ the orthogonal projection on $F$ and by $\bP^\perp_F$ the orthogonal projection on $F^\perp$. Then, $\mathbf{S} \bP_F = \bP_F \mathbf{S}$ if and only if the Gaussian vectors $\bP_F y$ and $\bP^\perp_F y$ are independent.
\end{lemma}

\begin{lemma}\label{lem:caractSigma}
Let $\mathbf{T}$ be a $n\times n$ positive definite symmetric matrix. Then, $\mathbf{T}\in\mathcal{CS}(n)$ if and only if $\nucl{\mG_1}{\mG_2}$ is an eigenvector of $\mathbf{T}$ for all $(\mG_1,\mG_2)\in\setC{[n]}$.
\end{lemma}

\begin{proof}[Proof of Lemma~\ref{lem:caractSigma}] Let $\mathbf{T}=(a-b)\mathbf{I}_n+b\mathbf{1}_{n\times n}\in\mathcal{CS}(n)$. Then, $\mathbf{T}\nucl{\mG_1}{\mG_2}=(a-b)\nucl{\mG_1}{\mG_2}$ as $\mathbf{1}_{n\times n}\nucl{\mG_1}{\mG_2}=\mathbf{0}_n$ for any $(\mG_1,\mG_2)\in \setC{[n]}$. To prove the reciprocal implication, we first define the set
\begin{equation*}
    \mathcal{C}_\mathcal{P}=\lbrace (\mG_1,\mG_2) \,|\, \mG_1,\mG_2 \subset \mP,\, \mG_1\cap\mG_2=\emptyset\rbrace,
\end{equation*}
for any $\mathcal{P}\subset[n]$. Then, we prove the following proposition by induction over $k\geq 2$:
\begin{flalign}\nonumber
\text{For any }\mP\subset[n]\text{ with }2\leq |\mP|\leq k\text{, if }\nucl{\mG_1}{\mG_2}\text{ is an eigenvector of }\mathbf{T} \text{ for all }(\mG_1,\mG_2)\in\mathcal{C}_{\mathcal{P}},\\\text{then the restriction of }\mathbf{T}\text{ on }F_{\mP}:=\spanned \{ \bnu_{\mG_1,\mG_2}\,: \,(\mG_1,\mG_2)\in \setC{\mP}\}\text{ is a uniform scaling}.\tag{$H_k$}
\end{flalign}
\begin{itemize}
    \item \textit{Initialization} ($k=2,3)$. If $\mP= \{p_1,p_2\}$, ($H_2$) holds as $F_{\mP}= \spanned \lbrace \nu_{\{p_1\},\{p_2\}}\rbrace$ and  $\nu_{\{p_1\},\{p_2\}}$ is an eigenvector of $\boldsymbol{T}$. The same strategy yields ($H_3$).
    \item \textit{Induction}. Let ($H_k$) be true for $3<k<n$. Let $\mP\subset[n]$ with $\abs{\mP}=k+1$ and assume that $\nuclst$ is an eigenvector of $\boldsymbol{T}$ for any $(\mG_1,\mG_2)\in \setC{\mP}$. Consider also $\mP_1,\mP_2\subset\mP$ with $\abs{\mP_1}=\abs{\mP_2}=k$ and $\mP_1 \neq \mP_2$. Note that from previous assumptions we have $\mP_1 \cup \mP_2= \mP$. Now, since $\setC{\mP_1}$ and $\setC{\mP_2}$ are subsets of $\setC{\mP}$, property $(H_k)$ ensures that the restrictions of $\boldsymbol{T}$ on $F_{\mP_1}$ and $F_{\mP_2}$ are uniform scalings, that is,
\begin{equation*}
\mathbf{T}_{|F_{\mP_1}} = \lambda_{\mP_1} \bI_{n}
\hspace{0.5cm}\text{ and }\hspace{0.5cm} 
\mathbf{T}_{|F_{\mP_2}} = \lambda_{\mP_2} \bI_{n} \hspace{0.5cm} \text{ for some }  \hspace{0.5cm} \lambda_{\mP_1},\lambda_{\mP_2} \in \R.
\end{equation*}
Moreover, as  $|\mP_1\cap\mP_2|=k-1\geq2$, there exist two distinct elements $i_1$ and $i_2$ in the intersection $\mP_1\cap\mP_2$. Then, $\nu_{\{i_1\},\{i_2\}}\in F_{\mP_1}\cap F_{\mP_2}$. Since $F_{\mP_1}$ and $F_{\mP_2}$ share a non-zero element, we have $\lambda_{\mP_1}=\lambda_{\mP_2}$. We conclude the induction step noticing that $F_{\mP}=F_{\mP_1}+F_{\mP_2}$ (by inclusion and dimensional argument).
\item \textit{Conclusion}. The property ($H_k$) is initialized and inductive, then true for any $2\leq k \leq n$.
\end{itemize}
Following from the previous reasoning, ($H_n$) is true. Then, if $\nucl{\mG_1}{\mG_2}$ is an eigenvector of $\mathbf{T}$ for all $(\mG_1,\mG_2)\in\mathcal{C}_{[n]}$, the restriction $\boldsymbol{T}_{|F_{[n]}}$ is a uniform scaling of parameter $\lambda$. Moreover, as $\boldsymbol{T}$ is symmetric, both $F_{[n]}$ and its orthogonal $F_{[n]}^\perp$ are stable under $\boldsymbol{T}$. It can be easily shown that $F_{[n]}^\perp= \spanned\lbrace\bOne_n\rbrace$. Then, $\bOne_n$ is an eigenvector of $\boldsymbol{T}$, whose associated eigenvalue will be denoted by $\beta$.
Noting that $n^{-1}\bJ=n^{-1}\bOne_n\cdot\bOne_n^T$ is the orthogonal projection over $\spanned\lbrace\bOne_n\rbrace$ and $\bI_n-n^{-1}\bJ$ is the orthogonal projection over $F_{[n]}$, we can write:
\begin{align*}
\mathbf{T} &= \mathbf{T} (n^{-1}\bJ + \bI_n  - n^{-1}\bJ) = \mathbf{T}_{|F_{[n]}^\perp} n^{-1}\bJ + \mathbf{T}_{|F_{[n]}}(\bI_n- n^{-1}\bJ)\\
&= \beta n^{-1}\bJ  + 
\lambda (\bI_n - n^{-1}\bJ) = (\lambda \bI_n + n^{-1}(\beta-\lambda) \bJ)\in\mathcal{CS}(n),
\end{align*}
concluding the proof.
\end{proof}

\begin{proof}[Proof of Proposition~\ref{prop_conditions}]

We start showing the first equivalence in Proposition~\ref{prop_conditions}. Let us denote by $\knu \subset \R^{n\times p}$ the kernel of the linear mapping $\nu_{\mG_1,\mG_2}^T: \bM \in \R^{n\times p} \mapsto \nu_{\mG_1,\mG_2}^T \bM$ and by $\knu^\perp$ its orthogonal complement. We omit their dependence on $\mG_1,\mG_2$ for the sake of a simpler notation. Next, we denote by $\Piknu:=\boldsymbol{\pi}^\perp_{\nu_{\mG_1,\mG_2}}$ and $\Piknuperp:= \bI - \Piknu = \nu_{\mG_1,\mG_2}^T\nuclst/\norm{\nuclst}^2$ the orthogonal projections on $\Lambda$ and $\Lambda^\perp$, respectively. Then, for all $(\mG_1,\mG_2)\in \setC{[n]}$, we have:
\begin{align*}
\mathbf{X}^T\nu_{\mG_1,\mG_2}\independent\boldsymbol{\pi}^\perp_{\nu_{\mG_1,\mG_2}}\mathbf{X}
&
\iff 
\Piknuperp \bX \independent \Piknu\bX
\\
\text{(By Lemma~\ref{thm:independence:Proj:GV})} &
\iff 
(\bSigma\otimes\bU) (\bI_p\otimes\Piknuperp)= (\bI_p\otimes\Piknuperp)(\bSigma\otimes\bU)
\\
&
\iff
\bSigma\otimes\bU\Piknuperp = \bSigma\otimes\Piknuperp\bU\\
\text{(By injectivity of the mapping }\bM \mapsto \bSigma\otimes \bM)
&
\iff \bU\Piknuperp = \Piknuperp\bU\\
& \iff
\text{The eigenspaces of $\Piknuperp$ are stable under $\bU$}\\
& \iff
\operatorname{span} \nuclst \text{ and } \nuclst^\perp \text{ are stable under $\bU$}\\
& \iff
\text{$\nuclst$ is an eigenvector of $\bU$.}
\end{align*}
In the last equivalence, we consider the matrix $\Piknuperp$ as a linear operator on $\R^{n\times 1}$. As this holds for every $(\mG_1, \mG_2)$ in $\setC{[n]}$, equivalence $(i)$ in Proposition~\ref{prop_conditions} follows directly from Lemma~\ref{lem:caractSigma}.

The second equivalence in Proposition~\ref{prop_conditions} is a consequence of the following well-known result. For any $p$-dimensional Gaussian vector $z\sim \normal(\mu,\bA)$,
\begin{equation}\label{direction_equivalence}
\norm{z}_2 \independent \mathrm{dir}(z) \iff
\mu=0 \text{ and } \bA=\lambda \bI_p \text{ for some } \lambda >0. 
\end{equation}
Let $y\sim\mathcal{N}(0,\mathbf{S})$ be a $p$-dimensional Gaussian vector and consider $z=\sqrt{\mathbf{A}^{-1}}y$, for any $p\times p$ positive definite matrix $\mathbf{A}$. Then, $\norm{y}_{\mathbf{A}}= \norm{z}_2$ and $z \sim \normal (0,\sqrt{\mathbf{A}^{-1}}\mathbf{S}\sqrt{\mathbf{A}^{-1}})$. Consequently, we have:
\begin{eqnarray*}
\|y\|_{\bA}\independent \mathrm{dir}_{\bA}(y) 
&\iff&
\norm{z}_2 \independent \frac{y}{\norm{z}_{2}} \\
\text{(}\bM \mapsto \sqrt{\bA^{-1}} \bM \text{ is a one-to-one mapping)} &\iff&
\norm{z}_2 \independent \mathrm{dir}(z)\\
\text{(Equivalence~\eqref{direction_equivalence})}&\iff &
\sqrt{\bA^{-1}}\mathbf{S}\sqrt{\bA^{-1}} = \lambda \bI, \text{ for some }\lambda >0.\\
\text{(}\bM \mapsto \sqrt{\bM^{-1}}\mathbf{S}\sqrt{\bM^{-1}} \text{ is one-to-one on positive matrices}) &\iff &
\bA= \lambda\mathbf{S},\text{ for some }\lambda >0.
\end{eqnarray*}
Setting $\mathbf{S}= \mathbf{V}_{\mG_1,\mG_2}$ yields the result.
\end{proof}

\subsubsection{Proofs of Section~\ref{sec:CS_known_U}}\label{proofs_1_CS}

\begin{proof}[Proof of Theorem~\ref{th:pvalue_V}]
We follow the steps of the proof of Theorem 1 in \cite{Gao}. We begin by deriving the null distribution of the test statistic $\norm{\mathbf{X}^T\nu}_{\mathbf{V}_{\mathcal{G}_1,\mathcal{G}_2}}$ under~\eqref{h0}. First, from~\cite[Theorem 2.3.10]{Gupta2018}, we have:
\begin{equation}
    \mathbf{X}^T\nu\overset{H_0^{\lbrace \mathcal{G}_1,\mathcal{G}_2\rbrace}}{\sim}\mathcal{N}_{p}(0,\mathbf{V}_{\mathcal{G}_1,\mathcal{G}_2}),
\end{equation}
which yields
\begin{equation}\label{chi_stat}
   \norm{\mathbf{X}^T\nu}_{\mathbf{V}_{\mathcal{G}_1,\mathcal{G}_2}}\overset{H_0^{\lbrace \mathcal{G}_1,\mathcal{G}_2 \rbrace}}{\sim}\chi_p,
\end{equation}
where the norm $\norm{\cdot}_{\mathbf{V}_{\mathcal{G}_1,\mathcal{G}_2}}$ is defined in~\eqref{norm_V}.
Let us now build the $p$-value for $H_0^{\lbrace \mathcal{G}_1,\mathcal{G}_2\rbrace}$, by slightly adapting the reasoning in \cite{Gao}. On one hand, for any $\nu\in\mathbb{R}^n$, we have
\begin{equation}\label{X_decomposed}
    \mathbf{X}=\boldsymbol\pi_\nu^\perp\mathbf{X}+(\mathbf{I}_n-\boldsymbol\pi_\nu^\perp\mathbf{X})=\boldsymbol\pi_\nu^\perp\mathbf{X}+\left(\frac{\norm{\mathbf{X}^T\nu}_{\mathbf{V}_{\mathcal{G}_1,\mathcal{G}_2}}}{\norm{\nu}_2^2}\right)\nu\,\mathrm{dir}_{\mathbf{V}_{\mathcal{G}_1,\mathcal{G}_2}}(\mathbf{X}^T\nu)^T.
\end{equation}
On the other hand, from Proposition~\ref{prop_conditions} we have $\boldsymbol\pi_\nu^\perp\mathbf{X}\independent \mathbf{X}^T\nu$, which implies $\norm{\mathbf{X}^T\nu}_{\mathbf{V}_{\mathcal{G}_1,\mathcal{G}_2}}\independent\boldsymbol\pi_\nu^\perp\mathbf{X}$, and $\norm{\mathbf{X}^T\nu}_{\mathbf{V}_{\mathcal{G}_1,\mathcal{G}_2}}\independent \mathrm{dir}_{\mathbf{V}_{\mathcal{G}_1,\mathcal{G}_2}}(\mathbf{X}^T\nu)$.
We can now plug \eqref{X_decomposed} in the definition of~\eqref{pvalue_V} and, taking into account the previous independence relationships, we can write:
\begin{align}\label{pv_proof}
    p_{\mathbf{V}_{\mathcal{G}_1,\mathcal{G}_2}}(\mathbf{x};\lbrace \mathcal{G}_1,\mathcal{G}_2\rbrace)=\mathbb{P}_{H_0^{\lbrace \mathcal{G}_1, \mathcal{G}_2\rbrace}}\left(\norm{\mathbf{X}^T\nu}_{\mathbf{V}_{\mathcal{G}_1,\mathcal{G}_2}}\geq \norm{\mathbf{x}^T\nu}_{\mathbf{V}_{\mathcal{G}_1,\mathcal{G}_2}}\,\, \right.\bigr\rvert\nonumber\\
    \left. \norm{\mathbf{X}^T\nu}_{\mathbf{V}_{\mathcal{G}_1,\mathcal{G}_2}}\in \,\mathcal{S}_{\mathbf{V}_{\mathcal{G}_1,\mathcal{G}_2}}(\mathbf{x};\lbrace \mathcal{G}_1,\mathcal{G}_2\rbrace)  \right),
\end{align}
where the set $\mathcal{S}_{\mathbf{V}_{\mathcal{G}_1,\mathcal{G}_2}}(\mathbf{x};\lbrace \mathcal{G}_1,\mathcal{G}_2\rbrace)$ is defined in \eqref{set_S}. Consequently, if we denote by $\mathbb{F}_p(t,\mathcal{S})$ the cumulative distribution function of a $\chi_p$ random variable truncated to the set $\mathcal{S}$, from \eqref{pv_proof} and \eqref{chi_stat} we have
\begin{equation}
    p_{\mathbf{V}_{\mathcal{G}_1,\mathcal{G}_2}}(\mathbf{x};\lbrace \mathcal{G}_1,\mathcal{G}_2\rbrace) = 1 - \mathbb{F}_p\left(\norm{\mathbf{x}^T\nu}_{\mathbf{V}_{\mathcal{G}_1,\mathcal{G}_2}}\,,\,\mathcal{S}_{\mathbf{V}_{\mathcal{G}_1,\mathcal{G}_2}}(\mathbf{x};\lbrace \mathcal{G}_1,\mathcal{G}_2\rbrace)\right),
\end{equation}
which proves the first statement \eqref{pvalue_F}. The control of selective type I error is proved identically to the reasoning in the proof of \cite[Theorem 1]{Gao}. 
\end{proof}

\begin{proof}[Proof of Lemma \ref{equivalence_sets}] Let us first show that the perturbed data sets $\mathbf{x}'(\phi)$, defined in \cite[Equation (13)]{Gao} and $\mathbf{x}'_{\mathbf{V}_{\mathcal{G}_1,\mathcal{G}_2}}(\phi)$, defined in \eqref{perturbed_x} are the same up to a scale transformation, i.e. that
\begin{equation}\label{scale_transformation}
    \mathbf{x}'_{\mathbf{V}_{\mathcal{G}_1,\mathcal{G}_2}}(\phi)=\mathbf{x}'\left(\frac{\norm{\mathbf{x}^T\nu}_2}{\norm{\mathbf{x}^T\nu}_{\mathbf{V}_{\mathcal{G}_1,\mathcal{G}_2}}}\,\phi\right)\quad\forall\,\phi\geq 0.    
\end{equation}
Note first that we can write
\begin{equation}\label{eq_proof_remark_1}
    \left(\frac{\norm{\mathbf{x}^T\nu}_2}{\norm{\mathbf{x}^T\nu}_{\mathbf{V}_{\mathcal{G}_1,\mathcal{G}_2}}}\,\phi-\norm{\mathbf{x}^T\nu}_{2}\right)\mathrm{dir}(\mathbf{x}^T\nu)=\left(\phi -\norm{\mathbf{x}^T\nu}_{\mathbf{V}_{\mathcal{G}_1,\mathcal{G}_2}}\right)\mathrm{dir}_{\mathbf{V}_{\mathcal{G}_1,\mathcal{G}_2}}(\mathbf{x}^T\nu),
\end{equation}
where $\mathrm{dir}(u)=u/\norm{u}_2\mathds{1}\lbrace u\neq 0\rbrace$. Replacing \eqref{eq_proof_remark_1} in \eqref{perturbed_2}, we have \eqref{scale_transformation}. Finally, it suffices to remark that
\begin{align*}
    \hat{\mathcal{S}}_{\mathbf{V}_{\mathcal{G}_1,\mathcal{G}_2}}=\left\lbrace \phi\geq 0\,:\,\mG_1,\mG_2\in\mathcal{C}\left(\mathbf{x}'_{\mathbf{V}_{\mathcal{G}_1,\mathcal{G}_2}}(\phi)\right)\right\rbrace=\left\lbrace \phi\geq 0\,:\,\mG_1,\mG_2\in\mathcal{C}\left(\mathbf{x}'\left(\frac{\norm{\mathbf{x}^T\nu}_2}{\norm{\mathbf{x}^T\nu}_{\mathbf{V}_{\mathcal{G}_1,\mathcal{G}_2}}}\,\phi\right)\right)\right\rbrace\nonumber\\
    =\left\lbrace \frac{\norm{\mathbf{x}^T\nu}_{\mathbf{V}_{\mathcal{G}_1,\mathcal{G}_2}}}{\norm{\mathbf{x}^T\nu}_2}\,\phi\,:\,\mG_1,\mG_2\in\mathcal{C}(\mathbf{x}'(\phi))\right\rbrace=\frac{\norm{\mathbf{x}^T\nu}_{\mathbf{V}_{\mathcal{G}_1,\mathcal{G}_2}}}{\norm{\mathbf{x}^T\nu}_2}\,\hat{\mathcal{S}},
\end{align*}
which concludes the proof.
\end{proof}

\subsubsection{Proofs of Section~\ref{sec:general_known_U}}\label{proofs_1_general}

We start by stating some technical results that are needed for the proof of Theorem~\ref{th:null_dist_XC}. In what follows, we will use the notation $\mathbf{S}$ to denote both a $d\times d$ real matrix and its associated linear mapping, that is, the map $\mathbf{S}\,:\,\mathbb{R}^d\rightarrow \mathbb{R}^d$ such that $\mathbf{S}(y)=\mathbf{S}y$ for all $y\in\mathbb{R}^d$. For any vector subspace $F\subset\mathbb{R}^d$, we will denote by $\bPi_F$ the orthogonal projection onto $F$.

\begin{theorem} [Proposition 3.13 in~\cite{Eaton2007}]\label{thm:cond:gaussian}
Let $\bX \sim \normal(\bmu,\bSigma)$ be a Gaussian vector in $\R^n$, let $\bA$ be a matrix in $M_{p,n}(\R)$. Then, the conditional vector $(\bX|\bA\bX=\by)$ is a Gaussian vector satisfying  
\begin{equation}
    (\bX | \bA\bX = \by)\sim \normal(\bmu + \bSigma \bA^T(
    \bA\bSigma \bA^T)^{\dagger}(\by-\bA\bmu),\bSigma-\bSigma \bA^T(\bA\bSigma \bA^T)^{\dagger}\bA\bSigma),
\end{equation}
where $(\bA\bSigma \bA^T)^{\dagger}$ is the Moore-Penrose pseudoinverse of the matrix $\bA\bSigma \bA^T$.
\end{theorem}

\begin{lemma}
\label{lemma:proj:prod:spaces:general}
Let $F$, $G$ be two orthogonal subspaces of $\R^d$. For any full-rank symmetric matrix $\mathbf{S}\in\mathbb{R}^{d\times d}$, let $\mathbf{S}_{F,G}$ be the $2d\times 2d$ matrix: 
\begin{equation}\label{SFG}
\mathbf{S}_{F, G} := 
(\bPi_F,\bPi_G)^T \mathbf{S}\, (\bPi_F,\bPi_G) =
\begin{bNiceArray}{cc}
\bPi_F \mathbf{S} \bPi_F
&
\bPi_F \mathbf{S} \bPi_G\\
\bPi_G \mathbf{S} \bPi_F
&
\bPi_G \mathbf{S} \bPi_G
\end{bNiceArray}.
\end{equation}
Then, the range of the linear mapping associated to $\mathbf{S}_{F,G}$ is given by:
\begin{equation}
\label{eq:range:SigmaFG}
\range(\mathbf{S}_{F,G}) = \tilde{F}\times\tilde{G},\quad\textrm{with}\quad\tilde{F}=F \cap \mathbf{S}(F\oplus G)\quad\textrm{and}\quad\tilde{G}=G \cap \mathbf{S}(F\oplus G).
\end{equation}
Moreover, the restriction of $\mathbf{S}_{F,G}$ to its range is a one-to-one mapping whose inverse is given by:
\begin{equation}
\label{eq:inverse:SigmaFG}
\mathbf{S}_{F,G}^{-1} (u,v)^T 
= 
\left(\bPi_{\tilde{F}} \circ\mathbf{S}^{-1} (u+v),\bPi_{\tilde{G}} \circ \mathbf{S}^{-1} (u+v)\right)^T,\quad\forall\,(u,v)\in\range(\mathbf{S}_{F,G}).
\end{equation}
\end{lemma}

\begin{proof}[Proof of Lemma~\ref{lemma:proj:prod:spaces:general}]
We show~\eqref{eq:range:SigmaFG} using double inclusion.  The following reasoning shows that $\range(\mathbf{S}_{F,G})\subset \tilde{F}\times \tilde{G}$. 
\begin{eqnarray*}
\mathbf{S}_{F,G}(\R^d\times \R^d)
&\subset&
(\bPi_{F} \mathbf{S} \bPi_{F})(\R^d) +(\bPi_{F} \mathbf{S} \bPi_{G})(\R^d)\times (\bPi_{G} \mathbf{S} \bPi_{F})(\R^d) +(\bPi_{G} \mathbf{S} \bPi_{G})(\R^d)\\
&\subset&
\bPi_{F} \mathbf{S} (F\oplus G)\times \bPi_{G} \mathbf{S} (F\oplus G)\\
&\subset&
F \cap \mathbf{S}(F\oplus G) \times G \cap \mathbf{S}(F\oplus G)= \tilde{F}\times \tilde{G}.
\end{eqnarray*}
We show now the reciprocal inclusion. Letting $(u,v)\in \tilde{F}\times \tilde{G}$, we have:
\begin{eqnarray*}
\mathbf{S}_{F,G}\left(\mathbf{S}^{-1} (u+v),\mathbf{S}^{-1} (u+v)\right)^T
&=& 
\Big(\bPi_{F} \mathbf{S} \bPi_{F} \mathbf{S}^{-1} (u+v) +
\bPi_{F} \mathbf{S} \bPi_{G} \mathbf{S}^{-1} (u+v), \\
& &
\bPi_{F} \mathbf{S} \bPi_{F} \mathbf{S}^{-1} (u+v) +
\bPi_{F} \mathbf{S} \bPi_{G} \mathbf{S}^{-1} (u+v)\Big)\\
\big(\text{As } F\perp G: \, \bPi_{F}+\bPi_{G} = \bPi_{F\oplus G} \big)
&=&
\big(\bPi_{F} \mathbf{S} \bPi_{F\oplus G} \mathbf{S}^{-1} (u+v), \,
\bPi_{G} \mathbf{S} \bPi_{F\oplus G} \mathbf{S}^{-1} (u+v)\big)\\
\big(\text{As } u+v \in \mathbf{S} (F\oplus G) \big)
&=& 
\big(\bPi_{F} \mathbf{S} (\mathbf{S}^{-1} (u+v)),\, 
\bPi_{G} \mathbf{S} (\mathbf{S}^{-1} (u+v))\big)\\
&=& 
\big(\bPi_{F} (u+v), \,
\bPi_{G} (u+v) \big)\\
\big(\text{As } u\in F\text{ and } v\in G \big)
&=& 
(u,v).
\end{eqnarray*}
Consequently, we have $\tilde{F}\times \tilde{G}\subset \range(\mathbf{S}_{F,G})$. We conclude by showing~\eqref{eq:inverse:SigmaFG}. Following from the fact that the range of the linear mapping associated to any symmetric matrix is orthogonal to its kernel, we have that $\mathbf{S}_{F,G}=\mathbf{S}_{F,G}\circ\bPi_{\range{(\mathbf{S}_{F,G}})}=\mathbf{S}_{F,G}\circ\bPi_{\tilde{F}\times \tilde{G}}$. This, together with the fact that $\bPi_{\tilde{F}\times \tilde{G}} = (\bPi_{\tilde{F}},\bPi_{\tilde{G}})$, yields:
\[
\mathbf{S}_{F,G} \left(\mathbf{S}^{-1} (u+v),\mathbf{S}^{-1} (u+v)\right)^T= \mathbf{S}_{F,G} \left(\bPi_{\tilde{F}} \circ\mathbf{S}^{-1} (u+v),\bPi_{\tilde{G}} \circ \mathbf{S}^{-1} (u+v)\right)^T,
\]
for all $(u,v)\in\range{(\mathbf{S}_{F,G}})$, which concludes the proof.
\end{proof}

\begin{lemma}
\label{cor:inclusion:prod:space}
Let $F,F',G,G'$ be subspaces of $\R^d$ such that $F'\subset F$, $G'\subset G$ and $F \perp G$. For any symmetric matrix $\mathbf{S} \in \R^{d\times d}$, let $\mathbf{S}_{F,G}$ be the one defined in~\eqref{SFG} and let $\mathbf{S}_{F',G'}$ be defined analogously. Then, the following inclusions hold: 
\begin{itemize}
\item[$(i)$] $\range(\mathbf{S}_{F',G'}) \subset \range(\mathbf{S}_{F,G})$,
\item[$(ii)$] $\mathbf{S}_{F',G'}^\dagger(0_d\times \R^d)
\subset 
\mathbf{S}_{F,G}^\dagger(0_d\times \R^d)$,
\item[$(iii)$] $\mathbf{S}_{F',G'}^\dagger(\R^d \times 0_d)
\subset \mathbf{S}_{F,G}^\dagger(\R^d\times 0_d)$,
\end{itemize}
where $\mathbf{A}^\dagger(\cdot)$ is the linear mapping associated to the Moore-Penrose pseudo-inverse of a matrix $\mathbf{A}$.
\end{lemma}

\begin{proof}[Proof of Lemma~\ref{cor:inclusion:prod:space}]
We start by showing $(i)$. As, by hypothesis, we have:
\[
F' \cap \mathbf{S}(F'\oplus G') 
\subset
F \cap \mathbf{S}(F\oplus G) 
\quad \text{and} \quad 
G' \cap \mathbf{S}(F'\oplus G') 
\subset 
G \cap \mathbf{S}(F\oplus G), 
\]
Equation \eqref{eq:range:SigmaFG} yields $\range(\mathbf{S}_{F',G'}) \subset \range(\mathbf{S}_{F,G})$. Let us show $(ii)$. In the following, we will write $\range(\mathbf{S}_{F',G'})= \tilde{F}'\times \tilde{G}'$ and $\range(\mathbf{S}_{F,G})= \tilde{F}\times \tilde{G}$, as in~\eqref{eq:range:SigmaFG}. Inclusion $(i)$ implies:
\begin{equation}\label{inclusions_lemma}
\tilde{F}'\subset \tilde{F} \quad \text{and} \quad \tilde{G}'\subset \tilde{G}.
\end{equation}
As the pseudo-inverse of a symmetric matrix can be written as the composition of the orthogonal projection onto its range with its inverse on its range, Equation~\eqref{eq:inverse:SigmaFG} yields:
\begin{eqnarray*}
\mathbf{S}_{F',G'}^\dagger(0_d\times \R^d) 
&=&
\mathbf{S}_{F',G'}^{-1}\circ \bPi_{\tilde{F}'\times \tilde{G}'}\,(0_d \times \R^d)\\
&=&
\mathbf{S}_{F,G'}^{-1}(0_d \times \tilde{G}')\\
&=&
(\bPi_{\tilde{F}'},\bPi_{\tilde{G}'})(\mathbf{S}^{-1} \tilde{G}').
\end{eqnarray*}
Following the same reasoning we can show that $\mathbf{S}_{F,G}^\dagger(0_d\times \R^d) = (\bPi_{\tilde{F}},\bPi_{\tilde{G}})(\mathbf{S}^{-1} \tilde{G})$, so if we prove that
\begin{equation}\label{last_inclusion_jpp}
(\bPi_{\tilde{F}'},\bPi_{\tilde{G}'})(\mathbf{S}^{-1} \tilde{G}') \subset (\bPi_{\tilde{F}},\bPi_{\tilde{G}})(\mathbf{S}^{-1} \tilde{G}),
\end{equation}
inclusion $(ii)$ will follow. Let $(h_{\tilde{F}'},h_{\tilde{G}'}) \in (\bPi_{\tilde{F}'},\bPi_{\tilde{G}'})(\mathbf{S}^{-1} \tilde{G}')$ and let $h = h_{\tilde{F}'} + h_{\tilde{G}'}$. Thus, $h\in(\tilde{F}'\oplus \tilde{G}')\cap (\mathbf{S}^{-1} \tilde{G}')\subset(\tilde{F}\oplus \tilde{G})\cap (\mathbf{S}^{-1} \tilde{G})$. From the unicity of the decomposition in $\tilde{F}\oplus\tilde{G}$ and~\eqref{inclusions_lemma}, we have 
$(\bPi_{\tilde{F}},\bPi_{\tilde{G}})(h)=(h_{\tilde{F}'},h_{\tilde{G}'})$, which yields~\eqref{last_inclusion_jpp}. The reasoning to prove $(iii)$ is identical.
\end{proof}

We are now ready to prove Theorem~\ref{th:null_dist_XC} and Proposition~\ref{prop:gamma_chi}. Throughout the following proofs we will manipulate two intrinsically similar vector spaces, $\R^{n\times p }$ and $\R^{np\times 1}$, that are identified through the isometry $\text{vec}: \R^{n\times p} \to \R^{np\times 1}$. For the sake of a simpler notation, we will write $\widetilde{F} = \vect{F}$ for any vector space $F\subset\mathbb{R}^{n\times p}$. Then, the orthogonal projections onto $\tilde{F}$ and $F$ are identified via the equality $\bPi_{\tilde{F}}=\bI_p\otimes \bPi_{F}$.


\begin{proof}[Proof of Theorem~\ref{th:null_dist_XC}] We prove Theorem~\ref{th:null_dist_XC} in two steps. First, we show that the conditioned vector~\eqref{cond_mean_vector} has a $p$-dimensional normal distribution under~\eqref{h0}, explicitly deriving its mean and covariance matrix. Then, we will show that such distribution is centered. To shed light on the objects introduced in this proof, we keep the notation of the proof of Proposition~\ref{prop_conditions}. In particular, we denote by $\Lambda\subset\mathbb{R}^{n\times p}$ the kernel of the linear mapping $\nu^T: \bM \in \R^{n\times p} \mapsto \nu^T \bM$ and by $\knu^\perp$ its orthogonal complement. This means that $\Piknu=\boldsymbol{\pi}^\perp_{\nu}$ and $\Piknuperp= \boldsymbol{\pi}_{\nu}$, respectively. The idea is to find a matrix $\bA_\bx$ and a vector $y_\bx$ such that the conditioned vector
\begin{equation}\label{vec_cond}
    \vect{\bX}\,|\, \lbrace \bPi_\Lambda\mathbf{X} = \bPi_\Lambda\mathbf{x}, \mathrm{dir}(\nu^T\mathbf{X})=\pm\mathrm{dir}(\nu^T\mathbf{x})\rbrace
\end{equation}
can be rewritten as $\vect{\bX}\,|\,\lbrace\mathbf{A}_\bx\vect{\bX}=y_\bx\rbrace$. Then, applying Theorem~\ref{thm:cond:gaussian} would yield an explicit Gaussian distribution for
\begin{equation}
  (\bI_p\otimes\nu)\vect{\bX}\,|\,\lbrace \mathbf{A}_\bx\vect{\bX}=y_\bx\rbrace =  \nu^T\bX\,|\,\lbrace \mathbf{A}_\bx\vect{\bX}=y\rbrace=\bar{\bX}_\nu(\bx).
\end{equation}

We start by rewriting the condition $\mathrm{dir}(\nu^T\mathbf{X})=\pm\mathrm{dir}(\nu^T\mathbf{x})$ as follows. First, we have:
\begin{eqnarray*}
\mathrm{dir}({\nu^T}\bX)=\pm\mathrm{dir}(\nu^T\bx) 
&\iff& 
\nu^T\bX \in \spanned(\nu^T\bx) \\
&\iff& 
\bX \in  V_{\bx} := \left(\knu \oplus \spanned(\bx)\right)\\
&\iff& \vect{\bX}\in \tilde{V}_{\bx} = \tilde{\Lambda}\oplus \spanned(\vect{\bx})\\
&\iff& 
\bPi_{\tilde{V}_{\bx}^\perp} \vect{\bX}= 0.
\end{eqnarray*}
Writing ${\bx_\nu}:=\vect{\bPi_{\Lambda^\perp}\bx}$, we have that
\begin{equation*}
\tilde{V}_\bx^\perp = \left(\tilde{\Lambda} \oplus \spanned(\vect{\bx})\right)^\perp =
\left(\tilde{\Lambda} \oplus \spanned({\bx_\nu})\right)^\perp
=
\tilde{\Lambda}^\perp \cap {\bx_\nu^\perp},
\end{equation*}
where ${\bx_\nu^\perp}$ denotes the orthogonal complement of ${\bx_\nu}$. Since ${\bx_\nu}\in\tilde{\Lambda}^\perp$, we can write $\bPi_{\tilde{V}_{\bx}^\perp}= \bPi_{{\bx_\nu^\perp}}\circ\bPi_{\tilde{\Lambda}^\perp}$. This yields:
\begin{equation}\label{eq:mat:direct:equiv}
\mathrm{dir}({\nu}^T\bX)=\pm\mathrm{dir}({\nu}^T\bx)  \iff
\bigl(\bPi_{{\bx_\nu^\perp}} \circ \bPi_{\tilde{\Lambda}^\perp}\bigr)\vect{\bX} = \bPi_{{\bx_\nu^\perp}}\bigl(\bI_p\otimes \bPi_{\Lambda^\perp}\bigr)\vect{\bX}=0. 
\end{equation}
Finally, using that
\begin{equation}\label{eq:mat:interp:equiv}
\Piknu\bX=\Piknu\bx \iff (\bI_p\otimes \bPi_\Lambda)\vect{\bX}=(\bI_p\otimes \bPi_\Lambda)\vect{\bx},
\end{equation}
we can characterize the conditioning set in~\eqref{vec_cond} as follows:
\begin{equation}
\mathrm{dir}({\nu}\bX)= \pm \mathrm{dir}({\nu}\bx) \text{ and } \Piknu\bX=\Piknu\bx\iff
{\bA_\bx}\vect{\bx} = y_\bx,
\end{equation}
where ${\bA_\bx}$ and $y_\bx$ are defined as
\begin{equation}\label{eq:bAx:by}
{\bA_\bx} =
\begin{bNiceArray}{c}
\bPi_{{\bx_\nu^\perp}}(\bI_p\otimes\bPi_{\Lambda^\perp})\\
\bI_p\otimes\bPi_{\Lambda} 
\end{bNiceArray}, 
\hspace{1cm}
y_\bx= \begin{bNiceArray}{c}
0_{np}\\
(\bI_p\otimes\bPi_{\Lambda})\vect{\bx}
\end{bNiceArray},
\end{equation}
and $\bPi_{\bx_\nu^\perp}$ corresponds to the object $\boldsymbol{\pi}^\perp_{\bx_\nu}$ defined in Theorem~\ref{th:null_dist_XC}. Finally, using Theorem~\ref{thm:cond:gaussian} and the properties of the multivariate Gaussian distribution, we have that
\begin{equation*}
   \bar{\bX}_\nu(\bx)\sim\mathcal{N}_p(\bar{\mu}_\nu(\bx), \bGamma_\bx),
\end{equation*}
where
\begin{equation}
    \bar{\mu}_\nu(\bx)=(\bI_p\otimes\nu^T)\bigl(\vect{\bmu} + (\bSigma\otimes\bU){\bA_\bx^T}({\bA_\bx} (\bSigma\otimes\bU) {\bA_\bx^T})^\dagger(y_\bx-{\bA_\bx}\vect{\bmu})\bigr),
\end{equation}
and $\bGamma_\bx$ is defined in~\eqref{gamma_x}.

We conclude by showing that $\bar{\mu}_\nu(\bx)=0_p$ under~\eqref{h0} for all $\bx\in\mathbb{R}^{n\times p}$. In what follows, we assume that~\eqref{h0} holds. First, note that \eqref{h0} implies
\begin{equation}
\label{eq:nHequiv}
(\bI_p\otimes \nu^T)\vect{\bmu} =0_{np}\quad\textrm{and}\quad
{\bA_\bx}\vect{\bmu} = (0_{np},\vect{\bmu})^T,
\end{equation} 
yielding
\begin{equation}
    y_\bx-{\bA_\bx}\vect{\bmu} = (0_{np},\,(\bI_p\otimes\bPi_{\Lambda})\vect{\bx} - \vect{\bmu}). 
\end{equation}
Consequently, proving $\bar{\mu}_\nu(\bx)=0$ comes down to show that $0_{np}\times \R^{np}$ is included in the kernel of the linear operator defined by the matrix
\begin{equation}
   (\bI_p\otimes\nu^T)(\bSigma\otimes\bU){\bA_\bx^T}({\bA_\bx} (\bSigma\otimes\bU) {\bA_\bx^T})^\dagger,
\end{equation}
or, equivalently, in the kernel of the linear operator associated to
\begin{equation}\label{eq:muc:equiv1} 
\bPi_{\tilde{\Lambda}^\perp}(\bSigma\otimes\bU){\bA_\bx^T}({\bA_\bx} (\bSigma\otimes\bU) {\bA_\bx^T})^\dagger.
\end{equation}
Let us consider the matrix $\bA = (\bPi_{\tilde{\Lambda}^\perp},
\bPi_{\Tilde{\Lambda}})^T$. Then, if the following statements hold:
\begin{enumerate}
\item[(S1)] $\bPi_{\tilde{\Lambda}^\perp}(\bSigma\otimes\bU) \bA^T(\bA(\bSigma\otimes\bU) \bA^T)^\dagger (0_{np}\times \R^{np}) = 0_{np}$,
\item[(S2)] $\bPi_{\tilde{\Lambda}^\perp}(\bSigma\otimes\bU) {\bA_\bx^T}(\bA(\bSigma\otimes\bU) {\bA_\bx^T})^\dagger (0_{np}\times \R^{np})
\subset
\bPi_{\tilde{\Lambda}^\perp}(\bSigma\otimes\bU) \bA^T({\bA_\bx}(\bSigma\otimes\bU) \bA^T)^\dagger (0_{np}\times \R^{np})$,
\end{enumerate}
the subspace $0_{np}\times \R^{np}$ is included in the kernel of~\eqref{eq:muc:equiv1} and the result follows.

Since $\bPi_{\tilde{\Lambda}^\perp}$ is a sub-block of $\bA$, (S1) is equivalent to the equality:
\begin{equation}
\label{eq:projection:Sigma:knu}
\bA(\bSigma\otimes\bU) \bA^T(\bA(\bSigma\otimes\bU) \bA^T)^\dagger (0_{np}\times \R^{np})  = (0_{np},V)^T,
\end{equation}
for a subspace $V \subset \R^{np}$. From the properties of the Moore-Penrose pseudo-inverse, we have that
\begin{equation*}
\bA(\bSigma\otimes\bU) \bA^T(\bA(\bSigma\otimes\bU) \bA^T)^\dagger = \bPi_{\range(\bA(\bSigma\otimes\bU) \bA^T)}.
\end{equation*}
This, together with Lemma~\ref{lemma:proj:prod:spaces:general}, yields~\eqref{eq:projection:Sigma:knu}.

To prove (S2), it suffices to show that:
\[
{\bA_\bx^T}({\bA_\bx}(\bSigma\otimes\bU) {\bA_\bx^T})^\dagger(0_{np}\times \R^{np})
\subset
\bA^T(\bA(\bSigma\otimes\bU) \bA^T)^\dagger (0_{np}\times \R^{np}).
\]
Inclusion $(ii)$ in Lemma~\ref{cor:inclusion:prod:space} yields:
\begin{equation*}
({\bA_\bx}(\bSigma\otimes\bU) {\bA_\bx^T})^\dagger(0_{np}\times \R^{np})
\subset
(\bA(\bSigma\otimes\bU) \bA^T)^\dagger (0_{np}\times \R^{np}).
\end{equation*}
Finally, following the same strategy as in the proof of Lemma~\ref{cor:inclusion:prod:space}, we can show that the previous inclusion is stable when composed by ${\bA_\bx}$ on the left side and $\bA$ on the right side, which yields (S2).
\end{proof}

\begin{proof}[Proof of Proposition~\ref{prop:gamma_chi}] We keep the notation of the proof of Theorem~\ref{th:null_dist_XC}. To show~\eqref{pvalue_gamma_chi1}, the key idea is to prove that the rank of $\bGamma_\bx$ is $\bx$-a.s. constant equal to one. From the condition $\mathrm{dir}(\nu^T \bX)=\pm\mathrm{dir}(\nu^T \bx)$, we have clearly that the rank of $\bGamma_\bx$ is upper bounded by one. Moreover, since the mapping $\nu : \R^{1\times p}\to \R^{n\times p}$ defined by $z \mapsto \nu z$ is injective, the rank of the covariance matrix of $\bar{\bX}_\nu(\bx)$ is the same as the rank of $\nu\bar{\bX}_\nu(\bx)$ and, from the proof of Theorem~\ref{th:null_dist_XC}, the same as the rank of the matrix
\begin{equation*}
\bPiVperp \left(\bX\, |\, \lbrace\Piknu\bX=\Piknu\bx,\,
\bPi_{\bx_\nu^\perp}\bPiVperp\bX = \mathbf{0}\rbrace\right).
\end{equation*}
Following the steps of the proof of Theorem~\ref{thm:cond:gaussian} (Proposition 3.13 in~\cite{Eaton2007}), we can decompose $\bPiVperp \bX$ as the sum of two independent Gaussian vectors $\bY$ and $\mathbf{Z}$, with  with $\bY=\bPiV \bX + \bPi_{\bx_\nu^\perp}\bPiVperp\bX$. Thus, $\mathbf{Z}$ must be non-zero since otherwise $\bPi_{\bx_\nu}\bX=0$, and $\bX$ is non degenerated. As $\bGamma_\bx$ is the covariance matrix of $\mathbf{Z}$, its rank is $\bx$-a.s. equal to one. This implies that $\norm{\bar{\bX}_\nu(\bx)}_{\bGamma_\bx}\sim\chi_1$ $\bx$-a.s. under~\eqref{h0}, where $\norm{\cdot}_{\bGamma_\bx}$ is defined in~\eqref{norm_gamma_x}. Following the same steps as in the proofs of Theorem~\ref{th:pvalue_V} and Lemma~\ref{equivalence_sets}, we have~\eqref{pvalue_gamma_chi1} and~\eqref{S_gamma}.
\end{proof}

\subsection{Proofs of Section~\ref{sec:unknown_sigma}}\label{proofs_2}

\begin{proof}[Proof of Theorem~\ref{th:over_estimate}] We follows the steps of the proof of Theorem 4 in \cite{Gao}. For simplicity, we use $\hat{p}_n$ to denote $p_{\hat{\mathbf{V}}_{\mG_1^{(n)},\mG_2^{(n)}}}\bigl(\mathbf{X}^{(n)};\bigl\lbrace \mG_1^{(n)},\mG_2^{(n)}\bigr\rbrace\bigr)$, $p_n$ to denote $p_{\mathbf{V}_{\mG_1^{(n)},\mG_2^{(n)}}}\bigl(\mathbf{X}^{(n)};\bigl\lbrace \mG_1^{(n)},\mG_2^{(n)}\bigr\rbrace\bigr)$, $\hat{\mathbf{V}}_n$ to denote $\hat{\mathbf{V}}_{\mG_1^{(n)},\mG_2^{(n)}}$, $\mathbf{V}_n$ to denote $\mathbf{V}_{\mG_1^{(n)},\mG_2^{(n)}}$ and $\nu_n$ to denote $\nu_{\mG_1^{(n)},\mG_2^{(n)}}$. We will also write the difference of cluster means as the row vector $\nu_n^T\mathbf{X}^{(n)}$ for the sake of a clearer notation. If we show that
\begin{equation}\label{condition}
    \hat{\mathbf{\Sigma}}\Bigl(\mathbf{X}^{(n)}\Bigr)\succeq \mathbf{\Sigma}\,\Rightarrow\,\hat{p}_n\geq p_n,
\end{equation}
then the result follows using the same reasoning as in the proof of \cite[Theorem 4]{Gao}, replacing the usual order $\geq$ in $\mathbb{R}$ by the Loewner partial order $\succeq$ between matrices. Consequently, we only need to prove \eqref{condition}. First note that, as the Kronecker product is distributive, we have
\begin{equation}
    \hat{\mathbf{\Sigma}}\Bigl(\mathbf{X}^{(n)}\Bigr)\succeq \mathbf{\Sigma}\,\Rightarrow\, \hat{\mathbf{V}}_n\succeq \mathbf{V}_n.
\end{equation}
Next, by Corollary 7.7.4(a) and Theorem 7.7.2(a) in \cite{horn2013matrix}, we can write
\begin{align}
    \hat{\mathbf{V}}_n\succeq \mathbf{V}_n\,\Leftrightarrow\,\mathbf{V}_n^{-1}\succeq \hat{\mathbf{V}}_n^{-1}
    \Rightarrow \left(\nu_n^T\mathbf{X}^{(n)}\right)\,\mathbf{V}^{-1}_n\,\left(\nu_n^T\mathbf{X}^{(n)}\right)^T \nonumber \\ 
    \geq \left(\nu_n^T\mathbf{X}^{(n)}\right)\,\hat{\mathbf{V}}^{-1}_n\,\left(\nu_n^T\mathbf{X}^{(n)}\right)^T 
    \Leftrightarrow\norm{\nu_n^T\mathbf{X}^{(n)}}_{\mathbf{V}_n} \geq \norm{\nu_n^T\mathbf{X}^{(n)}}_{\hat{\mathbf{V}}_n}.
\end{align}
Let us then state that, if $\mathbb{F}_p(t,c,\mathcal{S})$ denotes the cumulative distribution function of a $c\cdot\chi_p$ distribution truncated to the set $\mathcal{S}$, for $c>0$, it follows that
\begin{equation}\label{statement}
    \mathbb{F}_p(t,c,a\,\mathcal{S})=\mathbb{F}_p\left(\frac{t}{a}, \frac{c}{a}, \mathcal{S}\right),
\end{equation}
for any $a>0$. We prove \eqref{statement} as a technical lemma after the proof. Consequently, taking
\begin{equation}
    a=\frac{\norm{\nu_n^T\mathbf{X}^{(n)}}_{\hat{\mathbf{V}}_n} }{\norm{\nu_n^T\mathbf{X}^{(n)}}_{\mathbf{V}_n}}\leq 1,
\end{equation}
we have
\begin{align}
    1-\hat{p}_n=\mathbb{F}_p\left( \norm{\nu_n^T\mathbf{X}^{(n)}}_{\hat{\mathbf{V}}_n}\,,\,\mathcal{S}_{\hat{\mathbf{V}}_n} \right)=\mathbb{F}_p\left(\norm{\nu_n^T\mathbf{X}^{(n)}}_{\hat{\mathbf{V}}_n}\,,\,a\,\mathcal{S}_{\mathbf{V}_n} \right)\nonumber\\
    =\mathbb{F}_p\left(\frac{1}{a}\,\norm{\nu_n^T\mathbf{X}^{(n)}}_{\hat{\mathbf{V}}_n}\,,\,\frac{1}{a}\,,\,\mathcal{S}_{\mathbf{V}_n} \right)=\mathbb{F}_p\left(\norm{\nu_n^T\mathbf{X}^{(n)}}_{\mathbf{V}_n}\,,\,\frac{1}{a}\,,\,\mathcal{S}_{\mathbf{V}_n} \right)\nonumber\\
    \leq \mathbb{F}_p\left(\norm{\nu_n^T\mathbf{X}^{(n)}}_{\mathbf{V}_n}\,,\,1\,,\,\mathcal{S}_{\mathbf{V}_n} \right)=1-p_n,
\end{align}
where the last inequality follows from Lemma A.3 in \cite{Gao}. This shows \eqref{condition}. 
\end{proof}

\begin{lemma}\label{technical_lemma} For $c>0$ and $\emptyset\neq\mathcal{S}\subset\mathbb{R}$, let $\mathbb{F}_p(t,c,\mathcal{S})$ denote the cumulative distribution function of a $c\cdot\chi_p$ distribution truncated to $\mathcal{S}$. Then, for any $a>0$, it holds
\begin{equation*}
    \mathbb{F}_p(t,c,a\,\mathcal{S})=\mathbb{F}_p\left(\frac{t}{a}, \frac{c}{a}, \mathcal{S}\right).
\end{equation*}
\end{lemma}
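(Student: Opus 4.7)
The plan is to unwind the definition of the truncated CDF and apply a rescaling argument. Let $W \sim \chi_p$, so that the cumulative distribution function of a $c \cdot \chi_p$ random variable truncated to a set $\mathcal{T} \subset \mathbb{R}$ can be written as
\begin{equation*}
    \mathbb{F}_p(t, c, \mathcal{T}) = \frac{\mathbb{P}(cW \leq t,\, cW \in \mathcal{T})}{\mathbb{P}(cW \in \mathcal{T})},
\end{equation*}
provided the denominator is strictly positive (which is implicit, since otherwise the truncated distribution itself is not well-defined).

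Applying this with $\mathcal{T} = a\mathcal{S}$, the key observation is that, since $a > 0$, the event $\{cW \in a\mathcal{S}\}$ coincides with the event $\{(c/a) W \in \mathcal{S}\}$, and likewise $\{cW \leq t\}$ coincides with $\{(c/a) W \leq t/a\}$. Substituting these equivalent events inside both probabilities gives
\begin{equation*}
    \mathbb{F}_p(t, c, a\mathcal{S}) = \frac{\mathbb{P}\bigl((c/a) W \leq t/a,\, (c/a) W \in \mathcal{S}\bigr)}{\mathbb{P}\bigl((c/a) W \in \mathcal{S}\bigr)} = \mathbb{F}_p\!\left(\frac{t}{a}, \frac{c}{a}, \mathcal{S}\right),
\end{equation*}
which is the desired identity.

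The proof is essentially a one-line change of variables, so there is no substantive obstacle; the only care required is to note that $a > 0$ guarantees that the scaling preserves the direction of the inequality $cW \leq t$, and that $a \mathcal{S}$ and $\mathcal{S}$ differ by a strictly positive rescaling so that the denominator $\mathbb{P}((c/a) W \in \mathcal{S})$ is non-zero exactly when $\mathbb{P}(cW \in a\mathcal{S})$ is non-zero.
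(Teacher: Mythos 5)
Your proof is correct, and it takes a different route from the one in the paper. You work at the level of events: writing the truncated CDF as the conditional probability $\mathbb{P}(cW\leq t\mid cW\in\mathcal{T})$ for $W\sim\chi_p$, and then observing that, because $a>0$, the events $\lbrace cW\in a\,\mathcal{S}\rbrace$ and $\lbrace cW\leq t\rbrace$ coincide with $\lbrace (c/a)W\in\mathcal{S}\rbrace$ and $\lbrace (c/a)W\leq t/a\rbrace$ respectively, so the identity follows with no computation. The paper instead argues at the level of densities: it writes the truncated $c\cdot\chi_p$ density explicitly (in the form used in Lemma A.3 of Gao \textit{et al.}), shows by algebraic manipulation and a change of variables in the normalizing integral that $f(t,c,a\,\mathcal{S})=\tfrac{1}{a}f\!\left(\tfrac{t}{a},\tfrac{c}{a},\mathcal{S}\right)$, and then integrates this relation with a second change of variables to obtain the CDF identity. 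Your argument is shorter and more elementary, and it makes transparent that the result is purely a scaling property of the law of $cW$, independent of the specific $\chi_p$ form; the paper's computation has the advantage of staying in the explicit density formalism that the surrounding proof of Theorem 3.1 (and the appeal to Gao's Lemma A.3 for monotonicity in $c$) is couched in. Your remark on the implicit positivity of the truncation probability, and its invariance under the rescaling of $\mathcal{S}$, is the right caveat and is handled correctly.
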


\begin{proof}[Proof of Lemma \ref{technical_lemma}]
     First, if we denote by $f(t,c,\mathcal{S)}$ the probability density function of a $c\cdot\chi_p$ distribution truncated to the set $\mathcal{S}$, we have
\begin{equation}
    f(t,c,a\,\mathcal{S})=\frac{1}{a}\,f\left(\frac{t}{a},\frac{c}{a},\mathcal{S}\right).
\end{equation}
Indeed, following the first lines of the proof of \cite[Lemma A.3]{Gao}, we can rewrite $f(t,c,a\,\mathcal{S})$ as
\begin{equation}
    f(t,c, a\,\mathcal{S})=\frac{t^{p-1}\,\mathds{1}\lbrace t\in a\,\mathcal{S}\rbrace}{\int u^{p-1}\exp(-\frac{u^2}{2c^2}),\mathds{1}\lbrace t\in a\,\mathcal{S}\rbrace\,du }\exp\left(-\frac{t^2}{2c^2}\right),
\end{equation}
that we can easily express in terms of $t/a$ as
\begin{equation}
    f(t,c, a\,\mathcal{S})=\frac{\left(\frac{t}{a}\right)^{p-1}\,\mathds{1}\lbrace \frac{t}{a}\in \mathcal{S}\rbrace}{\int \left(\frac{u}{a}\right)^{p-1}\exp(-\frac{(u/a)^2}{2(c/a)^2}),\mathds{1}\lbrace \frac{t}{a}\in \mathcal{S}\rbrace\,du}\exp\left(-\frac{(t/a)^2}{2(c/a)^2}\right)=\frac{1}{a}\,f\left(\frac{t}{a},\frac{c}{a},\mathcal{S}\right),
\end{equation}
where the last equality follows from taking the variable change $y=u/a$ in the integral. Finally, we have
\begin{equation}\label{cdf_change}
    \mathbb{F}_p(t,c,a\,\mathcal{S})=\int_0^t f(x,c,a\,\mathcal{S})\,dx=\frac{1}{a}\int_0^t f\left(\frac{x}{a},\frac{c}{a},\mathcal{S}\right)\,dx = \int_0^{\frac{t}{a}}f\left(u,\frac{c}{a},\mathcal{S}\right)\,du\\
    =\mathbb{F}_p\left(\frac{t}{a},\frac{c}{a},\mathcal{S}\right),\nonumber
\end{equation}
which concludes the proof.
\end{proof}

\begin{proof}[Alternative formulation of Assumption~\ref{as_3}] Assumption~\ref{as_3} can be formulated in terms of strong mixing of measure-preserving dynamical systems~\cite[Chapter 20]{Klenke}. To show this, let us consider the sets $A_k=\lbrace i\in\mathbb{N}\,:\,\mu_i^{(i)}=\theta_k\rbrace$ for any $k=1,\ldots,K^*$. This makes the family $\mathcal{F}=\mathcal{P}(\mathcal{A})$ with $\mathcal{A}=\lbrace A_k\rbrace_{k=1}^{K^*}$ a $\sigma$-algebra on $\mathbb{N}$. Next, let $P_n$ denote the measure defined by $P_n(A)=\frac{1}{n}\lvert A\cap[n]\rvert$ for any $A\in\mathcal{F}$ and $P$ denote the measure defined by $P(\cup_{s\in S}A_s)=\sum_{s \in S}\pi_s$ for any $S\in\mathcal{P}(\lbrace 1,\ldots,K^*\rbrace)$. Note that the pair $(\mathbb{N}, \mathcal{F})$ can be provided with either $P_n$ or $P$ to form a measure space. Besides, Assumption~\ref{as_1} states the setwise convergence of $P_n$ to $P$ when $n\rightarrow\infty$. Finally, for any $k=1,\ldots,K^*$, we can define the transformation $T(A_k)=\lbrace i\in\mathbb{N}\,:\,\mu_{i-1}^{(i-1)}=\theta_k\rbrace$, which is measure-preserving on $(\mathbb{N},\mathcal{F},P)$~\cite[Definition 20.6]{Klenke}. Then, Equation~\eqref{mixing} can be rewritten as:
\begin{equation}\label{reformulate_mixing}
    P_n\left(T^{-r}(A_k)\cap A_{k'}\right)\underset{n\rightarrow\infty}{\longrightarrow} P\left(T^{-r}(A_k)\cap A_{k'}\right)\underset{r\rightarrow\infty}{\longrightarrow} P(A_k)\,P(A_{k'}),
\end{equation}
for any $k,k'\in\lbrace 1,\ldots,K^*\rbrace$. The first limit in \eqref{reformulate_mixing} follows from Assumption~\ref{as_1}, whereas the second one is equivalent to state that the measure-preserving dynamical system $(\mathbb{N},\mathcal{F},P,T)$ is (strong) mixing (see ~\cite[Definition 20.04]{Klenke}). 
\end{proof}

\begin{proof}[Proof of Remark~\ref{remark_compound}] Let $\mathbf{U}^{(n)}=b\mathbf{1}_{n\times n}+(a-b)\,\mathbf{I}_n$. As $b\in(-\frac{a}{n-1},a)$ if and only if $\mathbf{U}^{(n)}$ is positive definite, condition $0<b<a$ is needed to ensure positive definiteness for all $n\in\mathbb{N}$. Following the Sherman–Morrison formula \cite{bartlett}, we can derive an explicit expression for the sequence of inverse matrices:
\begin{equation}
   \left(\mathbf{U}^{(n)}\right)^{-1}=\frac{1}{a-b}\,\mathbf{I}_n+\frac{-b}{(a-b)(nb+a-b)},\quad\forall\,n\in\mathbb{N}.
\end{equation}
Consequently, for every $r\geq 0$ and every $i\in\mathbb{N}$, we have
\begin{equation*}
    \left(\mathbf{U}^{(n)}\right)^{-1}_{i\,i+r}=
    \begin{cases}
        \frac{1}{a-b}+\frac{-b}{(a-b)(nb+a-b)} & \mathrm{if }\,\,r=0,\\
        \frac{-b}{(a-b)(nb+a-b)} & \mathrm{if }\,\,r>0,
    \end{cases}
\end{equation*}
which are monotone, so condition $(ii)$ in Assumption~\ref{as_2} is satisfied. Then, we have
\begin{equation*}
     \Lambda_{i\,i+r}=
    \begin{cases}
        \frac{1}{a-b} & \mathrm{if }\,\,r=0,\\
        0 & \mathrm{if }\,\,r>0,
    \end{cases}
\end{equation*}
for all $i\in\mathbb{N}$, $\lambda_0 = 1/(a-b)$ and $\lambda_r=0$ for $r>0$. Consequently, Assumption~\ref{as_2} holds.
\end{proof}

\begin{proof}[Proof of Remark~\ref{remark_diagonal}] The case of diagonal matrices is straightforward as both $\mathbf{U}^{(n)}$ and $\left(\mathbf{U}^{(n)}\right)^{-1}$ are defined by a sequence $\lbrace\lambda_i\rbrace_{i\in\mathbb{N}}$. Every diagonal entry of the inverse satisfies $\left(U^{(n)}\right)^{-1}_{ii}=\frac{1}{\lambda_i}$ for all $n\in\mathbb{N}$ and, as we asked the $\lambda_i$ to converge to $\lambda$, which is strictly positive due to the positive definiteness of $\mathbf{U}^{(n)}$, Assumption~\ref{as_2} is satisfied.    
\end{proof}

\begin{proof}[Proof of Remark~\ref{remark_ar}] The inverse of an auto-regressive covariance matrix of lag $P\geq 1$ is banded with $2P-1$ non-zero diagonals. Its explicit form is derived in \cite{inverse_AR} for a stationary process of any lag, and the cases $P\leq 3$ are discussed in detail in \cite{inverse_AR_123}. From these results we can derive the behavior of the sequences $\lbrace \left(U^{(n)}\right)^{-1}_{i\,i+r}\rbrace$ as $n$ increases. The diagonal elements define the sequences
\begin{equation*}
    \sigma^2\,\left\lbrace \left(U^{(n)}\right)^{-1}_{ii}\right\rbrace_{n\in\mathbb{N}}=
    \begin{cases} 
      \lbrace 1 + \sum_{k=1}^{i-1} \beta_k^2,1 + \sum_{k=1}^{i-1} \beta_k^2,\ldots\rbrace& \mathrm{if }\,\,i \leq P+1,\\
      \lbrace 0,\overset{i-1}{\ldots}, 0, 1, 1+\beta_1^2,1, 1+\beta_1^2\beta_2^2,\ldots, 1+ \sum_{k=1}^{P} \beta_k^2, 1+ \sum_{k=1}^{P} \beta_k^2, \ldots \rbrace & \mathrm{if }\,\,i>P+1,
   \end{cases}      
\end{equation*}
where the sums are taken as zero if the upper limit of summation is zero. Note that these sequences do not satisfy condition $(i)$ in Assumption~\ref{as_2} as, even if each sequence reaches its limit after a finite number of terms, the index of the term where the limit is reached diverges with $i$. In other words, we can dominate the sequence, but not by a summable one. However, for all $i\in\mathbb{N}$ the series are non-decreasing so condition $(ii)$ is satisfied and we have 
\begin{equation*}
    \sigma^2\,\Lambda_{ii}=
    \begin{cases}
         1 + \sum_{k=1}^{i-1} \beta_k^2 & \mathrm{if }\,\,i \leq P+1\\
         1 + \sum_{k=1}^{P} \beta_k^2 & \mathrm{if }\,\,i > P+1.
    \end{cases}
\end{equation*}
Then, $\sigma^2\,\lambda_0= 1 + \sum_{k=1}^{P} \beta_k^2$. The sequences outside the main diagonal show a similar behavior, but they are not positive in general. As, following the same reasoning, they do not satisfy condition $(i)$ in Assumption~\ref{as_2}, we force them to satisfy condition $(ii)$. For any $0<r\leq P$, we have
\begin{equation}\label{inverse_arp}
    \sigma^2\,\left\lbrace \left(U^{(n)}\right)^{-1}_{i\,i+r}\right\rbrace_{n\in\mathbb{N}}=
    \begin{cases} 
      \lbrace -\beta_r + \sum_{k=1}^{i-(r+1)} \beta_k\beta_{k+r},\,-\beta_r + \sum_{k=1}^{i-(r+1)} \beta_k\beta_{k+r},\ldots\rbrace& \mathrm{if }\,\,i \leq P+1,\\
      \lbrace 0,\overset{i-1}{\ldots}, 0, -\beta_r + \beta_1\beta_{1+r} , -\beta_r + \beta_1\beta_{1+r} + \beta_2\beta_{2+r},\ldots,\\
      -\beta_r + \sum_{k=1}^{P-r}\beta_k\beta_{k+r}, -\beta_r + \sum_{k=1}^{P-r}\beta_k\beta_{k+r}, \ldots \rbrace & \mathrm{if }\,\,i>P+1.
   \end{cases}      
\end{equation}
For these sequences to satisfy condition $(ii)$ we need them to be non-decreasing or non-increasing. For $P\leq 2$ this is always satisfied but, for $P>2$, we need to require all the $\beta_k$ to have the same sign. In that case, condition $(ii)$ holds and we have
 \begin{equation*}
     \sigma^2\Lambda_{i\,i+r}=
     \begin{cases}
        -\beta_r + \sum_{k=1}^{i-(r+1)}\beta_k\beta_{k+r} & \mathrm{if }\,\, i\leq P+1,\\
        -\beta_r + \sum_{k=1}^{P-r}\beta_k\beta_{k+r} & \mathrm{if }\,\, i> P+1,
     \end{cases}
 \end{equation*}
and, consequently, $\sigma^2\lambda_r=-\beta_r + \sum_{k=1}^{P-r}\beta_k\beta_{k+r}$. As the sequence $\lbrace\lambda_r\rbrace_{r=1}^{\infty}$ is non-zero for for a finite number of terms (due to the bandedness of the inverse matrix), its sum converges and Assumption~\ref{as_2} is satisfied.
\end{proof}

\begin{proof}[Proof of Lemma~\ref{lemma_prop}] We start by rewriting the sum in \eqref{lemma_eq} as a sum along each diagonal. Using the symmetry of $\left(\mathbf{U}^ {(n)}\right)^{-1}$ we have,
\begin{align}
    \underset{n\rightarrow\infty}{\lim}\,\frac{1}{n}\sum_{l,s=1}^n \left(U^{(n)}\right)^{-1}_{ls}\,\mathds{1}\lbrace \mu_l^{(n)}=\theta_k\rbrace\,\mathds{1}\lbrace \mu_s^{(n)}=\theta_{k'}\rbrace\nonumber\\
    =\underset{n\rightarrow\infty}{\lim}\,\frac{1}{n}\sum_{r=1}^{n-1} \sum_{i=1}^{n-r}\left(U^{(n)}\right)^{-1}_{i\,i+r}\,\mathds{1}\lbrace \mu_i^{(n)}=\theta_k\rbrace\,\mathds{1}\lbrace \mu_{i+r}^{(n)}=\theta_{k'}\rbrace\label{super_diag}\\
    +\underset{n\rightarrow\infty}{\lim}\,\frac{1}{n}\sum_{r=1}^{n-1} \sum_{i=1}^{n-r}\left(U^{(n)}\right)^{-1}_{i\,i+r}\,\mathds{1}\lbrace \mu_{i+r}^{(n)}=\theta_k\rbrace\,\mathds{1}\lbrace \mu_{i}^{(n)}=\theta_{k'}\rbrace\label{sub_diag}\\
    +\underset{n\rightarrow\infty}{\lim}\,\frac{1}{n} \sum_{i=1}^{n}\left(U^{(n)}\right)^{-1}_{i\,i}\,\mathds{1}\lbrace \mu_i^{(n)}=\theta_k\rbrace\,\mathds{1}\lbrace \mu_{i}^{(n)}=\theta_{k'}\rbrace,\label{main_diag}
\end{align}
where \eqref{super_diag},\eqref{sub_diag} and \eqref{main_diag} are respectively the sums along all the superdiagonals, subdiagonals and along the main diagonal. Let us detail the general reasoning that we use to show that the three quantities converge. Let $\lbrace a_i^{(n)}\rbrace_{i\in\mathbb{N}}$ be a double sequence such that $\lim_{n\rightarrow\infty}a_i^{(n)}=a_i\in\mathbb{R}$, and let $\lbrace b_i^{(n)}\rbrace_{i\in\mathbb{N}}$ be a binary Cesàro summable double sequence, i.e. such that $\lim_{n\rightarrow\infty} \frac{1}{n}\sum_{i=1}^n b_i^{(n)} = b$ and $b_i^{(n)}\in\lbrace 0,1\rbrace$ for all $i,n\in\mathbb{N}$. Let us first show that, if $\lbrace a_i^{(n)}\rbrace_{n\in\mathbb{N}}$ satisfies any of the conditions $(i)$ or $(ii)$, and the sequence $\lbrace a_i^{(1)}-a_i\rbrace_{i=1}^\infty\in\ell_1$, we can write

\begin{equation}\label{conv_result}
    \underset{n\rightarrow\infty}{\lim}\,\frac{1}{n}\sum_{i=1}^{n}a_i^{(n)}\,b_i^{(n)} = \underset{n\rightarrow\infty}{\lim}\frac{1}{n}\sum_{i=1}^{n}a_i\,b_i^{(n)}.
\end{equation}
First, note that
\begin{equation}\label{eq_eq}
    \underset{n\rightarrow\infty}{\lim}\,\frac{1}{n}\sum_{i=1}^{n}a_i^{(n)}\,b_i^{(n)}=\underset{n\rightarrow\infty}{\lim}\,\frac{1}{n}\sum_{i=1}^{n}\left(a_i^{(n)}-a_i\right)\,b_i^{(n)} + \underset{n\rightarrow\infty}{\lim}\,\frac{1}{n}\sum_{i=1}^{n}a_i\,b_i^{(n)}.
\end{equation}
Therefore, it suffices to show that the first term in \eqref{eq_eq} is zero to have \eqref{conv_result}. Using Hölder's inequality, we have
\begin{align*}
    \underset{n\rightarrow\infty}{\lim}\frac{1}{n}\left\lvert \sum_{i=1}^n \left(a_i^{(n)}-a_i\right)\,b_i^{(n)}\right\rvert\leq \underset{n\rightarrow\infty}{\lim}\frac{1}{n}\sum_{i=1}^n \left\lvert \left(a_i^{(n)}-a_i\right)b_i^{(n)} \right\rvert \\
    \leq \underset{n\rightarrow\infty}{\lim}\left(\sum_{i=1}^n \left(a_i^{(n)}-a_i\right)^2\right)^{\frac{1}{2}}\,\underset{n\rightarrow\infty}{\lim}\,\frac{1}{n}\left(\sum_{i=1}^n b_i^{(n)}\right)^{\frac{1}{2}}.
\end{align*}
On one hand,
\begin{equation*}
    \underset{n\rightarrow\infty}{\lim}\,\frac{1}{n}\left(\sum_{i=1}^n b_i^{(n)}\right)^{\frac{1}{2}} = \underset{n\rightarrow\infty}{\lim} \frac{1}{\sqrt{n}}\, \underset{n\rightarrow\infty}{\lim} \left(\frac{1}{n} \sum_{i=1}^n b_i^{(n)}\right)^{\frac{1}{2}}=0.
\end{equation*}
On the other hand, let us show that
\begin{equation}\label{square_limit}
    \underset{n\rightarrow\infty}{\lim}\sum_{i=1}^n \left(a_i^{(n)}-a_i\right)^2=0
\end{equation}
if $\lbrace a_i^{(n)}\rbrace_{n\in\mathbb{N}}$ satisfies any of the conditions $(i)$ or $(ii)$. If $\lbrace a_i^{(n)}\rbrace_{n\in\mathbb{N}}$ satisfies $(i)$, the sequence $\lbrace (a_i^{(n)}-a_i)^2\rbrace_{n\in\mathbb{N}}$ is dominated by the sequence $\lbrace\alpha_i^2\rbrace_{i\in\mathbb{N}}$, which is summable as $\ell_1\subset\ell_2$. Then, \eqref{conv_result} holds following the Dominated Convergence Theorem~\cite[Theorem 9.20]{Yeh}. If $\lbrace a_i^{(n)}\rbrace_{n\in\mathbb{N}}$ is non-increasing, then $a_i^{(n+1)}-a_i\leq a_i^{(n)}-a_i$ implies $(a_i^{(n+1)}-a_i)^2\leq (a_i^{(n)}-a_i)^2$ and 
$\Tilde{a}_i^{(n)}:=(a_i^{(n)}-a_i)^2$ is a non-increasing and non-negative sequence. Similarly, if $\lbrace a_i^{(n)}\rbrace_{n\in\mathbb{N}}$ is non-decreasing, then $a_i^{(n+1)}-a_i\geq a_i^{(n)}-a_i$ implies $(a_i^{(n+1)}-a_i)^2\leq (a_i^{(n)}-a_i)^2$ and 
$\Tilde{a}_i^{(n)}$ is again a non-increasing and non-negative sequence. Then, the sequence $z_i^{(n)}:= \Tilde{a}_i^{(1)} - \Tilde{a}_i^{(n)}$ is non-negative and non-decreasing. Thus, following the Monotone Convergence Theorem~\cite[Theorem 8.5]{Yeh}, we have
\begin{equation}\label{monotone_conv}
    \underset{n\rightarrow\infty}{\lim}\sum_{i=1}^n z_i^{(n)} = \underset{n\rightarrow\infty}{\lim}\sum_{i=1}^n (a_i^{(1)} - a_i)^2,
\end{equation}
which implies \eqref{square_limit} if the limit in the right side of \eqref{monotone_conv} exists and is finite. This is guaranteed if we ask the sequence $\lbrace a_i^{(1)}-a_i\rbrace_{i=1}^{\infty}$ to be summable. This always holds in our case as we can arbitrarily define the entries $\left({U}^{(n)}\right)^{-1}_{i\,i+r}$ for $i>n$. Consequently, if we write $\lbrace\left({U}^{(1)}\right)^{-1}_{i\,i+r}\rbrace_{i=1}^\infty=\lbrace\left({U}^{(1)}\right)^{-1}_{1\,1+r}, \Lambda_{2\,2+r},\Lambda_{3\,3+r},\ldots\rbrace$, the sequence $\lbrace\left({U}^{(1)}\right)^{-1}_{i\,i+r} - \Lambda_{i\,i+r}\rbrace_{i=1}^\infty$ is trivially summable. This proves \eqref{conv_result}. 

Now, if we have that $\underset{i\rightarrow\infty}{\lim}a_i=a$, let us show that
\begin{equation}\label{conv_result_2}
   \underset{n\rightarrow\infty}{\lim}\frac{1}{n}\sum_{i=1}^{n}a_i\,b_i^{(n)} = ab. 
\end{equation}
First, let separate the sum in \eqref{conv_result_2} as
\begin{equation}
    \frac{1}{n}\sum_{i=1}^{n}a_i\,b_i^{(n)} =   \frac{1}{n}\sum_{i=1}^{n}\left(a_i-a\right)\,b_i^{(n)} +  \frac{a}{n}\sum_{i=1}^{n}\,b_i^{(n)}.
\end{equation}
The right term tends to $ab$ when $n\rightarrow\infty$. Let's show that the first term tends to zero. For any $i_0\in\mathbb{N}$, we can write
\begin{align}
    \left\lvert\frac{1}{n}\sum_{i=1}^{n}\left(a_i-a\right)\,b_i^{(n)}\right\rvert \leq \left\lvert\frac{1}{n}\sum_{i=1}^{i_0-1}\left(a_i-a\right)\,b_i^{(n)}\right\rvert + \left\lvert\frac{1}{n}\sum_{i=i_0}^{n}\left(a_i-a\right)\,b_i^{(n)}\right\rvert\\
    \leq \underset{i < i_0}{\sup}\abs{a_i-a}\frac{1}{n}\sum_{i=1}^{i_0-1} b_i^{(n)} + \underset{i \geq i_0}{\sup}\abs{a_i-a}\frac{1}{n}\sum_{i=i_0}^{n} b_i^{(n)}
    \leq \frac{C}{n} + \underset{i \geq i_0}{\sup}\abs{a_i-a}\frac{1}{n}\sum_{i=i_0}^{n} b_i^{(n)},
\end{align}
where $C$ is a real constant. Then, following the definition of limit, when can choose $i_0$ as the one such that for all $i\geq i_0$ we have ${\abs{a_i-a}}\leq \frac{1}{n}$. Therefore,
\begin{equation}
   \left\lvert \frac{1}{n}\sum_{i=1}^{n}\left(a_i-a\right)\,b_i^{(n)}\right\rvert\leq\frac{C}{n}+\frac{1}{n^2}\sum_{i=i_0}^n b_i^{(n)}, 
\end{equation}
which tends to zero when $n\rightarrow\infty$ using that $\lbrace b_i^{(n)}\rbrace_i\in\mathbb{N}$ has Cesàro sum $b$. Thus, we have \eqref{conv_result_2}. As the sequences $\left(U^{(n)}\right)^{-1}_{i\,i+r}$ have limits $\Lambda_{i\,i+r}$ when $i\rightarrow\infty$, following Assumption~\ref{as_3}, and the products of indicator functions are Cesàro summable thanks to Assumptions~\ref{as_1} and \ref{as_3}, we can use \eqref{conv_result} and \eqref{conv_result_2} to rewrite the three limits in \eqref{super_diag}, \eqref{sub_diag}, \eqref{main_diag} as
\begin{align}\label{limit_conv}
    \underset{n\rightarrow\infty}{\lim}\,\frac{1}{n}\sum_{l,s=1}^n \left(U^{(n)}\right)^{-1}_{ls}\,\mathds{1}\lbrace \mu_l^{(n)}=\theta_k\rbrace\,\mathds{1}\lbrace \mu_s^{(n)}=\theta_{k'}\rbrace\nonumber\\
    =\underset{n\rightarrow\infty}{\lim}\sum_{r=1}^{n-1} \lambda_r\left(\pi_{kk'}^r+\pi_{k'k}^r\right)+\lambda_0\pi_{k}\delta_{kk'}=2(\lambda-\lambda_0)\pi_{k}\pi_{k'}+\lambda_0\pi_k\delta_{kk'},
\end{align}
where the last limit is derived following the same reasoning as to prove \eqref{conv_result_2}. This concludes the proof.
\end{proof}

\begin{proof}[Proof of Proposition~\ref{prop_overestimate_general}] We start by proving the element-wise convergence in probability of \eqref{estSigma}. More precisely, we show that
\begin{equation}\label{conv_proba}
    \hat{\Sigma}_{ij}^{(n)}\overset{p}{\rightarrow}\Sigma_{ij}+ \lambda_0\sum_{k=1}^{K^*}\pi_k\left(\theta_{ki}-\Tilde{\theta}_{i}\right)\left(\theta_{kj}-\Tilde{\theta}_{j}\right),
\end{equation}
for all $i,j\in\lbrace 1,\ldots,p\rbrace$, where $\hat{\Sigma}_{ij}^{(n)}$ is the $ij$ entry of $\hat{\mathbf{\Sigma}}\left(\mathbf{X}^{(n)}\right)$, that is,
\begin{equation}\label{element_wise_estimator}
    \hat{\Sigma}_{ij}=\frac{1}{n-1}\sum_{l,s=1}^n\left(X_{li}-\Bar{X}_i\right)\left(U^{-1}\right)_{ls}\left(X_{sj}-\bar{X}_j\right),\quad\forall\,i,j\in\lbrace 1,\ldots,p\rbrace,
\end{equation}
where $\Bar{X}_i=\frac{1}{n}\sum_{k=1}^n X_{ki}$, and we have defined $\Tilde{\theta_i}=\sum_{k=1}^{K^*}\pi_k\theta_{ki}$. Recall that all the quantities in \eqref{conv_proba} have been defined in Assumptions~\ref{as_1} and \ref{as_2}. To prove \eqref{conv_proba}, it suffices to show, following the same reasoning as in the proof of~\cite[Lemma C.1]{Gao}, that
\begin{equation}\label{exp_and_var}
    \underset{n\rightarrow\infty}{\lim}\mathbb{E}\left(\hat{\Sigma}_{ij}^{(n)}\right) = \Sigma_{ij} + \lambda_0\sum_{k=1}^{K^*}\pi_k\left(\theta_{ki}-\Tilde{\theta}_{i}\right)\left(\theta_{kj}-\Tilde{\theta}_{j}\right)\quad\textrm{ and }\quad\underset{n\rightarrow\infty}{\mathrm{Var}}\left(\hat{\Sigma}_{ij}^{(n)}\right)=0.
\end{equation}
Indeed, \eqref{exp_and_var} implies convergence in mean of $\hat{\Sigma}_{ij}^{(n)}$ towards the limit of its expectation and, following Markov's inequality, convergence in probability. Let start by rewriting $\hat{\Sigma}_{ij}^{(n)}$. Following \eqref{element_wise_estimator}, we can write
\begin{eqnarray}\label{sigma_ij_terms}
   \hat{\Sigma}_{ij}^{(n)} &=&  \frac{1}{n-1}\sum_{l,s=1}^n X_{li}^{(n)}\,X_{js}^{(n)}\,\left(U^{(n)}\right)^{-1}_{ls}-
   \frac{1}{n-1}\,\bar{X}_{j}^{(n)}\,\sum_{l,s=1}^n X_{li}^{(n)}\,\left(U^{(n)}\right)^{-1}_{ls}\nonumber\\ & & 
-\frac{1}{n-1}\,\bar{X}_{i}^{(n)}\,\sum_{l,s=1}^n X_{sj}^{(n)}\left(U^{(n)}\right)^{-1}_{ls} + \frac{1}{n-1}\,\bar{X}_{i}^{(n)}\,\bar{X}_{j}^{(n)}\,\sum_{l,s=1}^n \left(U^{(n)}\right)^{-1}_{ls}.
\end{eqnarray}
For simplicity, we denote as $A_{ij}^{(n)}$, $B_{ij}^{(n)}$, $C_{ij}^{(n)}$ and $D_{ij}^{(n)}$ the four terms in \eqref{sigma_ij_terms} respectively. First, let us derive their asymptotic expectations. 
\begin{eqnarray*}
\mathbb{E}\left(A_{ij}^{(n)}\right) 
&=& 
\frac{1}{n-1}\sum_{l,s=1}^n \left(U^{(n)}\right)^{-1}_{ls}\mathbb{E}\left(X_{li}^{(n)}\,X_{sj}^{(n)}\right)\\
&=&
\frac{1}{n-1}\sum_{l,s=1}^n \left(U^{(n)}\right)^{-1}_{ls}\mu_{li}^{(n)}\mu_{sj}^{(n)}+\frac{\Sigma_{ij}}{n-1}\sum_{l,s=1}^n\left(U^{(n)}\right)^{-1}_{ls}\,U^{(n)}_{sl}\\
&=&
\sum_{k,k'=1}^{K^*}\frac{1}{n-1}\sum_{l,s=1}^n \left(U^{(n)}\right)^{-1}_{ls} \mathds{1}\lbrace \mu_l^{(n)}=\theta_k\rbrace\,\mathds{1}\lbrace \mu_s^{(n)}=\theta_{k'}\rbrace\,\theta_{ki}\theta_{k'j}+\frac{n}{n-1}\Sigma_{ij}.
\end{eqnarray*}

Using Lemma~\ref{lemma_prop}, we have
\begin{equation}\label{exp_a}
    \underset{n\rightarrow\infty}{\lim}\,\mathbb{E}\left(A_{ij}^{(n)}\right)=2(\lambda-\lambda_0)\sum_{k=1}^{K^*}\pi_k\theta_{ki}\sum_{k=1}^{K^*}\pi_k\theta_{kj}+\lambda_0\sum_{k=1}^{K^*}\pi_k\theta_{ki}\theta_{kj}+\Sigma_{ij}.
\end{equation}
Then,
\begin{eqnarray*}
    \mathbb{E}\left(B_{ij}^{(n)}\right)&=& \frac{1}{n(n-1)}\sum_{l,s,r=1}^n \left(U^{(n)}\right)^{-1}_{ls}\mathbb{E}\left(X_{li}^{(n)}\,X_{rj}^{(n)}\right)
    \\
    &=&\frac{1}{n(n-1)}\sum_{l,s,r=1}^n \left(U^{(n)}\right)^{-1}_{ls}\mu_{li}^{(n)}\mu_{rj}^{(n)}+\frac{\Sigma_{ij}}{n-1}\\
    &=&\frac{1}{n}\sum_{r=1}^n\mu_{rj}^{(n)}\frac{1}{n-1}\sum_{l,s}^n\left(U^{(n)}\right)^{-1}_{ls}\mu_{li}^{(n)}+\frac{\Sigma_{ij}}{n-1}\\
    &=& \sum_{k=1}^{K^*}\frac{1}{n}\sum_{r=1}^n\mathds{1}\lbrace \mu_r^{(n)}=\theta_k\rbrace\theta_{kj}\,\sum_{k=1}^{K^*}\frac{1}{n-1}\sum_{l,s=1}^n\left(U^{(n)}\right)^{-1}_{ls}\mathds{1}\lbrace \mu_{l}^{(n)}=\theta_{k}\rbrace\theta_{ki}+\frac{\Sigma_{ij}}{n-1}.
\end{eqnarray*}
Using the same reasoning as to prove Lemma~\ref{lemma_prop}, we have
\begin{equation*}
    \underset{n\rightarrow\infty}{\lim}\frac{1}{n-1}\sum_{l,s=1}^n\left(U^{(n)}\right)^{-1}_{ls}\mathds{1}\lbrace \mu_l^{(n)}=\theta_k\rbrace = (2(\lambda-\lambda_0)+\lambda_0)\pi_k.
\end{equation*}
This, together with Assumption~\ref{as_1}, yields
\begin{equation}\label{exp_bc}
   \underset{n\rightarrow\infty}{\lim} \mathbb{E}\left(B_{ij}^{(n)}\right)=\underset{n\rightarrow\infty}{\lim} \mathbb{E}\left(C_{ij}^{(n)}\right)=
    (2(\lambda-\lambda_0)+\lambda_0)\sum_{k=1}^{K^*}\pi_k\theta_{kj}\sum_{k=1}^{K^*}\pi_k\theta_{ki},
\end{equation}
where $B_{ij}^{(n)}$ and $C_{ij}^{(n)}$ have the same expectation by symmetry. Finally,
\begin{eqnarray*}
    \mathbb{E}\left(D_{ij}^{(n)}\right)&=&
    \frac{1}{n^2(n-1)}\sum_{l,s=1}^n\left(U^{(n)}\right)^{-1}_{ls}\sum_{r,r'=1}^n \mathbb{E}\left(X_{ri}^{(n)}\,X_{r'j}^{(n)}\right)\\ &=&
    \frac{1}{n-1}\sum_{l,s=1}^n\left(U^{(n)}\right)^{-1}_{ls}\left[\frac{1}{n^2}\sum_{r,r'=1}^{n}\mu_{ri}^{(n)}\mu_{r'j}^{(n)}+\frac{\Sigma{ij}}{n^2}\sum_{r,r'=1}^{n}U_{rr'}^{(n)}\right].
\end{eqnarray*}
Using the same reasoning as to prove Lemma~\ref{lemma_prop}, we have
\begin{equation}\label{claim_0}
    \underset{n\rightarrow\infty}{\lim}\,\frac{1}{n-1}\sum_{l,s=1}^n\left(U^{(n)}\right)^{-1}_{ls}=2(\lambda-\lambda_0)+\lambda_0.
\end{equation}
Moreover, we state that
\begin{equation}\label{claim}
    \underset{n\rightarrow\infty}{\lim}\,\frac{1}{n^2}\sum_{l,s=1}^n U^{(n)}_{ls}=0.
\end{equation}
We prove \eqref{claim} at the end of the proof. This claim, together with \eqref{claim_0} and Assumption~\ref{as_1}, yields
\begin{equation}\label{exp_d}
    \underset{n\rightarrow\infty}{\lim}\mathbb{E}\left(D_{ij}^{(n)}\right)=\left(2(\lambda-\lambda_0)+\lambda_0\right)\sum_{k=1}^{K^*}\pi_k\theta_{ki}\,\sum_{k=1}^{K^*}\pi_k\theta_{kj}.
\end{equation}
Consequently, following \eqref{exp_a}, \eqref{exp_bc} and \eqref{exp_d}, we have
\begin{eqnarray}
    \underset{n\rightarrow\infty}{\lim}\mathbb{E}\left(\hat{\Sigma}_{ij}^{(n)}\right)&=&\Sigma_{ij}+\lambda_0\left[ \sum_{k=1}^{K^*}\pi_k\theta_{ki}\theta_{kj}-\sum_{k=1}^{K^*}\pi_k\theta_{ki}\sum_{k=1}^{K^*}\pi_k\theta_{kj}\right]\nonumber\\
    &=& \Sigma_{ij}+\lambda_0\sum_{k=1}^{K^*}\pi_k\left(\theta_{ki}-\Tilde{\theta}_{i}\right)\left(\theta_{kj}-\Tilde{\theta}_{j}\right).
\end{eqnarray}
This is the first statement in \eqref{exp_and_var}. To prove the second one, we show that the variance of each term in \eqref{sigma_ij_terms} tends to zero. To do so, we need the explicit form of the non-centered $4$-th moments of a Gaussian distribution. More precisely, if $X_1,\ldots,X_4$ are four Gaussian random variables with $\mathbb{E}(X_i)=\mu_i$ and $\mathrm{Cov}(X_i,X_j)=\sigma_{ij}$, for $i,j\in\lbrace1,\ldots,4\rbrace$, we need the explicit form of the quantity
\begin{equation}
    \mathbb{E}\left(X_1\,X_2\,X_3\,X_4\right)-\mathbb{E}\left(X_1\,X_2\right)\,\mathbb{E}\left(X_3\,X_4\right).
\end{equation}
The first term can be derived using the moment generating function of a $4$-dimensional normal distribution
\begin{equation*}
    M_{(X_1,\ldots,X_4)}(t_1,\ldots,t_4)=\exp\left(\sum_{i=1}^4 \mu_i\,t_i+\frac{1}{2}\sum_{i,j=1}^n \sigma_{ij}\,t_i\,t_j\right),
\end{equation*}
and computing
\begin{equation*}
    \mathbb{E}\left(X_1\,X_2\,X_3\,X_4\right)=\frac{\partial M_{(X_1,\ldots,X_4)}(t_1,\ldots,t_4)}{\partial\,t_1\cdots\partial\,t_4}\Bigg\rvert_0.
\end{equation*}
Doing so, and using $\mathbb{E}(X_i\,X_j)=\mu_i\mu_j+\sigma_{ij}$, we can derive
\begin{equation}\label{exp_product}
    \mathbb{E}\left(X_1\,X_2\,X_3\,X_4\right)-\mathbb{E}\left(X_1\,X_2\right)\,\mathbb{E}\left(X_3\,X_4\right)=\sigma_{13}\sigma_{24}+\sigma_{14}\sigma_{23}+\mu_1\mu_4\sigma_{23}+\mu_1\mu_3\sigma_{24}+\mu_{2}\mu_{3}\sigma_{14}+\mu_2\mu_4\sigma_{13}.
\end{equation}
We are ready to prove that $\mathrm{Var}\left(\hat{\Sigma}_{ij}^{(n)}\right)$ tends to zero. First, using $\mathrm{Var}(X)=\mathbb{E}(X^2)-\mathbb{E}(X)^2$, we have
\begin{align}\label{var_a_expression}
    \mathrm{Var}\left(A_{ij}^{(n)}\right)=\frac{1}{(n-1)^2}\sum_{l,s,k,r=1}^n\left(U^{(n)}\right)^{-1}_{sl}\left(U^{(n)}\right)^{-1}_{kr}\left[\mathbb{E}\left(X_{li}\,X_{sj}\,X_{ri}\,X_{kj}\right)-\mathbb{E}\left(X_{li}\,X_{sj}\right)\mathbb{E}\left(X_{ki}\,X_{rj}\right)\right].
\end{align}
Using \eqref{exp_product}, we can separate \eqref{var_a_expression} into the following six terms:
\begin{eqnarray}
 \mathrm{Var}\left(A_{ij}^{(n)}\right)
 &=&\frac{\Sigma_{ii}\Sigma_{jj}}{(n-1)^2}\sum_{l,s,k,r=1}^n\left(U^{(n)}\right)^{-1}_{ls}\left(U^{(n)}\right)^{-1}_{kr}\,U^{(n)}_{lk}\,U^{(n)}_{sr}\label{i}\\
 &+&\frac{\Sigma_{ij}^2}{(n-1)^2}\sum_{l,s,k,r=1}^n\left(U^{(n)}\right)^{-1}_{ls}\left(U^{(n)}\right)^{-1}_{kr}\,U^{(n)}_{lr}\,U^{(n)}_{sk}\label{ii}\\
 &+&\frac{\Sigma_{jj}}{(n-1)^2}\sum_{l,s,k,r=1}^n\left(U^{(n)}\right)^{-1}_{ls}\left(U^{(n)}\right)^{-1}_{kr}\,U^{(n)}_{sr}\,\mu_{li}^{(n)}\,\mu_{ki}^{(n)}\label{iii}\\
 &+&\frac{\Sigma_{ij}}{(n-1)^2}\sum_{l,s,k,r=1}^n\left(U^{(n)}\right)^{-1}_{ls}\left(U^{(n)}\right)^{-1}_{kr}\,U^{(n)}_{sk}\,\mu_{li}^{(n)}\,\mu_{rj}^{(n)}\label{iv}\\
 &+&\frac{\Sigma_{ij}}{(n-1)^2}\sum_{l,s,k,r=1}^n\left(U^{(n)}\right)^{-1}_{ls}\left(U^{(n)}\right)^{-1}_{kr}\,U^{(n)}_{lr}\,\mu_{ki}^{(n)}\,\mu_{sj}^{(n)}\label{v}\\
 &+&\frac{\Sigma_{ii}}{(n-1)^2}\sum_{l,s,k,r=1}^n\left(U^{(n)}\right)^{-1}_{ls}\left(U^{(n)}\right)^{-1}_{kr}\,U^{(n)}_{lk}\,\mu_{sj}^{(n)}\,\mu_{rj}^{(n)}. \label{vi}
\end{eqnarray}
Each of these terms tend to zero when $n\rightarrow\infty$. For \eqref{i}, we have
\begin{align*}
    \frac{\Sigma_{ii}\Sigma_{jj}}{(n-1)^2}\sum_{l,s,k,r=1}^n\left(U^{(n)}\right)^{-1}_{ls}\left(U^{(n)}\right)^{-1}_{kr}\,U^{(n)}_{lk}\,U^{(n)}_{sr}=
    \frac{\Sigma_{ii}\Sigma_{jj}}{(n-1)^2}\sum_{l,s,r=1}^n\left(U^{(n)}\right)^{-1}_{ls}\,U^{(n)}_{sr}\,\delta_{lr}\\
    =\frac{\Sigma_{ii}\Sigma_{jj}}{(n-1)^2}\sum_{l,s=1}^n \left(U^{(n)}\right)^{-1}_{ls}\,U^{(n)}_{sl} =
    \frac{\Sigma_{ii}\Sigma_{jj}}{(n-1)^2}\sum_{l=1}^n \delta_{ll} 
    =\frac{n}{(n-1)^2}\Sigma_{ii}\Sigma_{jj}\underset{n\rightarrow\infty}{\longrightarrow}0.
\end{align*}
Identically we can show that \eqref{ii} tends to zero. For \eqref{iii}, we have
\begin{eqnarray*}
    \frac{\Sigma_{jj}}{(n-1)^2}\sum_{l,s,k,r=1}^n\left(U^{(n)}\right)^{-1}_{ls}\left(U^{(n)}\right)^{-1}_{kr}\,U^{(n)}_{sr}\,\mu_{li}^{(n)}\,\mu_{ki}^{(n)} \\
 = \frac{\Sigma_{jj}}{(n-1)^2}\sum_{l,k,r=1}^n\left(U^{(n)}\right)^{-1}_{kr}\, \delta_{lr}\,\mu_{li}^{(n)}\,\mu_{ki}^{(n)}
    = \frac{\Sigma_{jj}}{(n-1)^2}\sum_{l,k=1}^n\left(U^{(n)}\right)^{-1}_{kl}\, \,\mu_{li}^{(n)}\,\mu_{ki}^{(n)}\\ =
    \sum_{r,r'=1}^{K^*}\frac{\Sigma_{jj}}{(n-1)^2}\sum_{l,k=1}^n\left(U^{(n)}\right)^{-1}_{kl}\, \,\mathds{1}\lbrace \mu_l^{(n)}=\theta_r\rbrace\,\mathds{1}\lbrace \mu_k^{(n)}=\theta_{r'}\rbrace\,\mu_{li}^{(n)}\,\mu_{ki}^{(n)}\,\theta_{ri}\,\theta_{r'i}\underset{n\rightarrow\infty}{\longrightarrow}0,
\end{eqnarray*}
where the limit is derived using Lemma~\ref{lemma_prop}. The same reasoning is used to show that \eqref{iv}, \eqref{v} and \eqref{vi} tend to zero when $n\rightarrow\infty$. Therefore, we have $\lim_{n\rightarrow\infty}\mathrm{Var}\bigl(A_{ij}^{(n)}\bigr)=0$. The same strategy, together with \eqref{claim_0} and \eqref{claim}, is used to show that $\underset{n\rightarrow\infty}{\lim}\mathrm{Var}\bigl(B_{ij}^{(n)}\bigr)= \underset{n\rightarrow\infty}{\lim}\mathrm{Var}\bigl(C_{ij}^{(n)}\bigr)=\underset{n\rightarrow\infty}{\lim}\mathrm{Var}\bigl(D_{ij}^{(n)}\bigr)=0$. Thus, we have \eqref{conv_proba}. Note that the sum in \eqref{conv_proba} can be written as the $ij$ term of a matrix. Indeed, we have
\begin{equation}
    \hat{\Sigma}_{ij}^{(n)}-\Sigma_{ij}\overset{p}{\rightarrow} \lambda_0\left(\Theta^T\,\mathrm{diag}(\pi_1,\ldots,\pi_{K^*})\,\Theta\right)_{ij},
\end{equation}
where $\Theta$ is a $p\times K^*$ matrix having as entries $\Theta_{ij}=\theta_{ij}-\Tilde{\theta}_{j}$. As $\lambda_0,\pi_1,\ldots,\pi_{K^*}\geq 0$, the matrix $\lambda_0(\Theta^T\,\mathrm{diag}(\pi_1,\ldots,\pi_{K^*})\,\Theta)$ is positive semi-definite, so the entries of $\hat{\mathbf{\Sigma}}\left(\mathbf{X}^{(n)}\right)-\mathbf{\Sigma}$ converge in probability to the entries of a positive semi-definite matrix. Note that, as both $\hat{\mathbf{\Sigma}}\left(\mathbf{X}^{(n)}\right)$ and $\mathbf{\Sigma}$ are positive definite, the eigenvalues of their difference are real. Finally, since the eigenvalues depend continuously on the entries of the matrix, the eigenvalues of $\hat{\mathbf{\Sigma}}\Bigl(\mathbf{X}^{(n)}\Bigr)-\mathbf{\Sigma}$ converge in probability to the eigenvalues of a positive semi-definite matrix, which are non-negative. Therefore, we have \eqref{asymp_overest}.

Let us conclude by showing \eqref{claim}. To do show, note that we can write,
\begin{equation*}
    1=\frac{1}{n}\sum_{k,l,s=1}^n \left(U^{(n)}\right)^{-1}_{lk}\,U^{(n)}_{ks}=\frac{2}{n}\sum_{s=1}^n\sum_{r=1}^{n-1}\sum_{i=1}^{n-r}\left(U^{(n)}\right)^{-1}_{i\,i+r}\,U^{(n)}_{i+r\,s}+\frac{1}{n}\sum_{s,i=1}^n\left(U^{(n)}\right)^{-1}_{ii}\,U^{(n)}_{is}.
\end{equation*}
Using the same reasoning as in the proof of Lemma~\ref{lemma_prop}, we have
\begin{equation*}
1 = 2\underset{n\rightarrow\infty}{\lim}\sum_{r=1}^{n-1}\,\lambda_r\left(\underset{n\rightarrow\infty}{\lim}\,\frac{1}{n}\sum_{i,s=1}^n U^{(n)}_{i+r\,s}\right)+\lambda_0\,\underset{n\rightarrow\infty}{\lim}\,\frac{1}{n}\sum_{i,s=1}^n U^{(n)}_{is},
\end{equation*}
which diverges unless the third limit is finite, which implies \eqref{claim}.
\end{proof}

\section{Non-maximal conditioning sets}\label{sec:finer_cond}

The methodology presented in Section~\ref{sec:CS_known_U} sets up the framework to perform selective inference after hierarchical clustering. Exploring its adaptation  to further clustering algorithms involves, as shown in \cite{chen2022selective}, the redefinition of $p$-values by constraining the conditional event that define \eqref{pvalue_gao} and \eqref{pvalue_V}. In this section, we revisit the procedure of post-clustering inference introduced in Section~\ref{sec:CS_known_U} and rewrite it in a more general form that allows its straightforward adaptation to the scenario where more conditioning is imposed.

When defining a $p$-value for \eqref{h0} that controls the selective type I error \eqref{sel_typeI}, one may think of conditioning only on having selected the pair of clusters that define the null hypothesis, i.e. on the event
\begin{equation}\label{M_set}
  M_{\mG_1,\mG_2}(\mathbf{X}) =\lbrace \mG_1,\mG_2\in\mathcal{C}(\mathbf{X})\rbrace.
\end{equation}
However, this is generally not enough to ensure the analytical tractability of the $p$-value. When considering a matrix normal distribution for the $p$-dimensional observations, two further conditions are imposed as shown in~\cite{Gao}. Following Section~\ref{sec:CS_known_U}, this corresponds to conditioning on the event
\begin{equation}\label{maximal_set}
    M_{\mG_1,\mG_2}(\mathbf{X})\,\cap\,\left\lbrace\boldsymbol\pi_{\nu}^\perp \mathbf{X}=\boldsymbol\pi_{\nu}^\perp \mathbf{x} \, ,\, \mathrm{dir}_{\mathbf{V}_{\mathcal{G}_1,\mathcal{G}_2}}\left(\mathbf{X}^T\nu\right)=\mathrm{dir}_{\mathbf{V}_{\mathcal{G}_1,\mathcal{G}_2}}\left(\mathbf{x}^T\nu\right) \right\rbrace,
\end{equation}
which is the maximal event for which any analytically tractable $p$-value has been shown to control \eqref{sel_typeI} under the general model \eqref{model}. If we denote by $T_{\mG_1,\mG_2}(\mathbf{X},\mathbf{x})$ the second set in \eqref{maximal_set}, we can rewrite \eqref{pvalue_V} as
\begin{equation}\label{M_and_T}
    p_{\mathbf{V}_{\mathcal{G}_1,\mathcal{G}_2}}(\mathbf{x};\lbrace \mG_1,\mG_2\rbrace)=\mathbb{P}_{H_0^{\lbrace \mG_1, \mG_2\rbrace}}\left(\norm{\mathbf{X}^T\nu}_{\mathbf{V}_{\mathcal{G}_1,\mathcal{G}_2}}\geq \norm{\mathbf{x}^T\nu}_{\mathbf{V}_{\mathcal{G}_1,\mathcal{G}_2}}\,\, \biggr\rvert\, M_{\mG_1,\mG_2}(\mathbf{X})\cap T_{\mG_1,\mG_2}(\mathbf{X}, \mathbf{x})\right).
\end{equation}
Then, from Theorem~\ref{th:pvalue_V} and its proof we can rewrite the truncation set in \eqref{pvalue_F} as
\begin{equation}\label{trunc_M}
    \mathcal{S}_{\mathbf{V}_{\mathcal{G}_1,\mathcal{G}_2}}(\mathbf{x};\lbrace \mG_1,\mG_2\rbrace) = \left\lbrace \phi\in\mathbb{R}\,:\, M_{\mG_1,\mG_2}\left(\mathbf{x}'_{{\mathbf{V}}_{\mG_1,\mG_2}}(\phi)\right)\right\rbrace,
\end{equation}
where $\mathbf{x}'_{{\mathbf{V}}_{\mG_1,\mG_2}}(\phi)$ is defined in \eqref{perturbed_x}. Consequently, in the conditions of Theorem~\ref{th:pvalue_V}, \eqref{pvalue_V} is analytically tractable as
\begin{equation}\label{pvalue_V_M}
  p_{\mathbf{V}_{\mathcal{G}_1,\mathcal{G}_2}}(\mathbf{x};\lbrace \mG_1,\mG_2\rbrace) = 1-\mathbb{F}_p\left(\norm{\mathbf{x}^T\nu}_{\mathbf{V}_{\mathcal{G}_1,\mathcal{G}_2}},\,\left\lbrace \phi\geq 0\,:\, M_{\mG_1,\mG_2}\left(\mathbf{x}'_{{\mathbf{V}}_{\mG_1,\mG_2}}(\phi)\right)\right\rbrace \right),
\end{equation}
where $\mathbb{F}_p$ is defined in Theorem~\ref{th:pvalue_V}. Uncoupling $M_{\mG_1,\mG_2}(\mathbf{X})$ and $T_{\mG_1,\mG_2}(\mathbf{X},\mathbf{x})$ in \eqref{M_and_T} allows us to characterize the null distribution of the $p$-value in terms of the conditioning event \eqref{M_set}. This is useful to study the scenarios where, for technical reasons, subsets of \eqref{M_set} are chosen to define the $p$-value for \eqref{h0}. This is the case in \cite{chen2022selective}, where the framework of \cite{Gao} under model \eqref{model_gao} has been adapted to perform selective inference after $k$-means clustering. To allow the efficient computation of their truncation set, the authors condition on $T_{\mG_1,\mG_2}(\mathbf{X},\mathbf{x})$ but also on all the intermediate clustering assignments for the $n$ observations \cite[Equation (9)]{chen2022selective}, which is a subset of \eqref{M_set}. In accordance with \eqref{trunc_M} and \eqref{pvalue_V_M}, this more restrictive conditioning yielded the same $p$-value \eqref{pvalue_gao} as in \cite{Gao} except from a different truncation set, based on the finer conditioning event. The following result characterizes this framework under our general model \eqref{model} and for an arbitrary non-maximal conditioning event. As such, it is a generalization of Theorem~\ref{th:pvalue_V}.

\begin{theorem}\label{th:pvalue_V_E} In the conditions of Theorem~\ref{th:pvalue_V}, let $\emptyset\neq E_{\mG_1,\mG_2}(\mathbf{X})\subset M_{\mG_1,\mG_2}(\mathbf{X})$ for any $(\mG_1,\mG_2)\in\mathcal{C}_{[n]}$. Then, the quantity
\begin{equation}\label{pvalue_V_E}
  p_{\mathbf{V}_{\mathcal{G}_1,\mathcal{G}_2}}(\mathbf{x};\lbrace \mathcal{G}_1,\mathcal{G}_2\rbrace;E_{\mG_1,\mG_2}) = \mathbb{P}_{H_0^{\lbrace \mathcal{G}_1, \mathcal{G}_2\rbrace}}\left(\norm{\mathbf{X}^T\nu}_{\mathbf{V}_{\mathcal{G}_1,\mathcal{G}_2}}\geq \norm{\mathbf{x}^T\nu}_{\mathbf{V}_{\mathcal{G}_1,\mathcal{G}_2}}\,\, \biggr\rvert\, E_{\mG_1,\mG_2}(\mathbf{X})\cap T_{\mG_1,\mG_2}(\mathbf{X},\mathbf{x})\right)
\end{equation}
is a $p$-value for~\eqref{h0} that controls the selective type I error for clustering \eqref{sel_typeI} at level $\alpha$. Furthermore, it satisfies
\begin{equation}\label{pvalue_V_E_dist}
    p_{\mathbf{V}_{\mathcal{G}_1,\mathcal{G}_2}}(\mathbf{x};\lbrace \mathcal{G}_1,\mathcal{G}_2\rbrace;E_{\mG_1,\mG_2}) = 1-\mathbb{F}_p\left(\norm{\mathbf{x}^T\nu}_{\mathbf{V}_{\mathcal{G}_1,\mathcal{G}_2}},\,\left\lbrace \phi\geq 0\,:\, E_{\mG_1,\mG_2}\left(\mathbf{x}'_{{\mathbf{V}}_{\mathcal{G}_1,\mathcal{G}_2}}(\phi)\right)\right\rbrace \right),
\end{equation}
where $\mathbb{F}_p(t,\mathcal{S})$ is the cumulative distribution function of a $\chi_p$ random variable truncated to the set $\mathcal{S}$ and $\mathbf{x}'_{{V}_{\mathcal{G}_1,\mathcal{G}_2}}(\phi)$ is defined in \eqref{perturbed_x}.
\end{theorem}

\begin{proof}[Proof of Theorem~\ref{th:pvalue_V_E}] We omit the proof of \eqref{pvalue_V_E_dist} as it is identical to the one of \eqref{pvalue_F}. Here, we show that the $p$-values defined using a non-maximal conditioning set $E(\mathbf{X})\subset M(\mathbf{X})$ as \eqref{pvalue_V_E} control the selective type I error for clustering \eqref{sel_typeI}. First, note that we have
\begin{equation}\label{pvalue_alpha}
    \mathbb{P}_{H_0^{\lbrace \mathcal{G}_1,\mathcal{G}_2}}\left( p_{\mathbf{V}_{\mathcal{G}_1,\mathcal{G}_2}}(\mathbf{x};\lbrace \mathcal{G}_1,\mathcal{G}_2\rbrace;E) \leq \alpha\,\biggr\vert\, E(\mathbf{X})\cap T(\mathbf{X})\right)=\alpha
\end{equation}
following \eqref{pvalue_V_E}, for any $\alpha\in(0,1)$. For simplicity, we will denote
\begin{equation}
    A=\mathds{1}\left\lbrace p_{\mathbf{V}_{\mathcal{G}_1,\mathcal{G}_2}}(\mathbf{x};\lbrace \mathcal{G}_1,\mathcal{G}_2\rbrace;E)\leq\alpha\right\rbrace.
\end{equation}
Then, following a similar reasoning as in the proof of \cite[Theorem 1]{Gao} and the tower property of conditional expectation, we can write
\begin{align}
    \mathbb{P}_{H_0^{\lbrace \mathcal{G}_1,\mathcal{G}_2\rbrace}}\left( p_{\mathbf{V}_{\mathcal{G}_1,\mathcal{G}_2}}(\mathbf{x};\lbrace \mathcal{G}_1,\mathcal{G}_2\rbrace;E) \leq \alpha\,\biggr\vert\, M(\mathbf{X})\right) = \mathbb{E}_{H_0^{\lbrace \mathcal{G}_1,\mathcal{G}_2\rbrace}}\left(A\,\biggr\vert\, M(\mathbf{X})\right)\\
    = \mathbb{E}_{H_0^{\lbrace \mathcal{G}_1,\mathcal{G}_2\rbrace}}\left[\mathbb{E}_{H_0^{\lbrace \mathcal{G}_1,\mathcal{G}_2\rbrace}}\left(A\,\biggr\vert\, M(\mathbf{X})\cap E(\mathbf{X})\cap T(\mathbf{X})\right)\,\biggr\vert\,M(\mathbf{X})\right]\\
    = \mathbb{E}_{H_0^{\lbrace \mathcal{G}_1,\mathcal{G}_2\rbrace}}\left[\mathbb{E}_{H_0^{\lbrace \mathcal{G}_1,\mathcal{G}_2\rbrace}}\left(A\,\biggr\vert\, E(\mathbf{X})\cap T(\mathbf{X})\right)\,\biggr\vert\,M(\mathbf{X})\right] =  \mathbb{E}_{H_0^{\lbrace \mathcal{G}_1,\mathcal{G}_2\rbrace}}\left[\alpha\,\biggr\vert\,M(\mathbf{X})\right]=\alpha,
\end{align}
where the third equality follows from the fact $E(\mathbf{X})\subset M(\mathbf{X})$ and the last equality follows from \eqref{pvalue_alpha}.
\end{proof}

Note that, following \eqref{pvalue_V_M}, replacing $E_{\mG_1,\mG_2}(\mathbf{X})$ by $M_{\mG_1,\mG_2}(\mathbf{X})$ yields exactly Theorem~\ref{th:pvalue_V}. Once again, the efficient computation of \eqref{pvalue_V_E_dist} depends on the efficient computation of the truncation set $E_{\mG_1,\mG_2}(\mathbf{x}'_{{\mathbf{V}}_{\mathcal{G}_1,\mathcal{G}_2}}(\phi))$. As shown for the maximal conditioning event in Lemma~\ref{equivalence_sets}, it suffices to characterize the truncation set when the perturbed data set $\mathbf{x}'$ is defined with respect to any norm. 

\begin{lemma}\label{equivalence_sets_E} Let $\mathbf{x}$ be a realization of $\mathbf{X}$ and $\mG_1,\mG_2$ an arbitrary pair of clusters in $\mathcal{C}(\mathbf{x})$. Let $\mathbf{x}'$ denote the set \eqref{perturbed_2} defined in \cite[Equation (12)]{Gao}. Then,
\begin{equation}
    E_{\mG_1,\mG_2}\left(\mathbf{x}'_{{\mathbf{V}}_{\mG_1,\mG_2}}(\phi)\right) = \frac{\norm{\mathbf{x}^T\nu}_{\mathbf{V}_{\mathcal{G}_1,\mathcal{G}_2}}}{\norm{\mathbf{x}^T\nu}_2}\,E_{\mG_1,\mG_2}\left(\mathbf{x}'(\phi)\right).
\end{equation}
\end{lemma}
The proof of Lemma~\ref{equivalence_sets_E} is omitted as it is identical to that of Lemma~\ref{equivalence_sets}. In \cite{chen2022selective}, the authors characterized $E_{\mG_1,\mG_2}(\mathbf{x}'(\phi))$ when $E_{\mG_1,\mG_2}$ corresponds to all intermediate clustering assignments of a $k$-means algorithm. Therefore, we can benefit from their efficient computation procedure and compute the truncation set under model~\eqref{model} using Lemma~\ref{equivalence_sets_E}. As such, we are able to perform selective inference after $k$-means clustering for $\bU\in\mathcal{CS}(n)$ and arbitrary $\bSigma$. The estimation procedure presented in Section~\ref{sec:unknown_sigma} remains identical for this case.  
\section{Supplementary Figures}

\begin{figure}[ht!]
    \centering
    \includegraphics[width=0.6\textwidth]{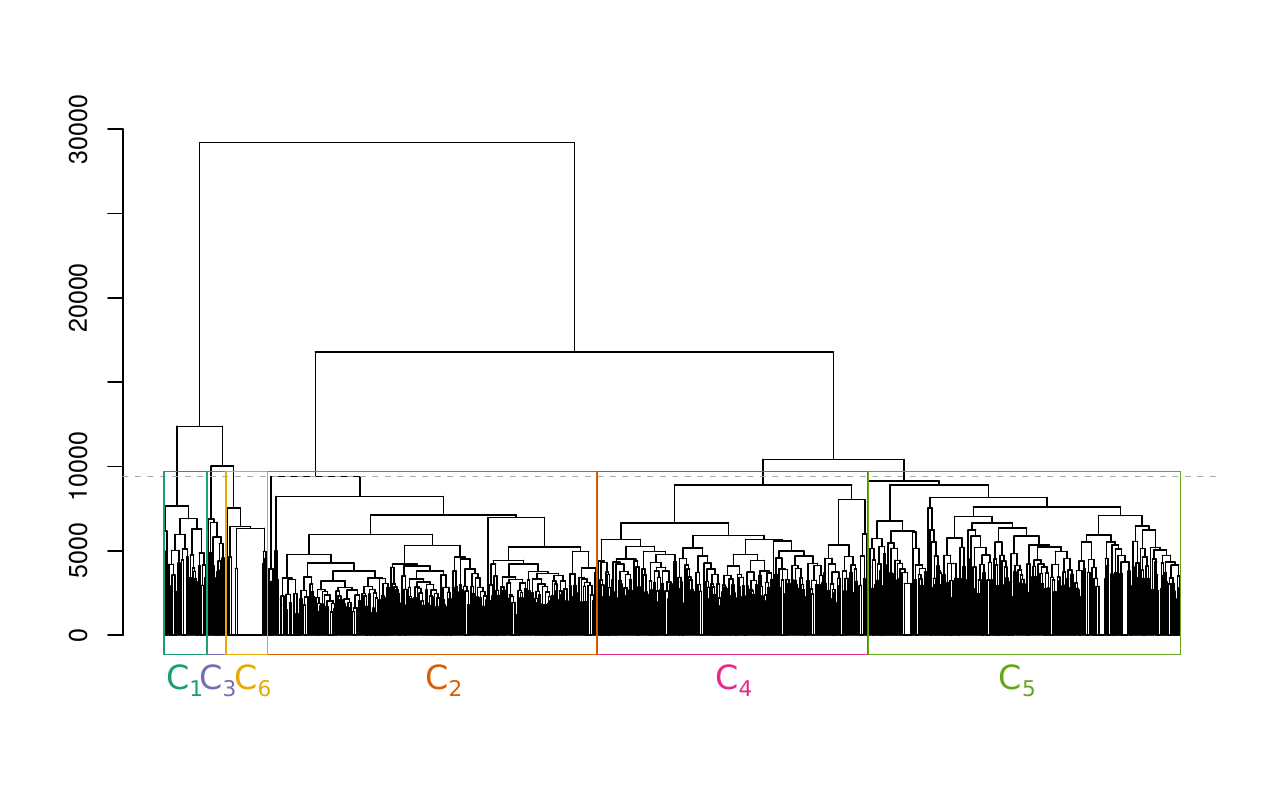}
    \caption{HAC dendrogram for the Hst5 protein ensemble data, with the six estimated clusters marked with colored rectangles.}
    \label{fig:dendrogram}
\end{figure}

\section{Additional numerical simulations}

In this section we describe the numerical experiments illustrated in Figures~\ref{fig:ignoredep} and \ref{fig:whitening} and present the results of the simulations described in Section~\ref{sec:numerical_experiments} when $\mathcal{C}$ is a $k$-means or a hierarchical agglomerative clustering (HAC) algorithm with centroid, single and complete linkages.

\subsection{Numerical simulation of Figure~\ref{fig:ignoredep}}\label{sec:ignoredep}

Figure~\ref{fig:ignoredep} simulates the null distribution of $p$-values defined in \cite{Gao} when data present dependence structures between observations and features, and $p$-values are computed assuming \eqref{model_gao}. We consider the general matrix normal model $\mathbf{X} \sim \mathcal{MN}_{n\times p}(\boldsymbol\mu,\mathbf{U},\mathbf{\Sigma})$, where we set $\boldsymbol\mu=\mathbf{0}_{n\times p}$, that is, the global null hypothesis. The matrices $\mathbf{U}\in\mathcal{M}_{n\times n}(\mathbb{R})$ and $\mathbf{\Sigma}\in\mathcal{M}_{p\times p}(\mathbb{R})$ encode the dependence structure between observations and features respectively. We choose $\mathbf{U}$ the covariance matrix of a stationary auto-regressive process of first order, AR(1), whose entries are given by $U_{ij}=\phi\rho^{\abs{i-j}}$, for $\phi>0$ and $\abs{\rho}<1$. The dependence between features is given by a Toeplitz matrix with entries $\Sigma_{ij}=1+1/\abs{i-j}$. We choose $\phi=1$, $\rho=0.2$ and generate $M=2000$ realizations of $\mathbf{X}$. For each one, we set the HAC algorithm with average linkage to choose three clusters and test for the difference in means of a pair of randomly selected clusters. The $p$-values are computed using the approach defined in \cite{Gao} assuming that $\mathbf{X}$ follows \eqref{model_gao} with $\sigma^2 = 2$, that is, neglecting the off-diagonal entries of the covariance matrices $\mathbf{U}$ and $\mathbf{\Sigma}$.

\subsection{Numerical simulation of Figure~\ref{fig:whitening}}\label{sec:whitening}

Figure~\ref{fig:whitening} illustrates the effect of whitening matrix normal data with dependent observations and features and performing post-clustering inference assuming \eqref{model_gao} afterwards. Data were first simulated from the general model~\eqref{model} with $n=100$, $p=2$. We set $\mathbf{U}$ as the covariance matrix of a AR(1) process, that is, $U_{ij}=\phi\rho^{\abs{i-j}}$ for $\phi>0$ and $\abs{\rho}<1$. We chose $\phi=1$ and $\rho=0.2$ The dependence between features was encoded by a Toeplitz matrix $\mathbf{\Sigma}$ with entries $\Sigma_{ij}=1+1/\abs{i-j}$. The mean matrix $\boldsymbol{\mu}$ divided the observations into three clusters and its entries were given by:
\begin{equation*}
    \mu_{i}=
      \begin{cases} 
      \left(-5,0,\ldots,0\right)& \mathrm{if }\,\,i \leq \lfloor \frac{n}{3}\rfloor,\\
      \left(0,\ldots,0,5\sqrt{3}\right) & \mathrm{if }\,\,\lfloor \frac{n}{3}\rfloor<i\leq \lfloor \frac{2n}{3}\rfloor,\\
      \left(5,0,\ldots,0\right) & \mathrm{otherwise},
   \end{cases}
     \quad\forall\,i\in[n].
\end{equation*}
The sample drawn from this model is presented in Figure~\ref{fig:whitening}(a). Its observations are classified into three groups using the $k$-means algorithms and compared using the $p$-values \eqref{pvalue_V} presented in this work, that account for the dependence structures $\mathbf{U}$ and $\mathbf{\Sigma}$. In panels (b,c), data is whitened by taking the transformation $(\mathbf{\Sigma}\otimes\mathbf{U})^{-\frac{1}{2}}\mathrm{vec}(\mathbf{X})$ and de-vectorizing the resulting random vector into a $n\times p$ matrix. Then, observations are classified into three groups using $k$-means (b) and HAC with average linkage (c) algorithms and the differences between cluster means are tested using the approaches proposed in \cite{chen2022selective} (b) and \cite{Gao} (c), that assume model \eqref{model_gao}.

\subsection[Numerical analysis of (p-Gamma)]{Numerical analysis of~\eqref{pvalue_gamma}}\label{app:simulations_gamma}

In this Section, we simulate the distribution of~\eqref{pvalue_gamma} under a global null hypothesis, that is, setting $\boldsymbol\mu=\mathbf{0}_{n\times p}$. Following Proposition~\ref{prop:gamma_chi}, the quantity~\eqref{pvalue_gamma} has the closed form~\eqref{pvalue_gamma_chi1}, allowing its implementation in practice. We follow the same strategy as in Section~\ref{sec:ignoredep}, generating $M=2000$ realizations of $\bX\sim\mathcal{MN}_{n\times p}(\mathbf{0}_{n\times,p},\bU,\bSigma)$, setting the HAC algorithm to choose three clusters and computing~\eqref{pvalue_gamma_chi1} for a pair of randomly selected groups. We choose $\bSigma$ to be a diagonal matrix with entries $\Sigma_{ii}=1+1/i$, and repeat the simulation under the following three settings:
\begin{enumerate}
    \item[(D4)] $\mathbf{U}$ is a diagonal matrix with entries $U_{ii}=1+1/i$.
    \item[(D5)] $\mathbf{U}$ is the covariance matrix of an AR(1) model with $\sigma=1$ and $\rho = 0.1$.
    \item[(D6)] $\mathbf{U}$ is the covariance matrix of an AR(2) model with $\sigma=1$, $\beta_1=0.4$ and $\beta_2=0.1$.
\end{enumerate}
Note that the truncation set in~\eqref{pvalue_gamma_chi1} has slightly changed with respect to~\eqref{set_S_hat}, due to the relaxation of the direction equality in~\eqref{pvalue_gamma}, that now includes the event $\lbrace \mathrm{dir}(\bX^T\nu)=-\mathrm{dir}(\bx^T\nu)\rbrace$. As shown in Proposition~\ref{prop:gamma_chi}, this yields a broader truncation set~\eqref{S_gamma} including also perturbations in the sense of $-\bx^T\nu$. Adapting the efficient characterization of~\eqref{set_S_hat} to this setting is not straightforward. However, this is immediate under a Monte Carlo computation of~\eqref{pvalue_gamma_chi1}, as we only need to replace $\mathcal{C}(\mathbf{x}'(\omega_i))$ by $\mathcal{C}(\mathbf{x}'(\pm\omega_i))$ in~\eqref{monte_carlo_pv}. As this is sufficient for the purpose of this analysis, we limit this experience to HAC clustering with complete linkage. Results, showing that selective type I error is not controlled in any of the previous settings, are presented in Figure~\ref{fig:gamma-sim}.

\begin{figure}[t]
    \centering
    \includegraphics[width=0.95\textwidth]{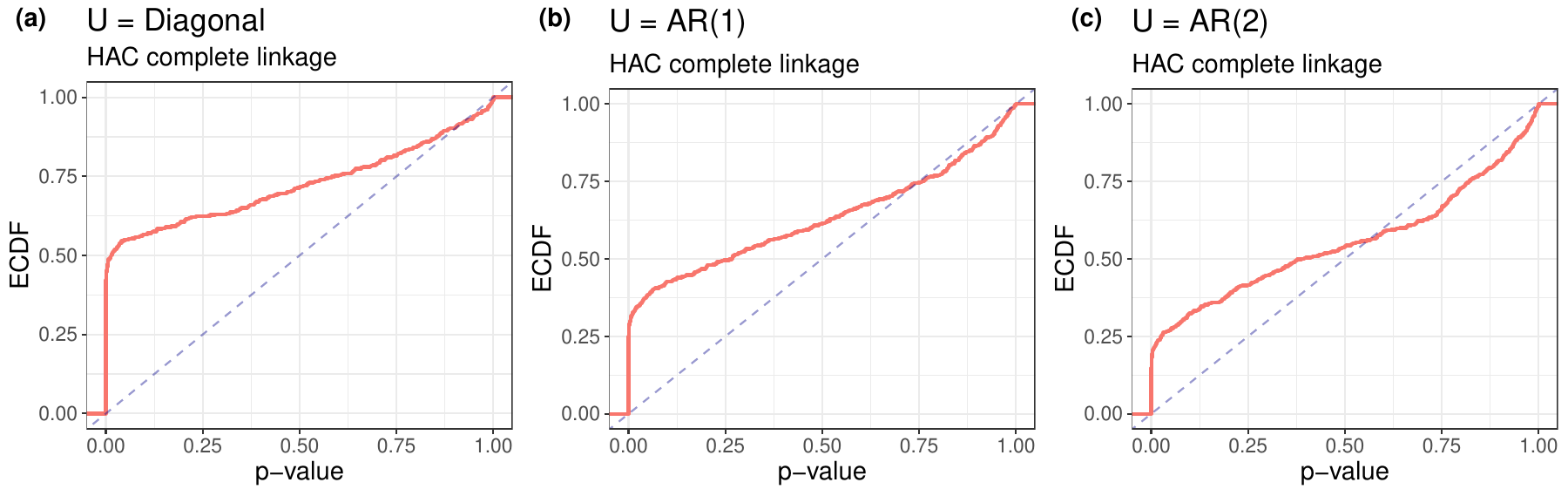}
    \caption{Empirical cumulative distribution functions (ECDF) of quantities \eqref{pvalue_gamma_chi1} with $\mathcal{C}$ being a hierarchical agglomerative clustering algorithm (HAC) with complete linkage. The ECDF were computed from $M=2000$ realizations of \eqref{model} under the three dependence settings (D4), (D5) and (D6) with $\boldsymbol{\mu}=\mathbf{0}_{n\times p}$, $n=20$ and $p=5$.}
    \label{fig:gamma-sim}
\end{figure}

\subsection{Additional numerical simulations of Section~\ref{sec:numerical_experiments}}\label{sec:extra_sim}

In this section, we present the counterparts of Figures~\ref{fig:average_global_h0}, \ref{fig:sim_overest}, \ref{fig:non_CS_U_average}, \ref{fig:non_CS_U_est_average}, \ref{fig:non_ad_U_average} and \ref{fig:ignoredep_average} for $k$-means and HAC with centroid, single and complete linkage. In Figures~\ref{fig:non_admissible_U_extra} and \ref{fig:ignoredep_others}, the simulation for $k$-means was performed for $\delta\in\lbrace 6,8,10\rbrace$, as the proportion of samples for which the null hypothesis held was very low for $\delta=4$.

\begin{figure}[ht!]
    \centering
    \includegraphics[width=0.95\textwidth]{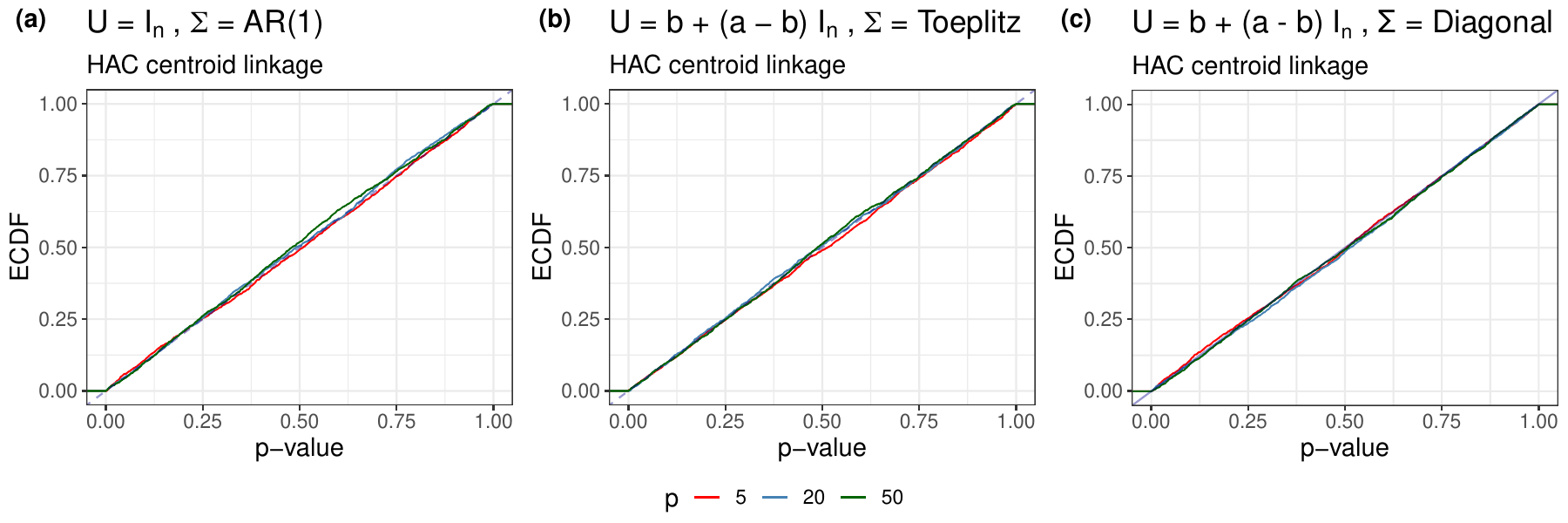}
    \includegraphics[width=0.95\textwidth]{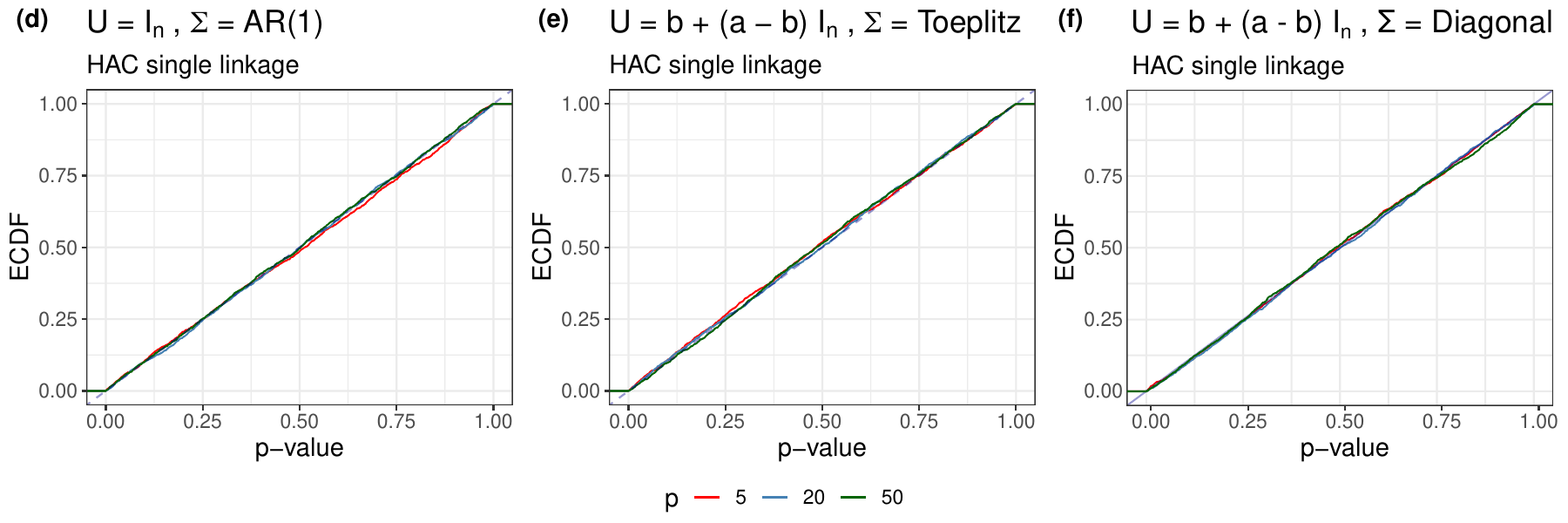}
    \includegraphics[width=0.95\textwidth]{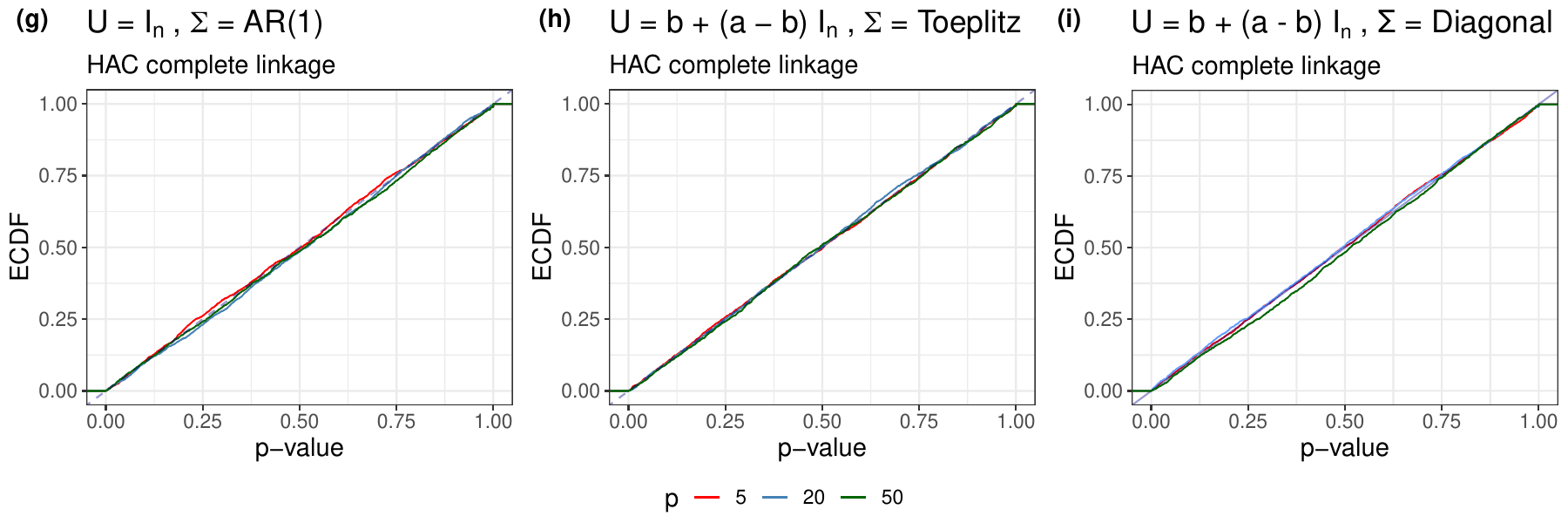}
    \includegraphics[width=0.95\textwidth]{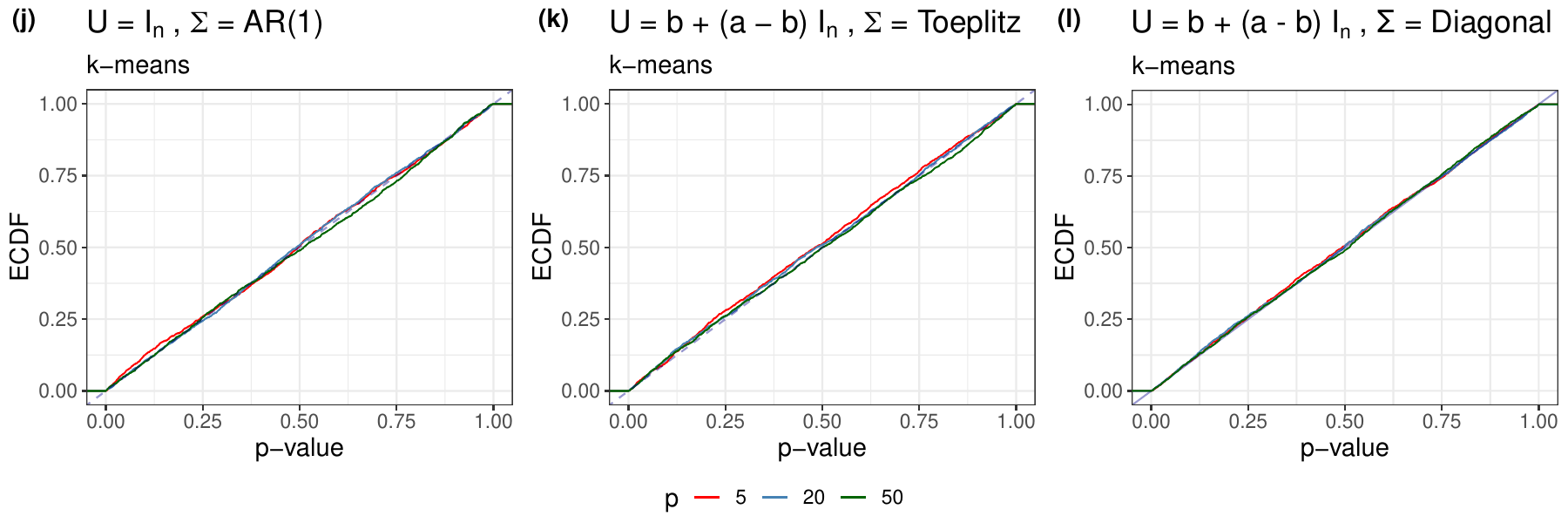}
    \caption{Empirical cumulative distribution functions (ECDF) of $p$-values \eqref{pvalue_V} with $\mathcal{C}$ being a hierarchical agglomerative clustering algorithm (HAC) with centroid (a-c), single (d-f) and complete (g-i) linkage and a $k$-means algorithm (j-l). The ECDF were computed from $M=2000$ realizations of \eqref{model} under the three dependence settings $(D1)$, $(D2)$ and $(D3)$ with $\boldsymbol{\mu}=\mathbf{0}_{n\times p}$, $n=100$ and $p\in\lbrace 5,20,50\rbrace$.}
    \label{fig:global_others}
\end{figure}

\begin{figure}[ht!]
    \centering
    \includegraphics[width=0.95\textwidth]{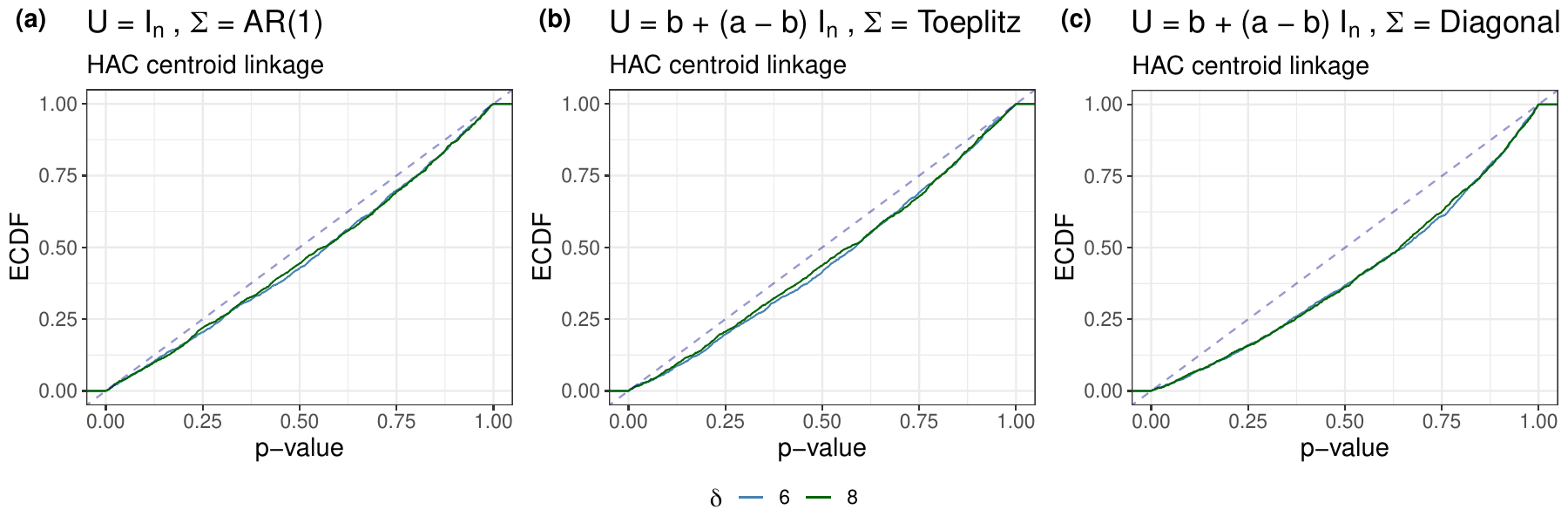}
    \includegraphics[width=0.95\textwidth]{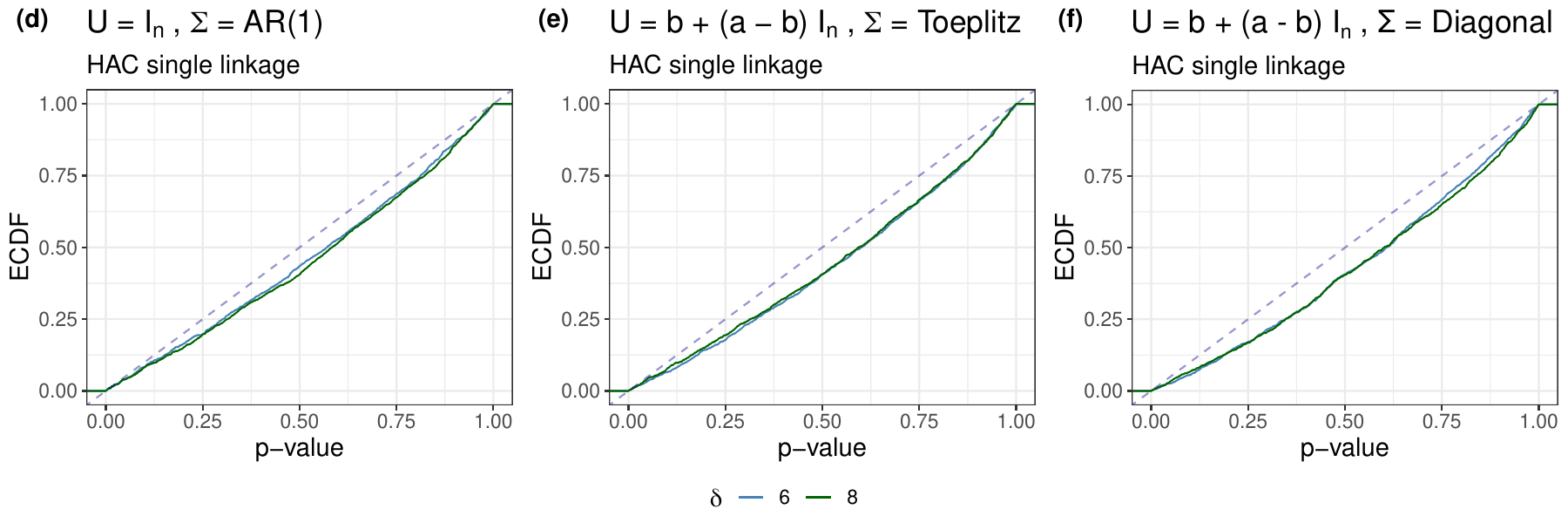}
    \includegraphics[width=0.95\textwidth]{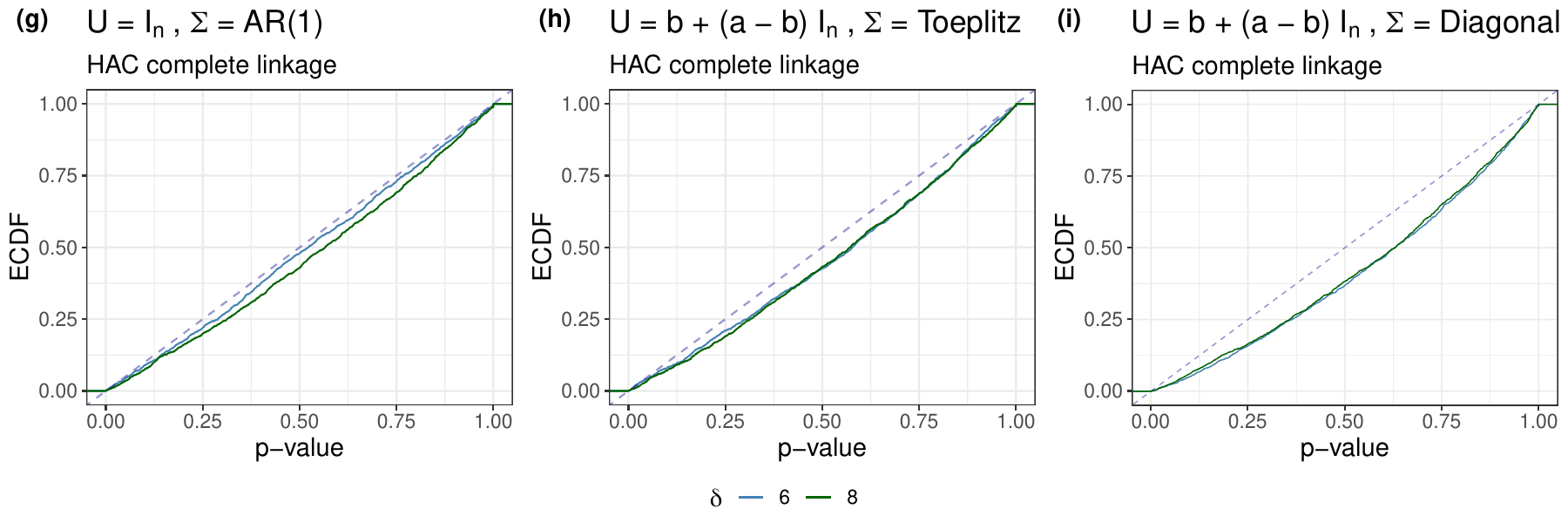}
    \includegraphics[width=0.95\textwidth]{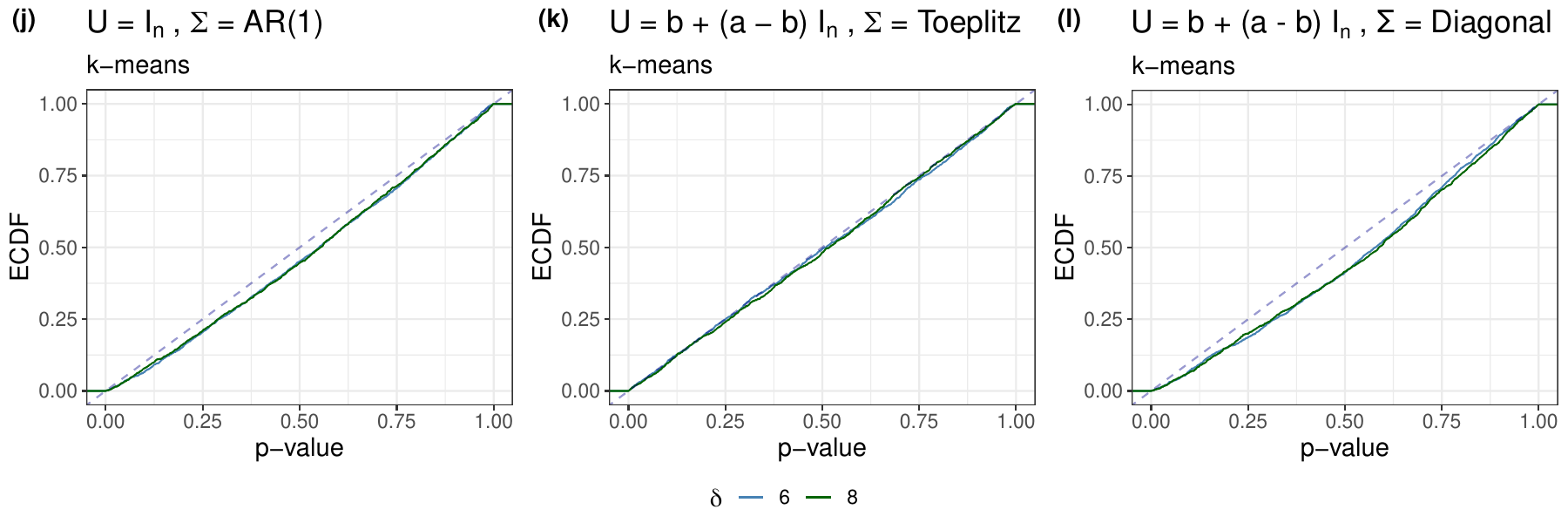}
    \caption{Empirical cumulative distribution functions (ECDF) of $p$-values \eqref{hat_pvalue} with $\mathcal{C}$ being a HAC algorithm with centroid (a-c), single (d-f) and complete (g-i) linkage and a $k$-means algorithm (j-l). The ECDF were computed from $M=5000$ realizations of \eqref{model} under the three dependence settings $(D1)$, $(D2)$ and $(D3)$ with $n=100$, $p=5$ and $\boldsymbol\mu$ given by \eqref{mu_est}. Only samples for which the null hypothesis held were kept, as described in Section~\ref{sec:sigma_est}.}
    \label{fig:est_others}
\end{figure}

\begin{figure}[ht!]
    \centering
    \includegraphics[width=0.95\textwidth]{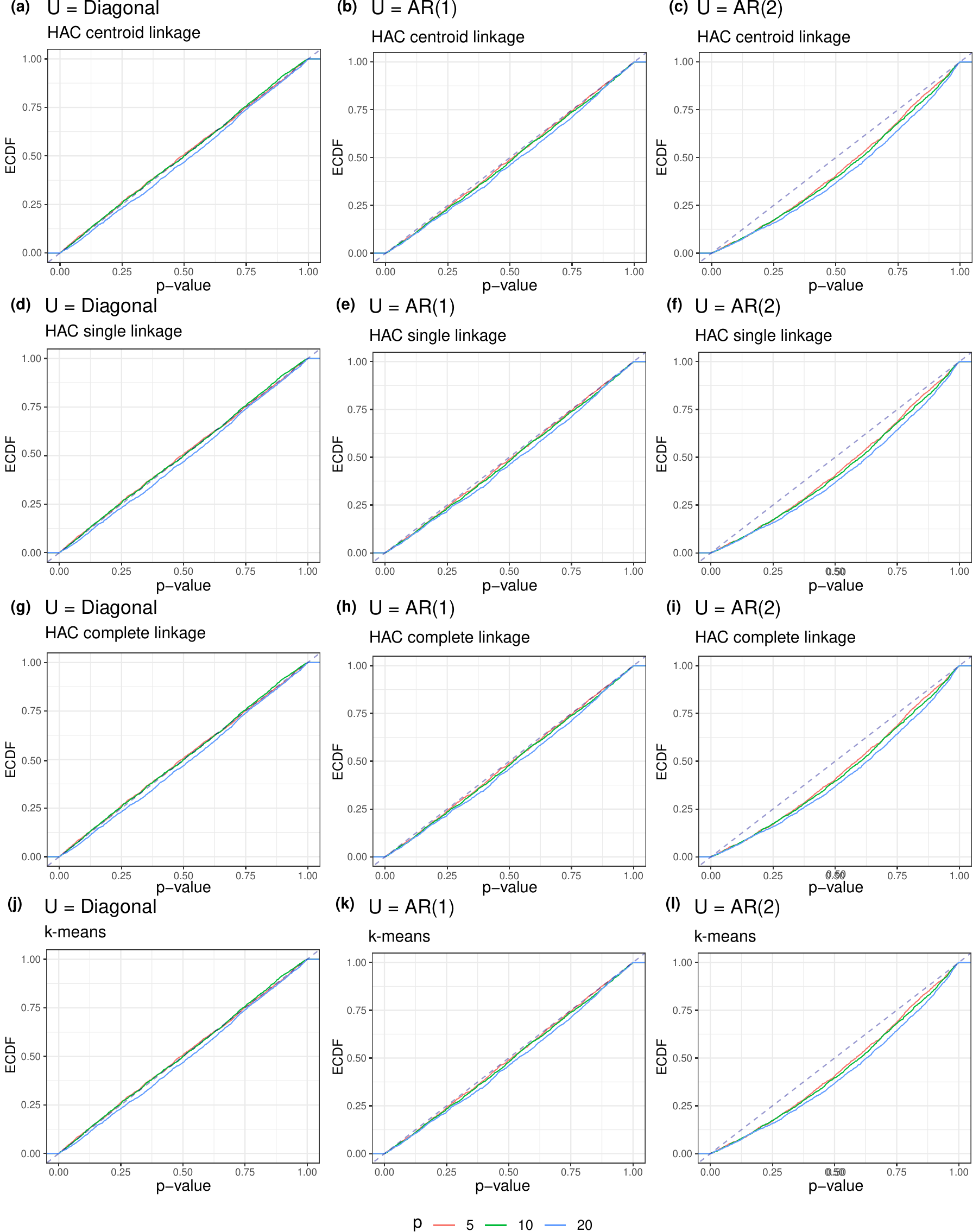}
    \caption{Empirical cumulative distribution functions (ECDF) of $p$-values \eqref{pvalue_V} with $\mathcal{C}$ being a hierarchical agglomerative clustering algorithm (HAC) with centroid (a-c), single (d-f) and complete (g-i) linkage and a $k$-means algorithm (j-l). The ECDF were computed from $M=2000$ realizations of \eqref{model} under the three dependence settings $(D4)$, $(D5)$ and $(D6)$ with $\boldsymbol{\mu}=\mathbf{0}_{n\times p}$, $n=100$ and $p\in\lbrace 5,20,50\rbrace$.}
    \label{fig:nonCS_global_others}
\end{figure}

\begin{figure}[ht!]
    \centering
    \includegraphics[width=0.95\textwidth]{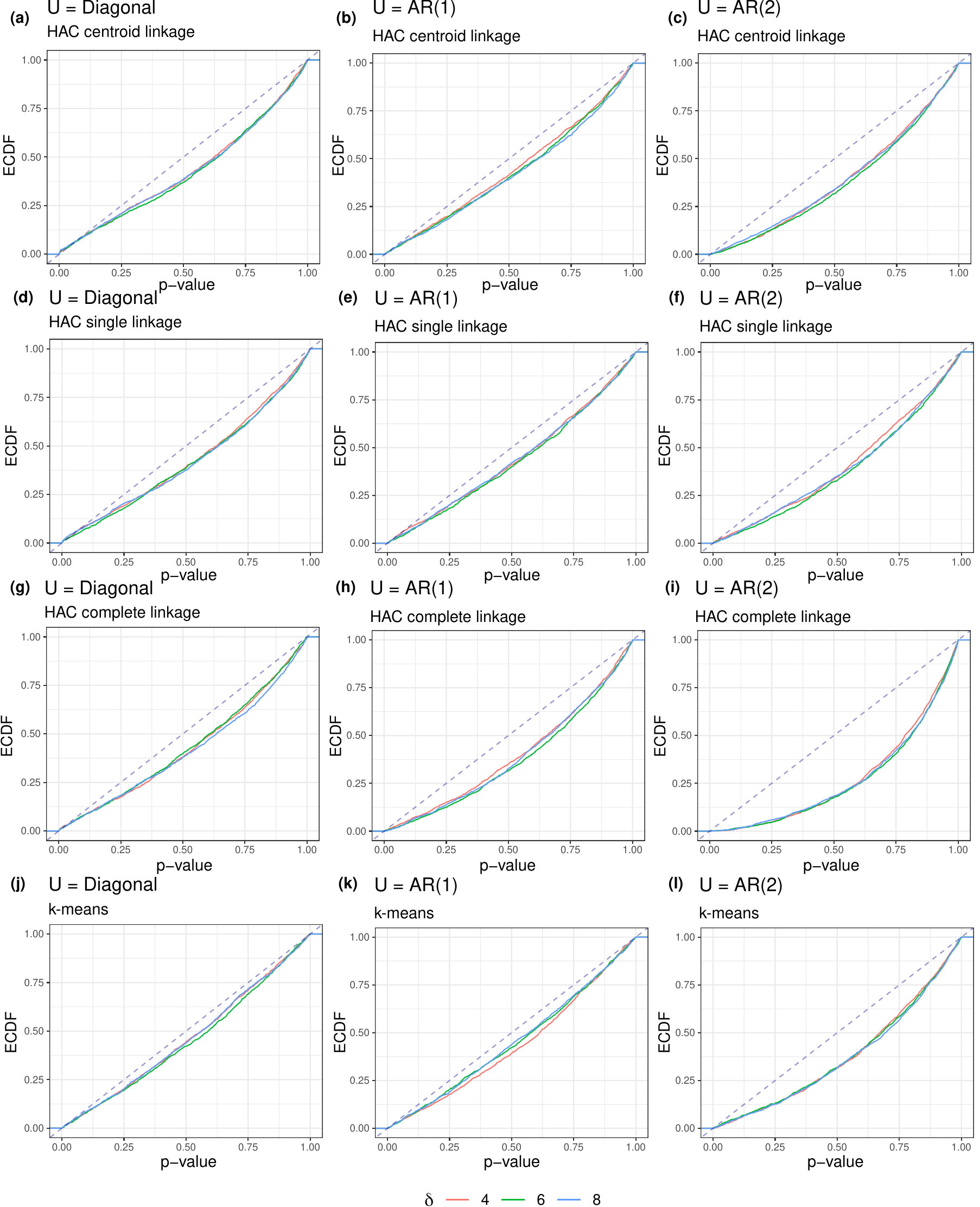}
    \caption{Empirical cumulative distribution functions (ECDF) of $p$-values \eqref{hat_pvalue} with $\mathcal{C}$ being a HAC algorithm with centroid (a-c), single (d-f) and complete (g-i) linkage and a $k$-means algorithm (j-l). The ECDF were computed from $M=5000$ realizations of \eqref{model} under the three dependence settings $(D4)$, $(D5)$ and $(D6)$ with $n=100$, $p=5$ and $\boldsymbol\mu$ given by \eqref{mu_est}. Only samples for which the null hypothesis held were kept, as described in Section~\ref{sec:sigma_est}.}
    \label{fig:nonCS_overest_others}
\end{figure}

\begin{figure}[ht!]
    \centering
    \includegraphics[width=0.85\textwidth]{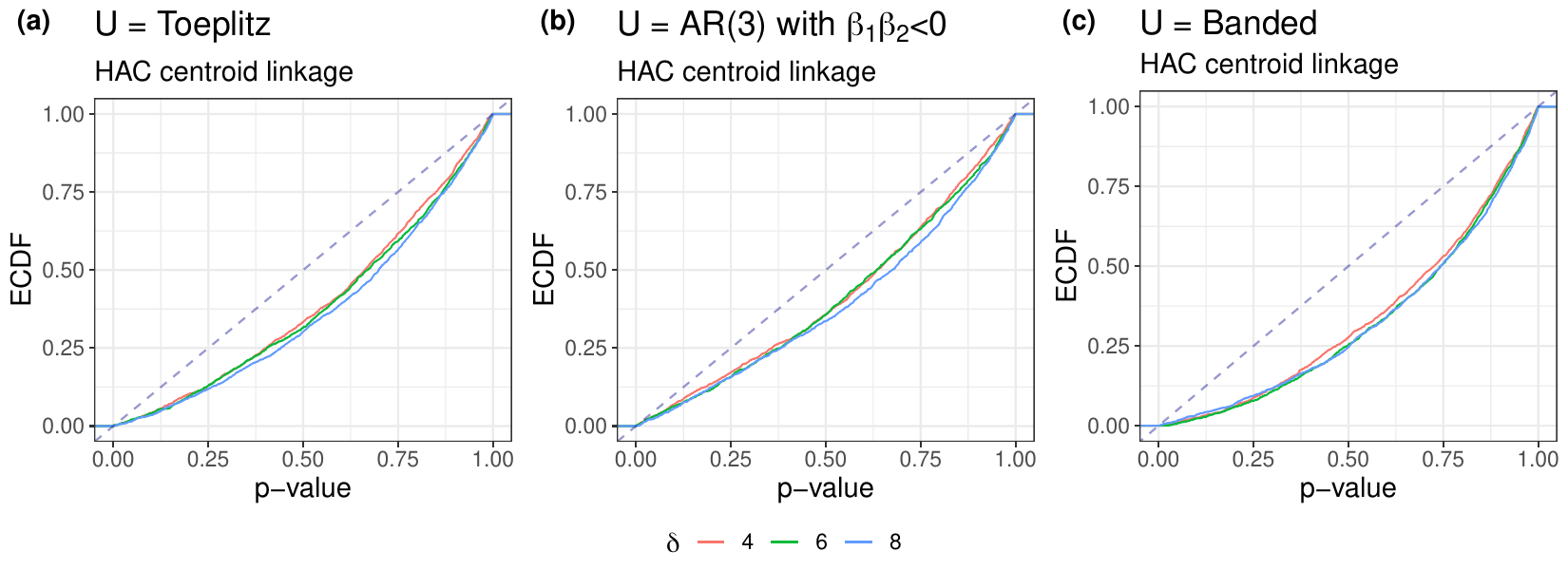}
    \includegraphics[width=0.85\textwidth]{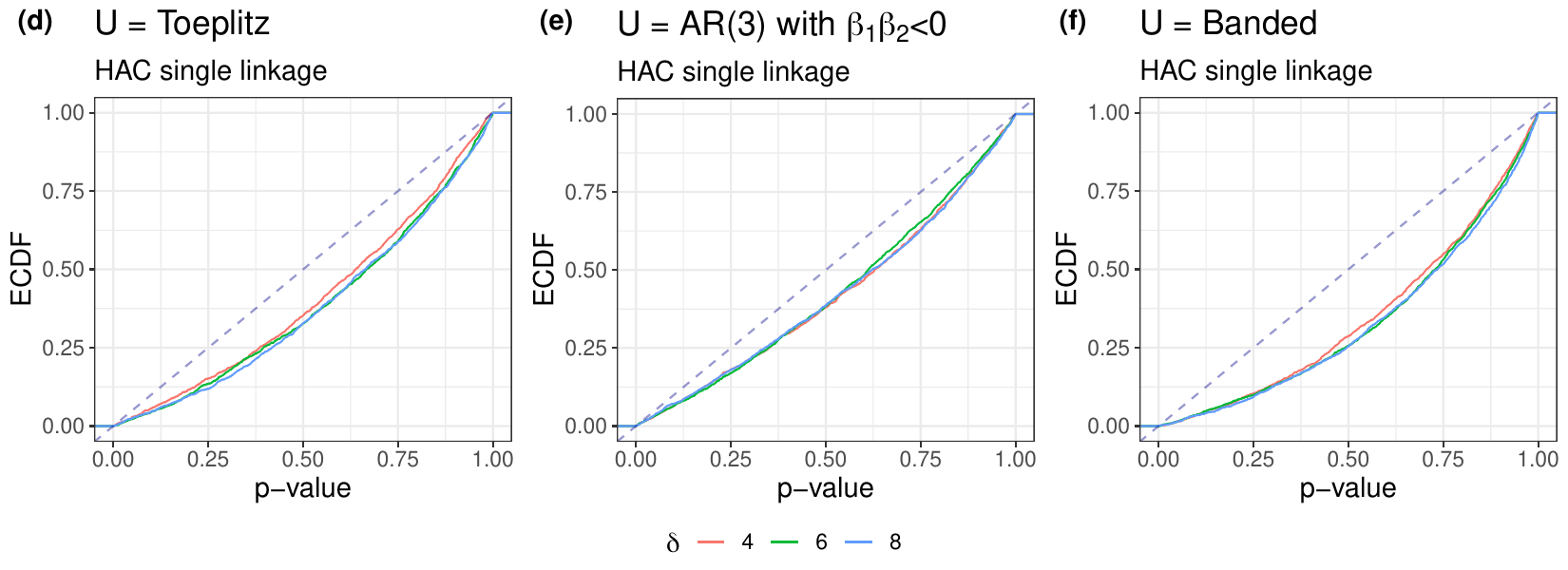}
    \includegraphics[width=0.85\textwidth]{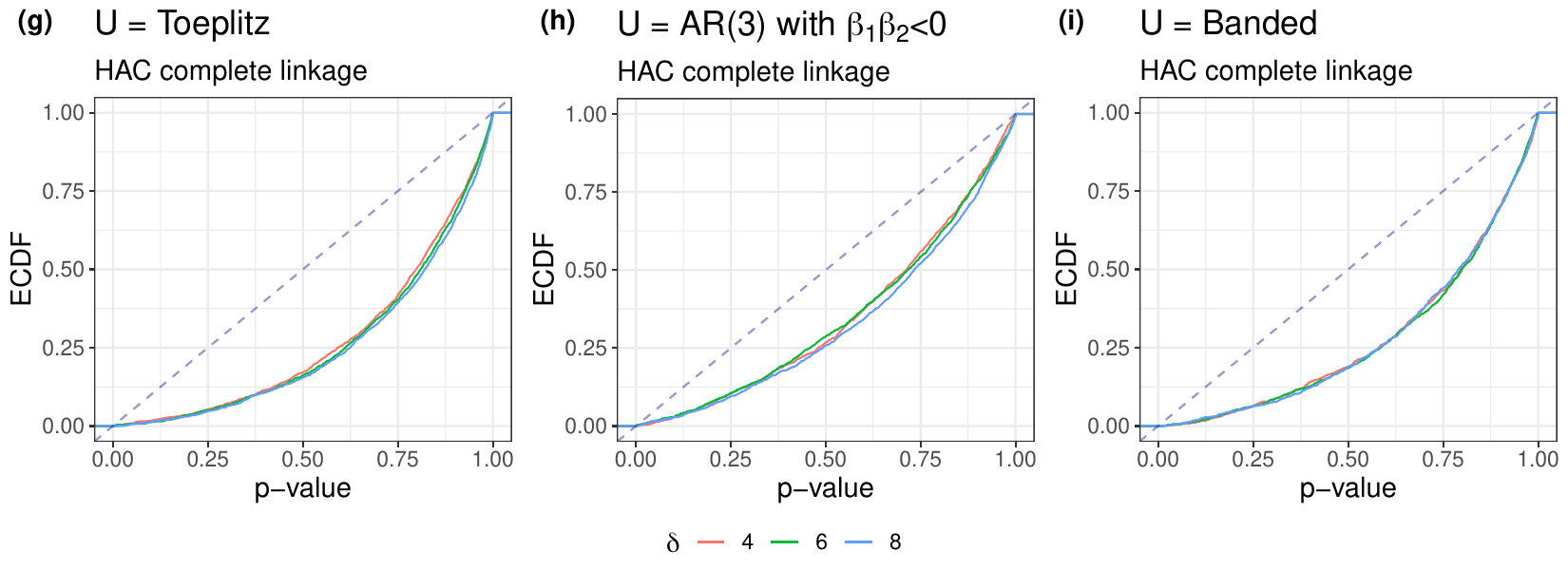}
    \includegraphics[width=0.85\textwidth]{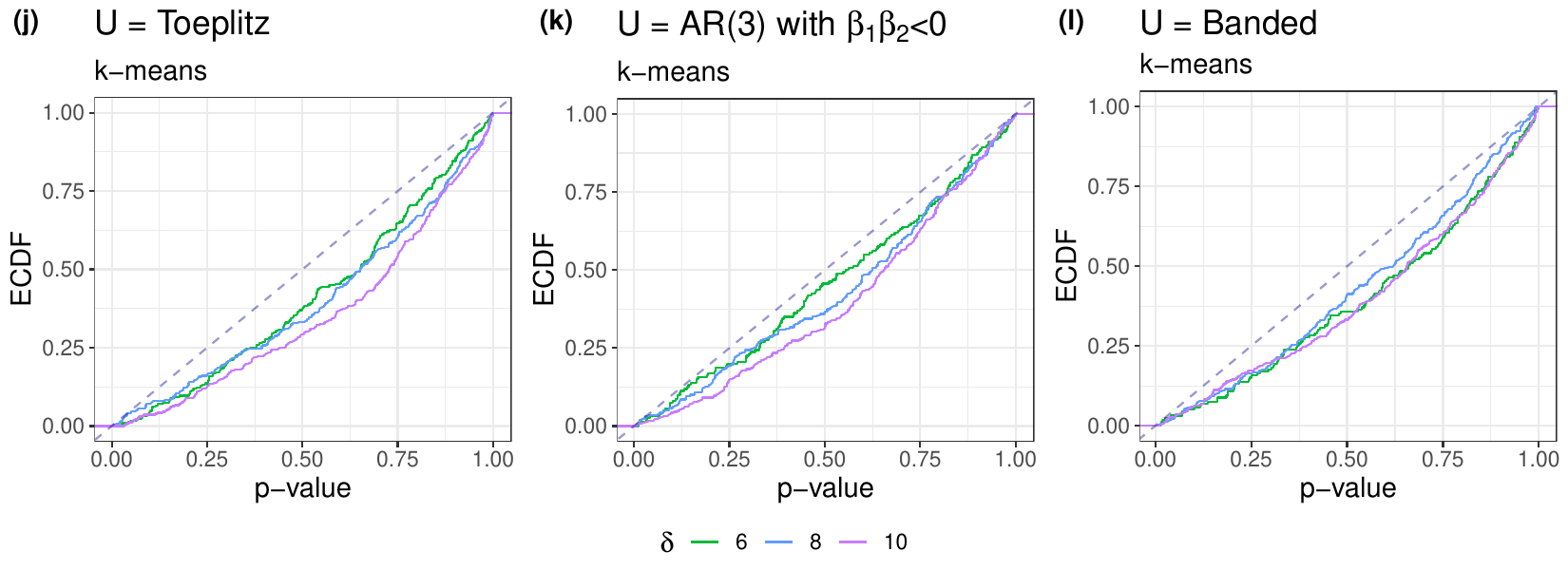}
    \caption{Empirical cumulative distribution functions (ECDF) of $p$-values \eqref{hat_pvalue} with $\mathcal{C}$ being a HAC algorithm with centroid (a-c), single (d-f) and complete (g-i) linkage and a $k$-means algorithm (j-l). The ECDF were computed from $M=5000$ realizations of \eqref{model} under the three dependence settings $(D7)$, $(D8)$ and $(D9)$ with $n=50$, $p=5$ and $\boldsymbol\mu$ given by \eqref{mu_est}. Only samples for which the null hypothesis held were kept, as described in Section~\ref{sec:no_ad_U}.}
    \label{fig:non_admissible_U_extra}
\end{figure}

\begin{figure}[ht!]
    \centering
    \includegraphics[width=0.85\textwidth]{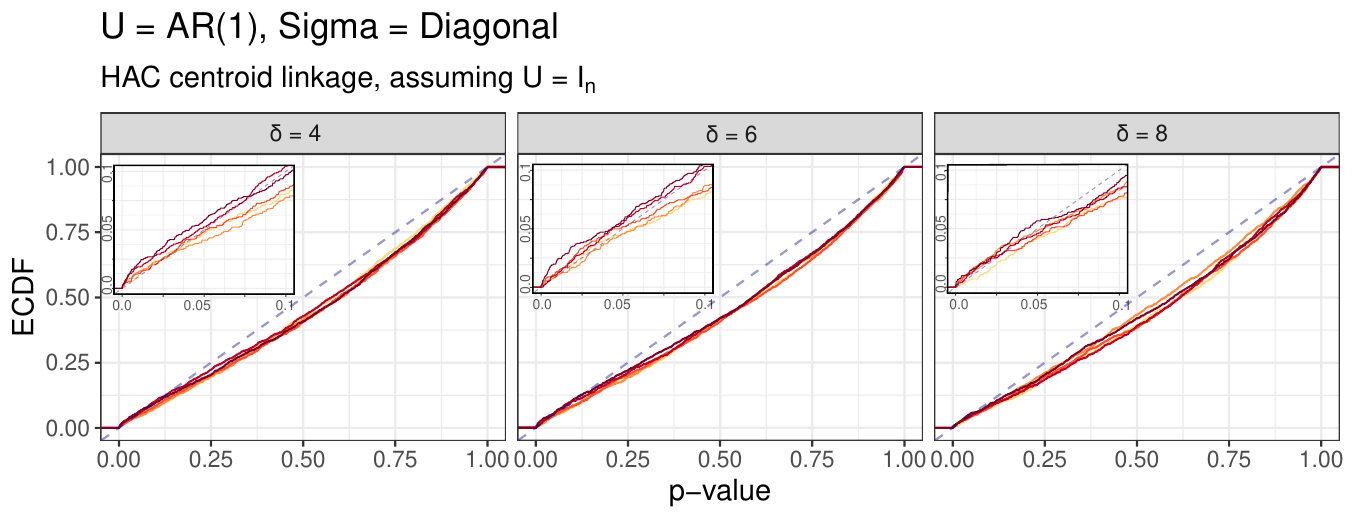}
    \includegraphics[width=0.85\textwidth]{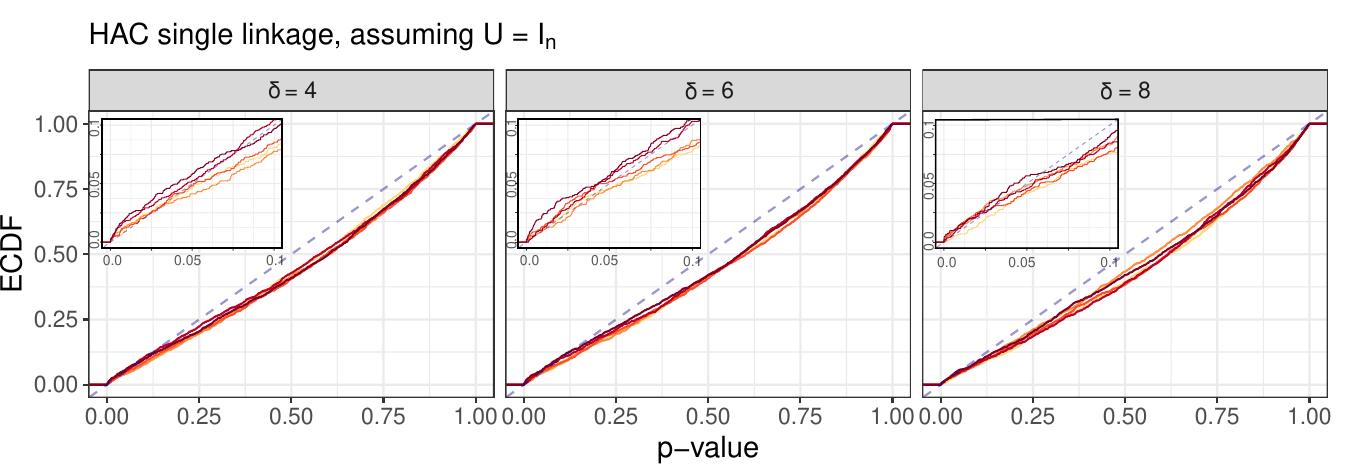}
    \includegraphics[width=0.85\textwidth]{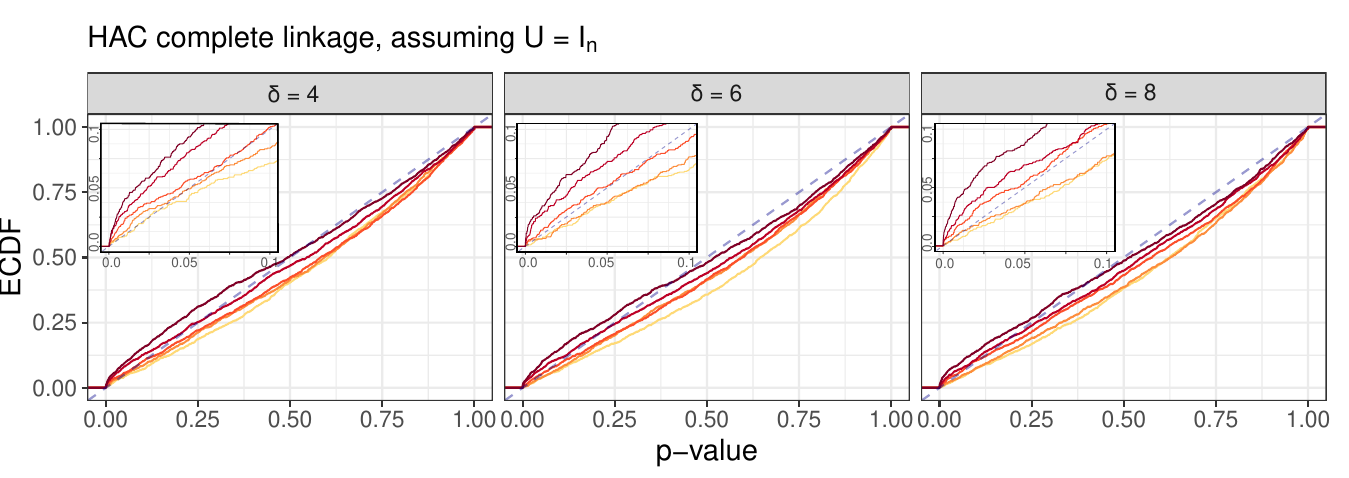}
   \includegraphics[width=0.85\textwidth]{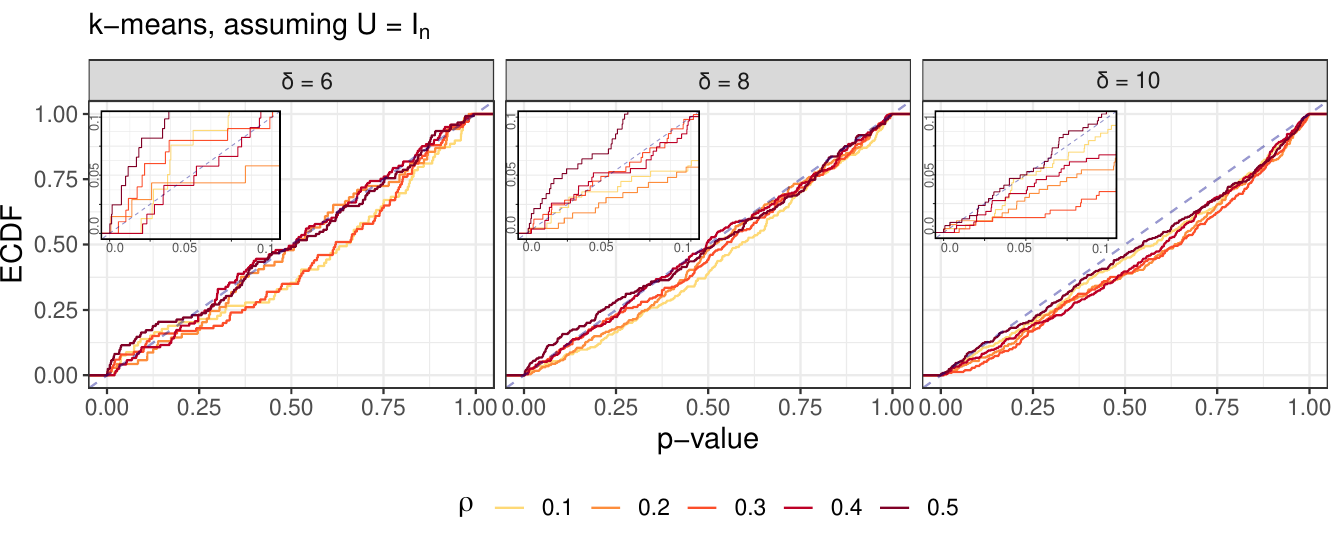}
    \caption{Empirical cumulative distribution functions (ECDF) of $p$-values \eqref{hat_pvalue} with $\mathcal{C}$ being a HAC algorithm with centroid, single and complete linkage and a $k$-means algorithm. The ECDF were computed from $M=5000$ realizations of \eqref{model} as described in Section~\ref{sec:ignoredep_U} with $n=50$, $p=5$ and $\boldsymbol{\mu}$ given by \eqref{mu_est} with $\delta\in\lbrace 4,6,8\rbrace$ for HAC and $\delta\in\lbrace 6,8,10\rbrace$ for $k$-means. Only samples for which the null hypothesis held were kept, as described in Section~\ref{sec:ignoredep_U}.}
    \label{fig:ignoredep_others}
\end{figure}

\clearpage
\bibliographystyle{abbrv} 
\bibliography{PCI_dependence.bib}





\end{document}